\documentclass[pra,showpacs,graphics,twocolumn,floatfix,mathbbm,a4paper,nofootinbib]{revtex4-1}
\usepackage{amsthm}
\usepackage{amsmath}
\usepackage{latexsym}
\usepackage{amsfonts}
\usepackage{amssymb}
\usepackage{color}
\usepackage{bbm,dsfont}
\usepackage{graphicx}
\usepackage{subfigure}
\usepackage{mathbbol}
\usepackage{hyperref}
\usepackage{enumerate}
\usepackage{MnSymbol}


\newtheorem{proposition}{Proposition}

\newtheorem{corollary}{Corollary}

\theoremstyle{definition}

\newtheorem{example}{Example}
\newtheorem{definition}{Definition}


\newcommand{\real}{\mathbb R} 
\newcommand{\integer}{\mathbb Z} 
\newcommand{\half}{\tfrac{1}{2}} 

\newcommand{\hi}{\mathcal{H}} 
\newcommand{\no}[1]{\left\|#1\right\|} 
\newcommand{\tr}[1]{\mathrm{tr}\left[#1\right]} 
\newcommand{\id}{\mathbbm{1}} 


\newcommand{\ve}{\vec{e}} 
\newcommand{\vsigma}{\vec{\sigma}} 


\newcommand{\A}{\mathsf{A}}
\newcommand{\B}{\mathsf{B}}
\newcommand{\C}{\mathsf{C}}
\newcommand{\D}{\mathsf{D}}
\newcommand{\E}{\mathsf{E}}
\newcommand{\F}{\mathsf{F}}
\newcommand{\G}{\mathsf{G}}
\newcommand{\X}{\mathsf{X}}
\newcommand{\Y}{\mathsf{Y}}
\newcommand{\Z}{\mathsf{Z}}

\newcommand{\T}{\mathsf{T}}
\newcommand{\N}{\mathsf{N}}






\newcommand{\quant}{\mathcal{Q}} 

\newcommand{\state}{\mathcal{S}} 
\newcommand{\effect}{\mathcal{E}} 
\newcommand{\obs}{\mathcal{O}} 
\newcommand{\eff}{\mathcal{O}^{\textrm{eff}}} 
\newcommand{\noise}{\mathcal{N}} 
\newcommand{\trivial}{\mathcal{T}} 
\newcommand{\simu}[1]{\mathfrak{sim}(#1)} 
\newcommand{\smin}[1]{\mathfrak{s}_{\min}(#1)} 


\begin{document}

\title[]{Simulability of observables in general probabilistic theories}

\author{Sergey N. Filippov}
\email{sergey.filippov@phystech.edu}
\address{Institute of Physics and Technology, Russian Academy of Sciences, Moscow 117218, Russia}
\address{Moscow Institute of Physics and Technology, Dolgoprudny, Moscow Region 141700, Russia}

\author{Teiko Heinosaari}
\email{teiko.heinosaari@utu.fi}
\address{QTF Centre of Excellence, Department of Physics and Astronomy, University of Turku, Turku 20014, Finland}

\author{Leevi Lepp\"{a}j\"{a}rvi}
\email{leille@utu.fi}
\address{QTF Centre of Excellence, Department of Physics and Astronomy, University of Turku, Turku 20014, Finland}

\pacs{03.65.Ta, 03.65.Aa}


\begin{abstract}
The existence of incompatibility is one of the most fundamental
features of quantum theory, and can be found at the core of many
of the theory's distinguishing features, such as Bell inequality
violations and the no-broadcasting theorem. A scheme for
obtaining new observables from existing ones via classical
operations, the so-called simulation of observables, has led to an
extension of the notion of compatibility for measurements. We
consider the simulation of observables within the operational
framework of general probabilistic theories and introduce the
concept of simulation irreducibility. While a simulation
irreducible observable can only be simulated by itself, we show
that any observable can be simulated by simulation irreducible
observables, which in the quantum case correspond to extreme rank-1 positive-operator-valued measures. We also consider cases where the set of
simulators is restricted in one of two ways: in terms of either
the number of simulating observables or their number of outcomes.
The former is seen to be closely connected to compatibility and
$k$--compatibility, whereas the latter leads to a partial
characterization for dichotomic observables. In addition to the
quantum case, we further demonstrate these concepts in state
spaces described by regular polygons.
\end{abstract}

\maketitle

\section{Introduction}

Recently, the concept of measurement simulability of quantum observables (modeled as positive-operator-valued measures) has been introduced and studied \cite{GuBaCuAc17,OsGuWiAc17}. 
It can be seen as a natural generalization of the concept of compatibility, and it allows one to study how one can implement a set of target observables from some chosen set of observables.
This kind of concept naturally arises in the studies of local hidden variable models \cite{HiQuVeNaBr17} as well as proposals to test fundamentally binary or $n$-ary theories \cite{KlCa16,KlVeCa17}.

The framework of general probabilistic theories (GPTs) is natural platform to investigate foundational aspects of quantum theory. Features of quantum theory, such as incompatibility and nonlocality, can be explored in a wider class of theories, allowing one to compare theories to one another and quantify how restricted these features are in different theories.
GPTs are based on operational notions of states and measurements so that, for example, an observable is any affine function that maps states into probability distributions. This is the exact analog of positive-operator-valued measures (POVMs) in the case of quantum theory.
The incompatibility of observables in GPTs has been recently studied in several works \cite{BuHeScSt13,StBu14,Banik15,Plavala16,FiHeLe17,JePl17}.
The purpose of the present paper is to formulate measurement simulability in the framework of GPTs and to further investigate the properties of this concept.

The difficulty or complexity of simulating a given collection of observables can be quantified by studying two types of limitations on the set of simulator observables.
First, we can look for the minimal set of simulator observables that can produce the target observables.
From this point of view, a target set is compatible if and only if it can be obtained with a single simulator observable.
Another quantification is obtained by allowing an arbitrary number of simulator observables but restricting them to have fewer outcomes than some threshold value.

We will demonstrate these two quantifications of simulability by comparing quantum theory to polygon theories \cite{JaGoBaBr11}.
It is interesting to recall that the so-called box world (i.e., square bit state space) \cite{PoRo94,GrMuCoDa10} possesses more incompatibility than any finite-dimensional quantum state space \cite{BuHeScSt13,HeScToZi14} if incompatibility is quantified as the global robustness under noise.
However, in both quantifications of simulability, the box world is closest to classical theory among all nonclassical theories.

The key concept in our investigation is \emph{simulation irreducibility}.
An observable has this property if it cannot be obtained from some essentially different simulator observables.
We present a general characterization of simulation irreducible observables and explicitly give them in several theories. In particular, we show that the set of all observables on state spaces described by regular polygons can be simulated by a finite number of trichotomic simulation irreducible observables with the only exception being the square bit state space, where simulation irreducible observables are dichotomic.

\section{Observables, postprocessing, and mixing} \label{sec:start}

\subsection{States, effects and observables} \label{sec:preliminaries}

We first recall some of the basic concepts of general probabilistic theories. The state space $\state$ is a compact convex subset of a finite-dimensional real vector space $\mathcal{V}$. The convexity arises from the probabilistic mixing of states so that for $p \in [0,1]$ and states $s_1,s_2 \in \state$ the convex sum $p s_1 + (1-p) s_2$ represents a state where we prepare the state $s_1 $ with probability $p $ and state $s_2$ with probability $1-p$.

An effect $e$ is given as a function $e: \state \to [0,1]$ on states such that
\begin{equation}\label{eq:affine}
e(ps_1 +(1-p)s_2) = p e(s_1) + (1-p) e(s_2).
\end{equation}
Then $e(s) \in [0,1]$ is interpreted as the probability that the measurement event that the effect $e$ represents happens when the system is in the state $s\in \state$. A functional $f: \state \to \real$ with property \eqref{eq:affine} is called affine on $\state$ and we denote by $F(\state)$ the set of affine functionals on $\state$. We can define a partial order in $F(\state)$ by denoting $e \leq f$ for $e,f \in F(\state)$ if $e(s) \leq f(s)$ for all $s \in \state$. The effect space can then be expressed as
\begin{equation}
\effect(\state) = \{ e \in F(\state) \, | \, o \leq e \leq u \},
\end{equation}
where $o$ and $u$ are the zero and unit effects respectively, i.e., $o(s) = 0$ and $u(s) =1$ for all $s \in \state$.

Sometimes it is useful consider the state space $\state$ as being embedded in an ordered vector space $\mathcal{A}$ such that $\state$ is a compact base for a generating positive cone $\mathcal{A}_+ = \{x \in \mathcal{A} \, | \, x \geq 0 \}$ \cite{CA70}. 
Hence, the state space can be expressed as
\begin{equation}
\state = \{ x \in \mathcal{A} \, | \, x \geq0, \ u(x) = 1\},
\end{equation}
i.e., as an intersection of the positive cone $\mathcal{A}_+$ and an affine hyperplane determined by (the extension of) the unit effect $u$ on $\mathcal{A}$. Furthermore, if $\dim({\rm aff}(\state))=d$, where ${\rm aff}(\state)$ denotes the affine span of $\state$, then we can take $\dim(\mathcal{A})=d+1$. It follows that, by adopting this approach, the effects can be expressed as linear functionals on $\mathcal{A}$ so that
\begin{equation}
\effect(\state) = \{ e \in \mathcal{A}^* \, | \, o \leq e \leq u \},
\end{equation}
where the partial order in the dual space $\mathcal{A}^*$ is the dual order
defined by the positive dual cone $\mathcal{A}^*_+ = \{ f \in
\mathcal{A}^* \, | \, f(x) \geq 0 \mathrm{\ for \ all\ } x \in
\mathcal{A}_+ \}$ of $\mathcal{A}_+$, and also $\dim(\mathcal{A}^*)=d+1$. In fact, $\effect(\state)$ is
then just the intersection of the positive dual cone
$\mathcal{A}^*_+$ and the set $u- \mathcal{A}^*_+$.

A nonzero effect $e$ is \emph{indecomposable} if a decomposition
$e=e_1+e_2$ is possible only when $e_1$ and $e_2$ are scalar
multiples of $e$; otherwise they are decomposable. It has been shown in
Ref. \cite{KiNuIm10} that in any GPT there exist indecomposable effects
and, further, any effect can be written as a finite sum of
indecomposable effects.  It is easy to see that the indecomposable
effects are exactly the ones laying on the extreme rays of the
cone $\mathcal{A}^*_+$.

Let $\state$ be a state space. An observable $\A$ with a finite number of outcomes is a map $\A: x \mapsto \A_x$ from a finite (outcome) set $X$ to
$\effect(\state)$ with the normalization $\sum_{x \in X}
\A_x(s) = 1$ for all $s \in \state$. The normalization condition, which is equivalent to the requirement that $\sum_{x \in X} \A_x = u$, guarantees that we detect with certainty one of the events corresponding to one of the effects $\A_x$ of the observable. We denote the set of all observables with outcome set $X$ by $\obs_X$ and the set of all observables with a finite number of outcomes on $\state$ by $\obs$.

An observable is called \emph{indecomposable} if all of its nonzero effects are indecomposable; otherwise it is \emph{decomposable}. From the decomposition of the unit effect into indecomposable effects, it follows that indecomposable observables do exist \cite{KiNuIm10}.

\begin{example}[\emph{Quantum theory}]
In finite-dimensional quantum theory the state space $\state_q$ is given by the set of positive trace-1 self-adjoint operators on a finite-dimensional Hilbert space $\hi$:
\begin{equation}
\state_q = \state(\hi) = \{ \varrho \in \mathcal{L}_s(\hi) \, | \, \varrho \geq O, \ \tr{\varrho}=1 \},
\end{equation}
where $\mathcal{L}_s(\hi)$ is the set of self-adjoint operators on $\hi$ and $O$ is the zero operator. The set of positive operators forms a generating positive cone in the vector space of self-adjoint operators $\mathcal{L}_s(\hi)$ with $\state(\hi)$ as its compact base. The effect space is given by the set of operators
\begin{equation}
 \effect(\hi) = \{ E \in \mathcal{L}_s(\hi) \, | \, O \leq E \leq \id \},
\end{equation}
where $\id$ is the identity operator, so that the one-to-one correspondence with the effect functionals in $\effect(\state_q)$ can be given by the equation $e(\varrho) = \tr{\varrho E}$. An observable $\A$ with a finite outcome set $X$ then corresponds to a POVM $A: x \mapsto A(x)$ such that $\sum_{x \in X} A(x) = \id$. An effect $E$ is indecomposable if and only if $E$ has rank equal to 1, or equivalently, $E$ is a scalar multiple of a one-dimensional projection \cite{KiNuIm10}.
\end{example}

\subsection{Postprocessing of observables}\label{sec:p-p}

A classical channel between outcome spaces $X$ and $Y$ is given by a (right) stochastic linear map $\nu: X \to Y$, i.e., map with matrix elements $\nu_{xy}$, $x \in X$, $y \in Y$ with $0 \leq \nu_{xy} \leq 1$ and $\sum_{x\in X} \nu_{xy} =1$. The matrix element $\nu_{xy}$ gives the transition probability that outcome $x$ is mapped into outcome $y$. In addition to being used as a transformation between outcome spaces, classical channels are most commonly used to describe noise.

For an observable $\A$ with an outcome set $X$ and a classical channel $\nu: X \to Y$ between $X$ and some other outcome space $Y$ we denote by
$\nu\circ\A$ a new observable defined as
\begin{align}
(\nu \circ \A)_y = \sum_{x \in X} \nu_{xy} \A_x
\end{align}
for all outcomes $y \in Y$.
Physically, the observable $\nu\circ\A$ can be implemented by first measuring $\A$ and then using the classical channel $\nu$ on each measurement outcome.

For two observables $\A$ and $\B$, we say that \emph{$\B$ is
a postprocessing of $\A$}, denoted by $\A\to\B$, if there exists a classical channel
$\nu$ such that $\B = \nu \circ \A$.
In the context of quantum observables, this relation was introduced in Ref. \cite{MaMu90a}.
We follow the terminology of Ref. \cite{BuDaKePeWe05} and say that an observable $\A$ is \emph{postprocessing clean} if, for any observable $\B$, the relation $\B\to\A$ implies that $\A \to \B$. We have the following characterization:

\begin{proposition}\label{prop:pp-clean}
An observable is postprocessing clean if and only if it is indecomposable.
\end{proposition}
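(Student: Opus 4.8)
The plan is to prove the two implications separately, using throughout the characterization recalled above that the indecomposable effects are exactly those on the extreme rays of the dual cone $\mathcal{A}^*_+$, together with its immediate consequence: if an indecomposable effect $e$ is written as a finite sum $e=\sum_i g_i$ of effects, then each $g_i$ is a nonnegative multiple of $e$ (induction on the two-term definition). I will also use the elementary fact that $\A\to\B$ via $\nu$ forces every $\B_y=\sum_x\nu_{xy}\A_x$ into the cone $\mathrm{cone}\{\A_x\}$ generated by the effects of $\A$, with the normalization $\sum_y\nu_{xy}=1$.

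For the direction \emph{indecomposable $\Rightarrow$ postprocessing clean}, suppose $\A$ is indecomposable and $\B\to\A$, say $\A_x=\sum_y\mu_{yx}\B_y$ with $\sum_x\mu_{yx}=1$. Since each nonzero $\A_x$ is indecomposable, the finite-sum remark forces every $\B_y$ with $\mu_{yx}>0$ to be proportional to $\A_x$; hence all nonzero effects of $\A$ and of $\B$ lie on the same finite family of extreme rays $R_1,\dots,R_k$ of $\mathcal{A}^*_+$. Writing $\A_x=a_x\hat r_i$ and $\B_y=b_y\hat r_i$ on each ray, and noting that $\mu_{yx}>0$ relates only effects on a common ray, I would first check that the total ray masses match, $m_i:=\sum_{x\in R_i}a_x=\sum_{y\in R_i}b_y$. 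Then setting $\nu_{xy}=b_y/m_i$ when $\A_x,\B_y$ both lie on $R_i$ and $\nu_{xy}=0$ otherwise yields a valid channel (each row sums to $1$) with $\sum_x\nu_{xy}\A_x=\B_y$, i.e.\ $\A\to\B$. This is a transportation-type construction carried out ray by ray.

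For the converse I argue by contraposition: if $\A$ is decomposable I exhibit a $\B$ with $\B\to\A$ but $\A\not\to\B$. After merging any proportional effects, which is a reversible postprocessing preserving both cleanness and indecomposability, I may assume the nonzero effects of $\A$ are pairwise non-proportional. Picking a decomposable effect $\A_{x_0}$, which is a finite sum of indecomposables on at least two distinct rays, I split $\A_{x_0}=f_1+f_2$ with $f_1$ indecomposable on an extreme ray $R_1$ and $f_2\neq0$, and define $\B$ by replacing the outcome $x_0$ with two outcomes $y_1,y_2$ carrying $f_1$ and $f_2$; coarse-graining these back gives $\B\to\A$. \textbf{The main obstacle} is to show $\A\not\to\B$. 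Assume $\A\to\B$ via $\nu$. Then $f_1=\sum_x\nu_{x,y_1}\A_x$ lies in $\mathrm{cone}\{\A_x\}$; being on an extreme ray of the larger cone $\mathcal{A}^*_+$, it is extreme in $\mathrm{cone}\{\A_x\}$, so only the unique generator $\A_{x^*}$ on $R_1$ can contribute, giving $f_1=\nu_{x^*,y_1}\A_{x^*}$ with $\nu_{x^*,y_1}>0$. Since $\A_{x_0}$ lies on no single ray we have $x^*\neq x_0$, so $\B_{x^*}=\A_{x^*}$, and the same extremality argument applied to $\A_{x^*}=\sum_x\nu_{x,x^*}\A_x$ forces $\nu_{x^*,x^*}=1$. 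The $x^*$-row of $\nu$ then has sum at least $\nu_{x^*,x^*}+\nu_{x^*,y_1}=1+\nu_{x^*,y_1}>1$, contradicting $\sum_y\nu_{x^*,y}=1$. Hence $\A\not\to\B$, so a decomposable observable is not clean. The delicate points are the reduction to non-proportional effects and the repeated use of extremality to localize the channel weights onto a single ray; once these are secured, the contradiction is a simple mass count.
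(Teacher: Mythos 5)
Your proof is correct, but it takes a genuinely different route from the paper's, most visibly in the ``clean $\Rightarrow$ indecomposable'' direction. There the paper argues directly: it splits \emph{every} effect of $\A$ into indecomposable pieces, assembles them into a refinement $\B$ with $\B\to\A$, invokes cleanness to obtain a channel realizing $\A\to\B$, and reads off from that channel that each nonzero effect of $\A$ is a positive multiple of an indecomposable effect. You instead argue by contraposition with a purely local construction: split one decomposable effect as $f_1+f_2$ and show $\A\not\to\B$ by the mass count $\nu_{x^*x^*}=1$, $\nu_{x^*y_1}>0$ overfilling the $x^*$-row. Your route requires the preliminary reduction to pairwise non-proportional effects (the paper's direct argument needs no such normalization), but that reduction is sound: merging proportional effects is a postprocessing equivalence, equivalence preserves cleanness in both directions, and merging keeps a decomposable effect decomposable. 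In the converse direction the two arguments share the same engine---every summand of an indecomposable effect is a nonnegative multiple of it---but yours is leaner: you never decompose the simulator's effects into indecomposables (the paper decomposes each $\C_z$ and tracks coefficients $q^j_{xz}$), since indecomposability of $\A_x$ already pins every $\B_y$ with $\mu_{yx}>0$ to the ray of $\A_x$, after which your ray-by-ray transportation channel works; your mass-matching claim $\sum_{x\in R_i}a_x=\sum_{y\in R_i}b_y$ does hold, because a nonzero $\B_y$ on $R_i$ must have $\mu_{yx}=0$ for every $x$ outside $R_i$ (including outcomes with $\A_x=o$, by pointedness of the cone). One small repair is needed there: your channel $\nu_{xy}=b_y/m_i$ has an identically zero row at any outcome $x$ with $\A_x=o$, so you must place an arbitrary probability distribution on such rows---the paper does exactly this, setting $\lambda_{xz}=1/\#\Omega'$ when $\A_x=o$---which is harmless since those outcomes contribute nothing to $\sum_x\nu_{xy}\A_x$. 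What your version buys is a sharper localization of why cleanness fails (a single split effect and a one-row contradiction); what the paper's buys is economy---no normalization step---and a refinement construction that it reuses verbatim in the proof of Proposition~\ref{prop:finite}.
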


\begin{proof}
Let $\A$ be a postprocessing clean observable with an outcome set $\Omega$.
In Ref. \cite{KiNuIm10} it was shown that each nonzero effect $\A_x$ has a decomposition $\A_x= \sum_{i=1}^{r_x} a^{(x)}_i$ into $r_x<\infty$ indecomposable effects $a^{(x)}_i$.
We denote $r = \max_{x \in \Omega} r_x$ and define an observable $\B$ with an outcome set $\{1,\ldots,r\} \times \Omega$ by $\B_{(i,x)} = a^{(x)}_i$ if $i \leq r_x$ and $\B_{(i,x)}=o$ otherwise. Now we see that
\begin{align*}
\A_x &= \sum_{i=1}^{r_x} a^{(x)}_i = \sum_{i=1}^r \B_{(i,x)} = \sum_{i,x'} \nu_{(i,x')x} \B_{(i,x')},
\end{align*}
where we have defined the postprocessing $\nu: \{1,\ldots,r\}
\times \Omega \to \Omega$ by $\nu_{(i,x')x}= \delta_{x',x}$ for
all $i= 1, \ldots,r$ and $x,x' \in \Omega$. Thus, $\B \to \A$.
Since $\A$ is postprocessing clean, it follows that also $\A \to
\B$, hence there exists a postprocessing $\mu: \Omega \to
\{1,\ldots,r\} \times \Omega$ such that
\begin{equation*}
\B_{(i,x)} = \sum_{y \in \Omega} \mu_{y(i,x)} \A_{y}
\end{equation*}
for all $i=1,\ldots,r$ and $x \in \Omega$.
Each nonzero effect $\B_{(i,x)} =a^{(x)}_i$ is indecomposable, and so for all $x,y \in \Omega$ and $i=1, \ldots,r_x$ there exists a real number $p^i_{xy} >0$ such that $\mu_{y(i,x)} \A_{y}= p^i_{xy} a^{(x)}_i$.
From the normalization $\sum_{i,x} \mu_{y(i,x)}=1$ for all $y \in \Omega$ it follows that for each $y \in \Omega$ (such that $\A_y \neq o$) there exists an element $\mu_{y(i_y,x_y)} \neq 0$ for some $i_y=1, \ldots, r_{x_y}$ and $x_y \in \Omega$.
Hence, for each $y \in \Omega$ (such that $\A_y \neq o$), there exists an outcome $(i_y,x_y)$ for the observable $\B$ with an indecomposable effect $\B_{(i_y,x_y)}=a^{(x_y)}_{i_y} $ such that
\begin{equation*}
\A_y = \frac{p^{i_y}_{x_yy}}{\mu_{y(i_y,x_y)}} a^{(x_y)}_{i_y}.
\end{equation*}
Thus, each nonzero effect of $\A$ is indecomposable.

Let then $\A$ be an indecomposable observable with an outcome set $\Omega$.
We consider an observable $\C$ with an outcome set $\Omega'$ such that $\C \to \A$, i.e., there exists a postprocessing $\eta: \Omega' \to \Omega$ such that $\A_x = \sum_{z \in \Omega'} \eta_{zx} \C_z$ for all $x \in \Omega$. Without loss of generality observable $\C$ has only nonzero outcomes. Thus, each effect $\C_z$ has a decomposition $\C_z = \sum_{j=1}^{r_z} c^{(z)}_j$ into $r_z$ indecomposable effects $c^{(z)}_j$, and
\begin{equation*}
\A_x = \sum_{z \in \Omega'} \sum_{j=1}^{r_z} \eta_{zx} c^{(z)}_j
\end{equation*}
for all $x \in \Omega$.
Hence, for each $z \in \Omega'$, $j\in \{1, \ldots,r_z\}$ and $x \in \Omega$ such that $\A_x \neq o$ there exists a real number $q^j_{xz} \geq 0$ such that $\eta_{zx} c^{(z)}_j = q^j_{xz} \A_x$.
By summing over all $z \in \Omega'$ and $j=1,\ldots,r_z$ we have that
\begin{equation*}
\left(\sum_{z \in \Omega'} \sum_{j=1}^{r_z} q^j_{xz} \right) \A_x = \sum_{z \in \Omega'} \sum_{j=1}^{r_z} \eta_{zx} c^{(z)}_j = \sum_{z\in \Omega'} \eta_{zx} \C_z = \A_x
\end{equation*}
for all $x \in \Omega$ such that $\A_x \neq o$.
Thus, for such $x \in \Omega$ we have that $\sum_{z \in \Omega'} \sum_{j=1}^{r_z} q^j_{xz} =1$. From this it also follows that $0\leq\sum_j q^j_{xz}\leq 1$ for all $z \in \Omega'$ and $x \in \Omega$ such that $\A_x \neq o$ .

Since $\eta_{zx} = 0$ for all $x \in \Omega$ such that $\A_x=o$, we have from the normalization of the postprocessing $\eta$ that
\begin{equation*}
c^{(z)}_j = \sum_{x \in \Omega} \eta_{zx} c^{(z)}_j =  \sum_{\substack{x \in \Omega \\ \A_x \neq o}} \eta_{zx} c^{(z)}_j+ \sum_{\substack{x \in \Omega \\ \A_x = o}} \eta_{zx} c^{(z)}_j = \sum_{\substack{x \in \Omega \\ \A_x \neq o}} q^j_{xz} \A_x.
\end{equation*}
Thus,
\begin{equation*}
\C_z= \sum_{j=1}^{r_z} c^{(z)}_j = \sum_j \sum_{\substack{x \in \Omega \\ \A_x \neq o}} q^j_{xz} \A_x = \sum_{x \in \Omega} \lambda_{xz} \A_x,
\end{equation*}
where we have defined $\lambda_{xz} = \sum_{j=1}^{r_z} q^j_{xz}$ when $\A_x \neq o$ and $\lambda_{xz} = 1/ \#\Omega' $ otherwise. From the observations made above we have that $0 \leq \lambda_{xz} \leq 1$ for all $x\in \Omega$ and $z \in \Omega'$, and furthermore $\sum_{z \in \Omega'} \lambda_{xz} =1$ for all $x \in \Omega$ so that the map $\lambda: \Omega \to \Omega'$ defined by matrix element $\lambda_{xz}$ is a postprocessing.
Hence, $\A \to \C$ for all observables $\C$ such that $\C \to \A$ and so $\A$ is postprocessing clean.
\end{proof}

The postprocessing relation is a preorder on $\obs$, i.e., a
transitive and symmetric relation. Two observables $\A$ and $\B$
are \emph{postprocessing equivalent} if both $\A\to\B$ and
$\B\to\A$, and in this case we denote $\A \leftrightarrow \B$.
This is an
equivalence relation, and the set $\obs$ therefore splits into
equivalence classes.
Two postprocessing equivalent observables do not differ in any physically relevant way.

\begin{example}[\emph{Minimally sufficient representative}]
In every equivalence class, one has an observable for which all effects are pairwise linearly independent. This was proven for quantum observables with a finite number of outcomes in Ref. \cite{MaMu90a}. A generalization of this property was introduced and studied in Ref. \cite{Kuramochi15b}, where such observables were called \emph{minimally sufficient}.

To see that an observable with pairwise linearly independent
effects exists for each equivalence class in our setting, let us
consider an observable $\A: X \to \effect(\state)$. Suppose that
two effects $\A_{x'}$ and $\A_{x''}$ are linearly dependent
(proportional to each other). Consider the outcome set
$Y=X \setminus \{x''\}$ and a postprocessing $\nu: X \to Y$ such
that $\nu_{xy} = \delta_{xx'}+\delta_{xx''}$ if $y=x'$ and
$\nu_{xy} = \delta_{xy}$ otherwise. In the resulting observable
$\B= \nu \circ \A$, the effects $\A_{x'}$ and $\A_{x''}$ are
merged into $\B_{x'} = \A_{x'} + \A_{x''}$. Thus, by construction
$\A\to\B$.

Note that $\A_{x'}=p'\B_{x'}$, $\A_{x''}=p''\B_{x'}$, where
$p',p'' \geq 0$ and $p'+p''=1$. By defining the postprocessing
$\mu: Y \to X$ such that $\mu_{yx} =
p'\delta_{xx'}+p''\delta_{xx''}$ if $y=x'$ and $\mu_{yx} =
\delta_{xy}$ otherwise, we see that $\A =  \mu \circ \B$, and
hence $\B \to \A$. Therefore, $\A\leftrightarrow \B$. By
continuing this kind of merging of linearly dependent pairs of
effects, we will eventually obtain an observable $\hat{\A}$ with
pairwise linearly independent effects which is postprocessing
equivalent with $\A$.

Furthermore, it can be shown that the observable $\hat{\A}$ is
essentially unique: If $\widetilde{\A}$ is another pairwise
linearly independent observable in the equivalence class of $\A$,
then the postprocessing equivalence between $\hat{\A}$ and
$\widetilde{\A}$ is given by permutation matrices so that the
observables are only bijective relabellings of each other. In
\cite{MaMu90a} this was proved for quantum observables but since
the proof is analogous in the GPT framework it is omitted here.
\end{example}

\subsection{Mixing of observables}

A mixing of observables means a procedure where, in each
measurement round, we randomly pick an observable from a finite
collection and measure it. Thus, if we have $m$ observables
$\B^{(1)}, \ldots, \B^{(m)}$ with respective outcome sets
$X_1, \ldots,X_m$, then for any
probability distribution $p: i \mapsto p_i$ on $\{1, \ldots,m\}$
we can form an observable $\B$ with the outcome set $X \equiv
\cup_{i=1}^m X_i$ by
\begin{equation}
\B_x = \sum_{i=1}^m p_i \B^{(i)}_x \, ,
\end{equation}
where each observable $\B^{(i)}$ is extended onto $X$ by setting $\B^{(i)}_x=o$ if $x \notin X_i$. We can therefore assume that the outcome sets of the mixed observables are the same.

There is another way of forming mixtures, found by keeping
track of the measured observable in each round of the measurement.
Just as above, we take the outcome sets of the observables
$\B^{(1)}, \ldots, \B^{(m)}$ to be equal, say
$X$, but now the outcome in each measurement round is a pair
$(k,x)$, where $k \in \{1, \ldots,m\}$ labels the measured
observable $\B^{(k)}$ and $x \in X$ is the obtained outcome. To
formulate the mixing procedure mathematically, the mixture of
these observables is an observable $\tilde{\B}$ with the outcome
set $X_0=\{1, \ldots,m\} \times X$ defined as
\begin{equation}
\tilde{\B}_{(k,x)} = p_k \B^{(k)}_{x}
\end{equation}
for each $k \in \{1,\ldots,m\}$ and $x \in X$, where $p_k$ is the probability of measuring the observable $\B^{(k)}$.

It is clear that the latter way of mixing leads to a finer
observable than the first one; by postprocessing $\tilde{\B}$ we
can obtain the observable that corresponds to mixing without
keeping track of the measured observable. Namely, we define a
function $f:X_0 \to X$ by $f(k,x)=x$ and define the relabelling
$\nu^f: X_0 \to X$ by $\nu^f_{(k,x)y}= 1$ if $f(k,x)=x=y$ and
$\nu^f_{(k,x)y}=0$ if $f(k,x)=x \neq y$. Then
\begin{equation}
(\nu^f \circ \tilde{\B})_x = \sum_{i,x'} \nu^f_{(i,x')x} p_i
\B^{(i)}_{x'} = \sum_i p_i \B^{(i)}_{x} = \B_x \, .
\end{equation}

In the following definition we will understand mixing in the sense that the outcome sets are the same and we do not keep track of the mixed observables.

\begin{definition}
\label{def:extreme} An observable $\A\in\obs_X$ is called
\emph{extreme} if a convex sum decomposition $\A = \lambda \B +
(1-\lambda) \C$, $0 < \lambda < 1$ with $\B,\C \in \obs_X$ implies
$\B=\C=\A$.
\end{definition}

The following is a well-known fact for quantum observables; see, e.g., Ref.  \cite{HaHePe12}.
It is proven analogously in the GPT framework and a proof is given in the appendix.

\begin{proposition}\label{prop:nonzero-extreme}
Nonzero effects of an extreme
observable are linearly independent.
\end{proposition}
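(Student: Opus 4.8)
The plan is to prove the contrapositive: if the nonzero effects of $\A \in \obs_X$ are linearly dependent, then $\A$ admits a nontrivial convex decomposition and hence is not extreme. Write $X_0 = \{x \in X \mid \A_x \neq o\}$ for the set of outcomes carrying nonzero effects. Linear dependence means there are real coefficients $(c_x)_{x \in X_0}$, not all zero, with $\sum_{x \in X_0} c_x \A_x = o$. The idea is to use this relation as a ``direction'' along which I can perturb $\A$ inside $\obs_X$ in both the $+$ and the $-$ sense, obtaining two distinct observables whose average is $\A$.

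Concretely, for a small parameter $\epsilon > 0$ I would define two candidate observables by rescaling each nonzero effect, setting $\B_x = (1 + \epsilon c_x)\A_x$ and $\C_x = (1 - \epsilon c_x)\A_x$ for $x \in X_0$, and $\B_x = \C_x = o$ for $x \notin X_0$. Then $\A = \half \B + \half \C$ holds termwise, and $\B \neq \C$ because $c_x \neq 0$ for at least one $x \in X_0$, for which $\A_x \neq o$ and hence $\epsilon c_x \A_x \neq o$. What remains is to check that $\B$ and $\C$ are genuine observables in $\obs_X$.

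Normalization is exactly where the dependence relation is used: $\sum_x \B_x = \sum_{x \in X_0}\A_x + \epsilon \sum_{x \in X_0} c_x \A_x = u + \epsilon\, o = u$, and likewise for $\C$, since $\A_x = o$ off $X_0$ forces $\sum_{x \in X_0}\A_x = u$. For positivity, each $\A_x \geq o$, so the rescaled functional $(1 \pm \epsilon c_x)\A_x$ is again $\geq o$ provided $1 \pm \epsilon c_x \geq 0$; choosing $\epsilon < 1/\max_{x \in X_0}|c_x|$ guarantees this for every $x$. The upper bounds $\B_x \leq u$ and $\C_x \leq u$ then come for free: once all effects are positive and sum to $u$, each one satisfies $u - \B_x = \sum_{y \neq x}\B_y \geq o$. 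Hence $\B,\C \in \obs_X$, the decomposition is nontrivial, and extremality of $\A$ is contradicted.

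I expect the only genuine subtlety to be verifying that the perturbed maps actually land in $\effect(\state)$ — in particular the upper bound $\leq u$ — rather than being merely positive functionals of the right normalization. The clean way around this is the observation that positivity of all effects together with the normalization $\sum_x \B_x = u$ forces each $\B_x \leq u$ automatically, so I would isolate and state that point explicitly rather than trying to bound $(1+\epsilon c_x)\A_x$ by $u$ directly (which can fail when $c_x>0$). With that remark in place the whole argument reduces to the two one-line checks above.
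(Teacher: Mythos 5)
Your proof is correct and follows essentially the same route as the paper's: both perturb $\A$ along the linear dependence relation, defining $\B_x=(1+\epsilon c_x)\A_x$ and $\C_x=(1-\epsilon c_x)\A_x$ with $\epsilon$ small enough (the paper takes $\lambda = 1/(2\max_i |r_i|)$) so that $\A=\half\B+\half\C$ is a nontrivial convex decomposition. Your explicit remark that positivity together with $\sum_x \B_x = u$ automatically yields $\B_x \leq u$ is a detail the paper leaves as ``straightforward,'' but it is the right justification.
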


Also the following statement is well-known for quantum observables; see, e.g., Ref. \cite{Parthasarathy99}.
We find it useful to give a proof that is valid in the GPT framework.

\begin{proposition}\label{prop:p-p-clean-extreme}
A postprocessing clean observable
$\A$ is extreme if and only if its nonzero effects are linearly
independent.
\end{proposition}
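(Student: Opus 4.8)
The plan is to dispatch the two implications separately, noting that one of them is essentially free. For the ``only if'' direction, observe that extremality of $\A$ already forces its nonzero effects to be linearly independent by Proposition \ref{prop:nonzero-extreme}; this direction does not even use the postprocessing-clean hypothesis. Hence all the work lies in the converse: assuming $\A$ is postprocessing clean with linearly independent nonzero effects, I must show that $\A$ is extreme.

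For the converse I would first invoke Proposition \ref{prop:pp-clean} to replace ``postprocessing clean'' by ``indecomposable,'' so that every nonzero effect $\A_x$ lies on an extreme ray of the dual cone $\mathcal{A}^*_+$. Now suppose $\A = \lambda\B + (1-\lambda)\C$ with $0 < \lambda < 1$ and $\B,\C \in \obs_X$, and read this outcome by outcome: $\A_x = \lambda\B_x + (1-\lambda)\C_x$. Since $\lambda\B_x$ and $(1-\lambda)\C_x$ are positive functionals (being nonnegative multiples of effects), for each $x$ with $\A_x \neq o$ this exhibits the extreme-ray element $\A_x$ as a sum of two cone elements. Indecomposability then forces both summands to be scalar multiples of $\A_x$, so $\B_x = c_x \A_x$ and $\C_x = d_x \A_x$ for some $c_x, d_x \geq 0$. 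For the outcomes with $\A_x = o$, positivity of $\B_x, \C_x$ together with strict positivity of $\lambda$ and $1-\lambda$ forces $\B_x = \C_x = o$, so the same relations hold trivially.

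It remains to pin down the constants, and this is exactly where linear independence enters. Imposing the normalization $\sum_x \B_x = u = \sum_x \A_x$ and substituting $\B_x = c_x \A_x$ gives $\sum_x (c_x - 1)\A_x = o$, a vanishing linear combination of the nonzero effects; linear independence forces $c_x = 1$ for every $x$ with $\A_x \neq o$, whence $\B = \A$, and symmetrically $\C = \A$, so $\A$ is extreme. The one step that genuinely carries the argument is the passage from the convex decomposition to the pointwise proportionality $\B_x, \C_x \in \real_{\geq0}\A_x$: this is where indecomposability, i.e.\ the extreme-ray property guaranteed by Proposition \ref{prop:pp-clean}, is indispensable, and without it the conclusion can fail. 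Once that proportionality is in hand, the normalization-plus-linear-independence bookkeeping is routine.
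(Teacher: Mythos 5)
Your proof is correct, but it takes a genuinely different route from the paper's. For sufficiency, the paper never invokes Proposition~\ref{prop:pp-clean}: it works directly with the definition of postprocessing cleanness. Given $\A = \sum_k p_k \B^{(k)}$, it forms the tracked mixture $\widetilde{\B}_{(k,x)} = p_k\B^{(k)}_x$, observes $\widetilde{\B}\to\A$, uses cleanness to obtain a reverse postprocessing $\mu$ with $\widetilde{\B}_{(k,z)} = \sum_x \mu_{x(k,z)}\A_x$, and then uses linear independence \emph{twice}: first to force the composed stochastic matrix $\sum_{k,z}\mu_{x(k,z)}\nu_{(k,z)y}$ to equal $\delta_{xy}$ (which, together with the normalization of $\mu$, kills every element $\mu_{x(k,z)}$ with $z\neq x$), and a second time, via the normalizations $\sum_z\B^{(k)}_z = \sum_z\A_z = u$, to pin down $\mu_{z(k,z)} = p_k$, concluding $\B^{(k)}=\A$. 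You instead convert cleanness into indecomposability via Proposition~\ref{prop:pp-clean} and exploit the cone geometry: each nonzero $\A_x$ lies on an extreme ray of $\mathcal{A}^*_+$, so the outcome-wise identity $\A_x = \lambda\B_x + (1-\lambda)\C_x$, read as a sum of two cone elements, forces $\B_x, \C_x \in \real_{\geq 0}\A_x$; linear independence then enters only once, in the final normalization step, and your treatment of the edge cases (outcomes with $\A_x = o$ forcing $\B_x=\C_x=o$ by positivity, and the necessity direction needing only Proposition~\ref{prop:nonzero-extreme}) is also sound. The trade-off is clear: your argument is shorter and makes the geometric role of indecomposability transparent, but it leans on Proposition~\ref{prop:pp-clean} (specifically its clean~$\Rightarrow$~indecomposable half), which is itself the heaviest result proved up to that point; the paper's proof involves more stochastic-matrix bookkeeping but is self-contained given only the definition of cleanness, so it is logically independent of Proposition~\ref{prop:pp-clean} and would survive even if that equivalence were unavailable.
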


\begin{proof}
The necessity of linear independence follows from
Proposition~\ref{prop:nonzero-extreme}. To prove sufficiency, let the
nonzero effects $\A_1,\ldots,\A_n$ of a
postprocessing clean observable $\A$ be linearly independent. Let
$\A = \sum_{k} p_k \B^{(k)}$ for some probability distribution
$\{p_k\}_k$ and some set of observables $\{\B^{(k)}\}_k$. We
define an observable $\widetilde{\B}$ as  $\widetilde{\B}_{(k,x)}
= p_k \B_x^{(k)}$. Then
\begin{equation}
\label{eq:B-to-A} \A_y = \sum_{k,z} \nu_{(k,z)y}
\widetilde{\B}_{(k,z)},
\end{equation}
where the postprocessing matrix $\nu$ has the
form
\begin{equation}
\label{eq:nu-form} \nu_{(k,z)y} = \left\{ \begin{array}{ll}
  1 & \text{if~} z=y, \\
  0 & \text{if~} z \neq y. \\
\end{array} \right.
\end{equation}
Thus, $\widetilde{\B} \to \A$. Since $\A$ is
postprocessing clean, the latter relation implies $\A \to
\widetilde{\B}$; i.e., there exists a postprocessing matrix
$\mu_{x(k,z)}$ such that
\begin{equation}
\label{eq:A-to-B} \widetilde{\B}_{(k,z)} = \sum_x \mu_{x(k,z)} \A_x.
\end{equation}
Combining \eqref{eq:B-to-A} with \eqref{eq:A-to-B}, we get
\begin{equation}
\A_y = \sum_{x} \left( \sum_{k,z} \mu_{x(k,z)} \nu_{(k,z)y}
\right) \A_x.
\end{equation}
Since the effects $\A_1,\ldots,\A_n$ are linearly
independent, the term in parentheses must be equal to
$\delta_{xy}$.
Further, since $\nu$ has the specific form of \eqref{eq:nu-form}, we have that $\sum_{k} \mu_{x(k,x)} = 1$ for all $x$. Since
$\sum_{k,z} \mu_{x(k,z)} = 1$ for all $x$ and $k$, it follows that all the elements
$\mu_{x(k,z)}$ with $z \neq x$ are zero. Thus, only one term in
the sum \eqref{eq:A-to-B} contributes and hence
$$
p_k \B_z^{(k)} =\widetilde{\B}_{(k,z)} = \mu_{z(k,z)} \A_z \, ,
$$
i.e., $\B_z^{(k)} =
\frac{\mu_{z(k,z)}}{p_k} \A_z$. Finally, as $\sum_z \B_z^{(k)} =
u$ and $\sum_z \A_z = u$, we get
\begin{equation}
\sum_z \left( \B_z^{(k)} - \A_z \right) = \sum_z \left(
\frac{\mu_{z(k,z)}}{p_k} - 1 \right) \A_z = 0.
\end{equation}
Since the effects $\A_1,\ldots,\A_n$ are linearly independent, the
latter equation implies $\mu_{z(k,z)} = p_k$ for all $z$, so
$\B_z^{(k)} = \A_z$ and $\B^{(k)} = \A$ for all $k$.
This means that $\A$ is extreme.
\end{proof}

\section{Simulation of observables} \label{sec:simulation}

\subsection{Simulation scheme}

\begin{figure}
\includegraphics[scale=0.45]{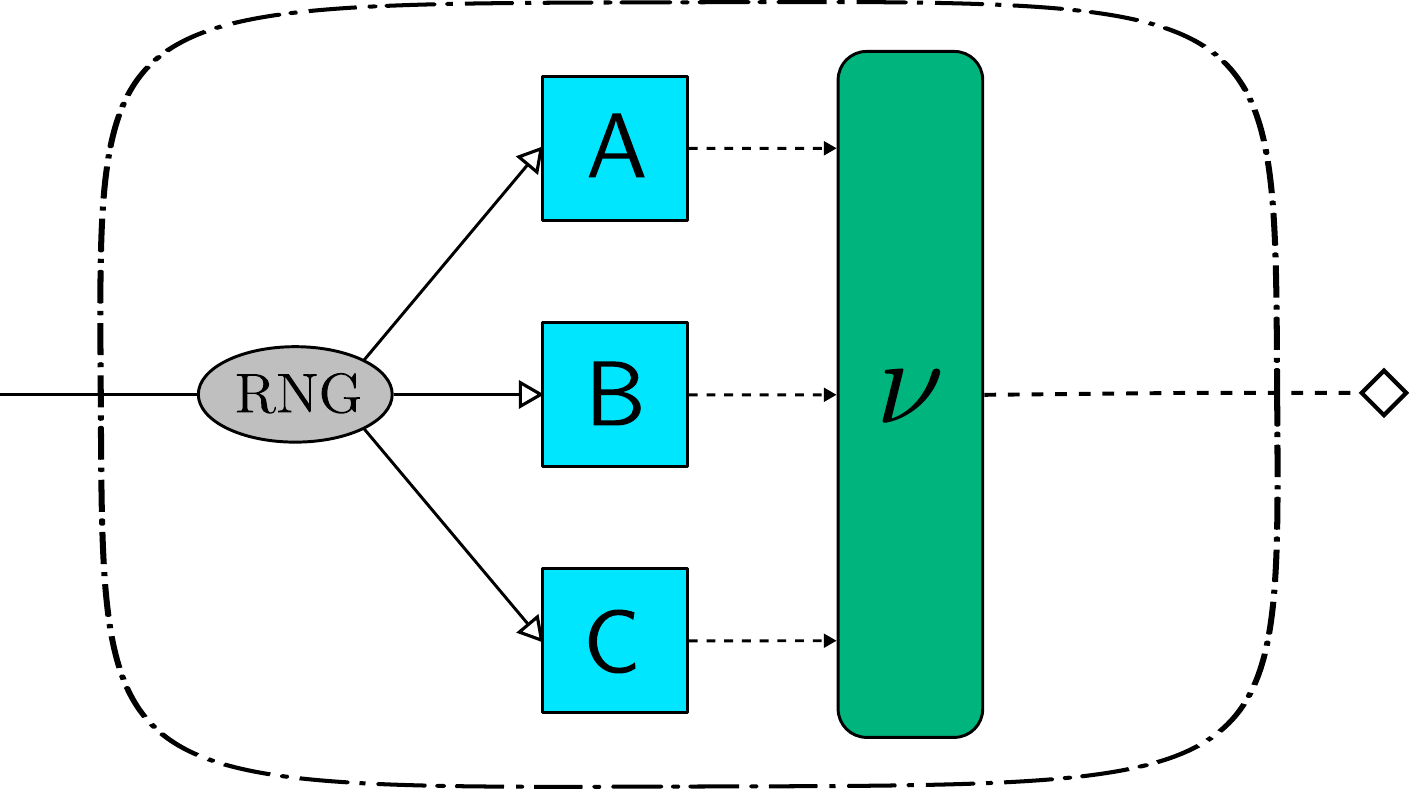}
\caption{\label{figure1} Simulation of a new observable by using
observables $\A$, $\B$, and $\C$. In every individual measurement,
the random number generator (RNG) chooses one of observables $\A$,
$\B$, and $\C$. Then, the classical outcome is affected by a
classical channel $\nu$.}
\end{figure}

Let us consider a subset $\mathcal{B} \subseteq \mathcal{O}$ of observables.
Following Ref. \cite{GuBaCuAc17}, we consider the set of observables that can be obtained from $\mathcal{B}$ by means of classical manipulations, namely by mixing and postprocessing. The simulation scheme consists of two steps: i) for any finite subset $\{\B^{(i)} \}_{i=1}^m \subseteq \mathcal{B}$ of observables with outcome set $X$ we choose an observable $\B^{(i)}$ with some probability $p_i$ and measure it, and ii) after obtaining an outcome $(i,x)$ by keeping track of the measured observable we perform some postprocessing $\nu: \cup_{k=1}^m \{k\} \times X \to Y$, outputting an outcome $y\in Y$ for some outcome space $Y$ with a probability $\nu_{(i,x)y}$.
Thus, the result is an observable $\A$ with an outcome set $Y$ such that
\begin{equation}\label{eq:sim_obs}
\A_y = (\nu \circ \tilde{\B})_y = \sum_{(i,x)} \nu_{(i,x)y} \tilde{\B}_{(i,x)}
\end{equation}
for all $y \in Y$, where $\tilde{\B}_{(i,x)} = p_i \B^{(i)}_x$ is
the observable used to define the mixture where we keep track of
the outcomes. The scheme is depicted in Fig.~\ref{figure1}.

We see that we can write $\A$ in two equivalent ways. By expanding \eqref{eq:sim_obs}, we see that
\begin{align}
\begin{split}
\A_y &= \sum_{(i,x)} \nu_{(i,x)y} \tilde{\B}_{(i,x)} =\sum_{(i,x)} \nu_{(i,x)y}  p_i \B^{(i)}_x \\
&= \sum_i p_i \left( \sum_x \nu_{(i,x)y} \B^{(i)}_x \right).
\end{split}
\end{align}
Now we may split the postprocessing $\nu$ into $m$ parts by defining postprocessings $\nu^{(i)}: X \to Y$ by
\begin{equation}\label{eq:pp-split}
\nu^{(i)}_{xy} = \nu_{(i,x)y}
\end{equation}
for all $x \in X$, $y \in Y$ and $i= 1, \ldots,m$. Hence, we can express $\A$ as
\begin{equation}\label{eq:pp-split}
\A_y = \sum_i p_i \left( \sum_x \nu^{(i)}_{xy} \B^{(i)}_x \right) = \sum_i p_i ( \nu^{(i)} \circ \B^{(i)})_y.
\end{equation}
Thus, we can think of either first mixing the observables and then
postprocessing the mixture, or first postprocessing the
observables individually and then mixing the post-processed
observables. The scheme in Fig.~\ref{figure1} is therefore equivalent
to the scheme in Fig.~\ref{figure2}.

\begin{figure}
\includegraphics[scale=0.45]{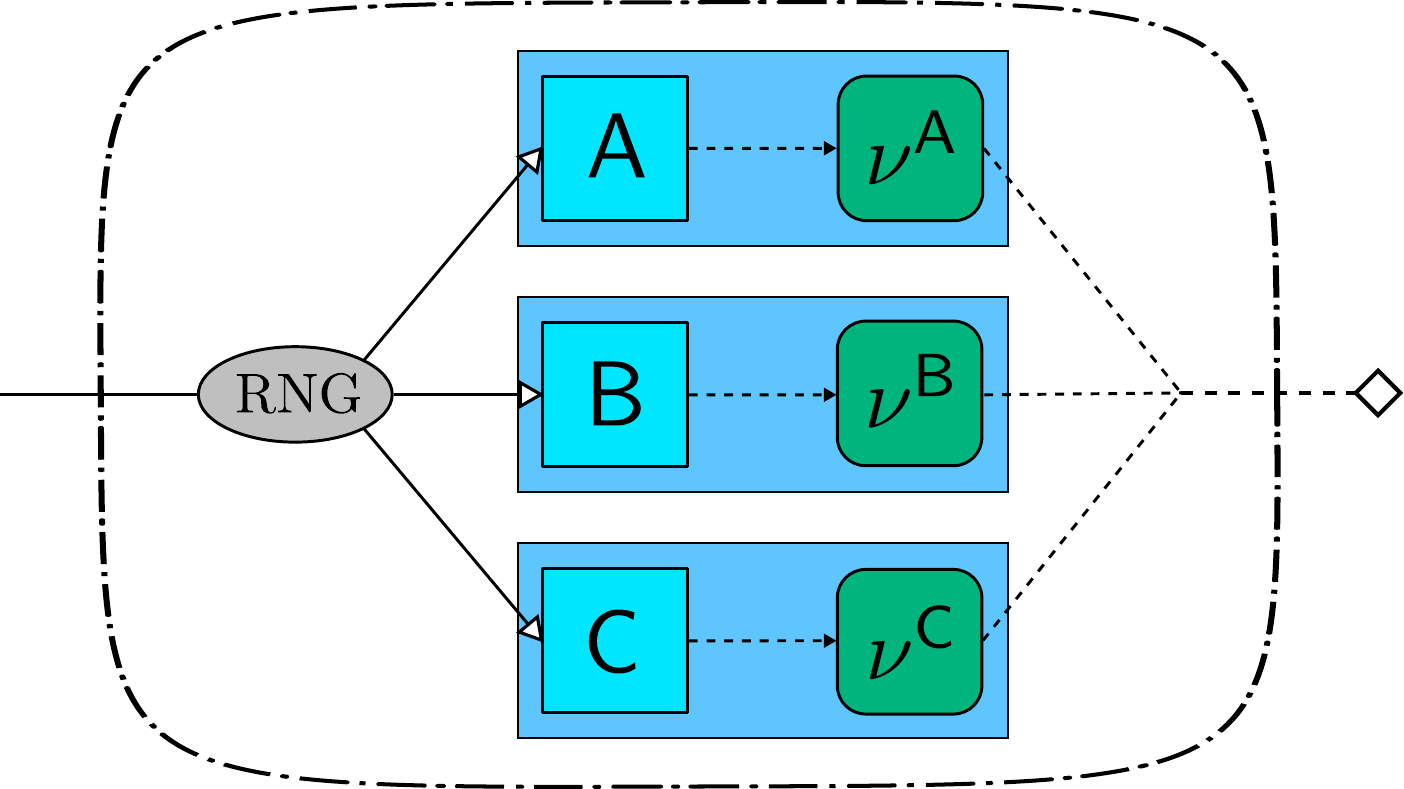}
\caption{\label{figure2} Simulation can be equivalently seen as a mixture of post-processed observables.}
\end{figure}

As an additional remark, let us consider the case where some of the observables used in the simulation are the same. Suppose we have an observable $\A$ with an outcome set $Y$ that can be simulated by observables $\B^{(1)}, \ldots, \B^{(m)}, \B^{(m+1)}, \ldots, \B^{(n)} $ with outcome sets $X$ such that $\B^{(m+1)}= \B^{(m+2)}=\cdots=\B^{(n)}$, i.e., we can express $\A$ as
\begin{align}
\begin{split}
\A_y &= \sum_{(i,x)} \nu_{(i,x)y}p_i \B^{(i)}_x \\
&= \sum_{i=1}^m \sum_x \nu_{(i,x)y}p_i \B^{(i)}_x + \sum_{i=m+1}^n \sum_x \nu_{(i,x)y}p_i \B^{(m+1)}_x
\end{split}
\end{align}
for all $y \in Y$ with some probability distribution $(p_i)_{i=1}^n$ and postprocessing $\nu: \cup_{k=1}^n \{k\} \times X \to Y$.
We can then form a new probability distribution $(p'_i)_{i=1}^{m+1}$ by   \begin{align}
\begin{split}
& p'_i = p_i \quad \forall i=1, \ldots, m, \\
& p'_{m+1} = \sum_{j=m+1}^n p_j.
\end{split}
\end{align}
and a new postprocessing $\nu': \cup_{k=1}^{m+1} \{k\} \times X \to Y$ by setting
\begin{align}
\begin{split}
& \nu'_{(i,x)y} = \nu_{(i,x)y} \quad \forall i=1, \ldots, m, \\
& \nu'_{(m+1,x)y} = \sum_{j=m+1}^n \dfrac{p_j}{p'_{m+1}} \nu_{(j,x)y}
\end{split}
\end{align}
for all $x \in X$ and $y \in Y$, so that
\begin{eqnarray*}
\A_y
&=& \sum_{i=1}^m \sum_x \nu_{(i,x)y}p_i \B^{(i)}_x + \sum_x p'_{m+1}\sum_{i=m+1}^n  \dfrac{p_i}{p'_{m+1}}\nu_{(i,x)y} \B^{(m+1)}_x \\
&=& \sum_{i=1}^m \sum_x \nu'_{(i,x)y}p'_i \B^{(i)}_x + \sum_x \nu'_{(m+1,x)y}p'_{m+1} \B^{(m+1)}_x
\\
&=&  \sum_{i=1}^{m+1} \sum_{x} \nu'_{(i,x)y}p'_i \B^{(i)}_x
\end{eqnarray*}
for all $y\in Y$. Hence, instead of using multiple instances of
the same observable in the simulating scheme, by modifying the
mixing and postprocessing we can reduce the multiplicity so that
only one instance of each different observable is used. The
intuitive reason for this is that when there are multiple
instances of the same observable in the simulator, a \emph{router}
can be used to direct the outcomes to the individual
postprocessings with some (weighted) probabilities resulting in a
reduction of multiplicity; see Fig.~\ref{figure3}. Looking from
the other way round, we can think of using the same simulator
observable several times, even if we would have only a single
device to hand.

\begin{figure}
\includegraphics[scale=0.296]{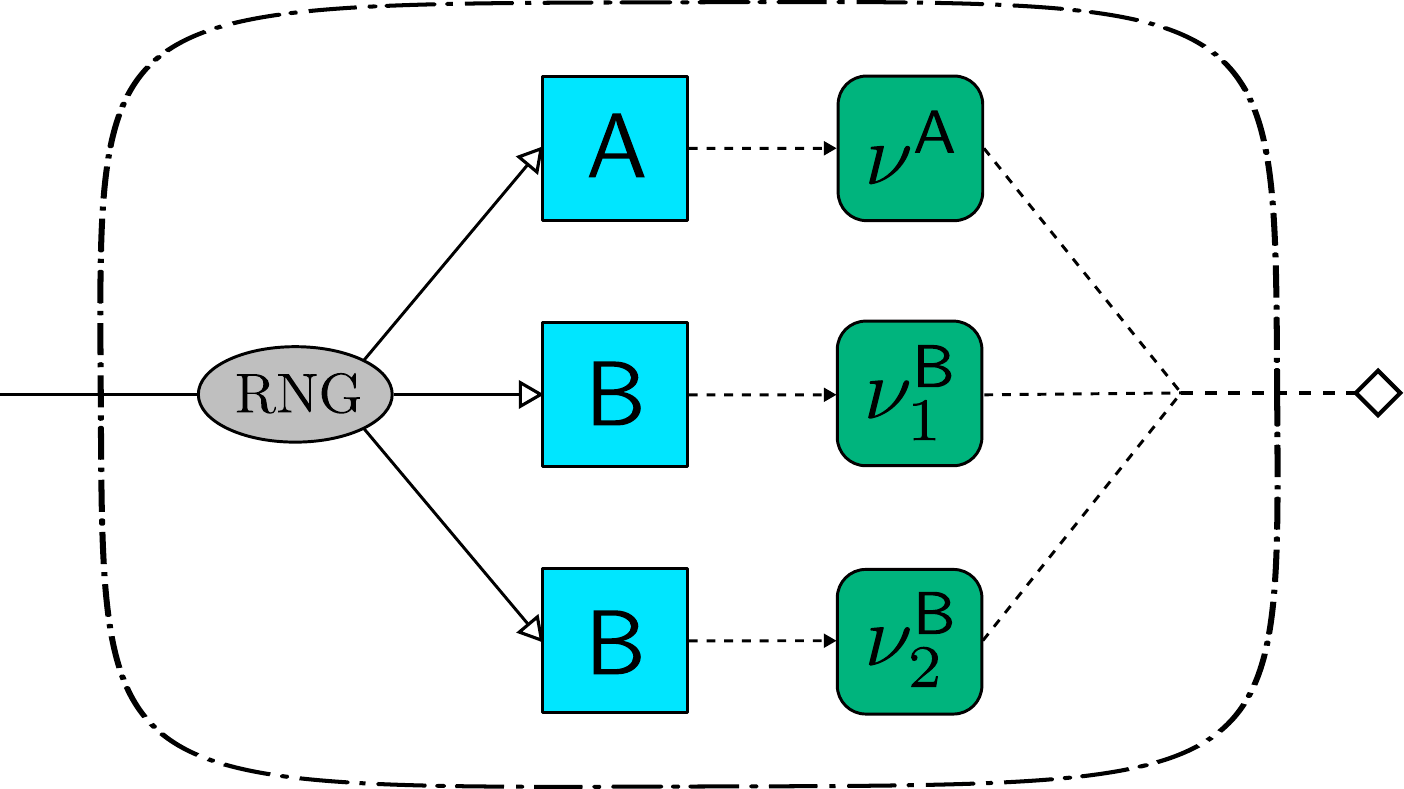} \
\includegraphics[scale=0.296]{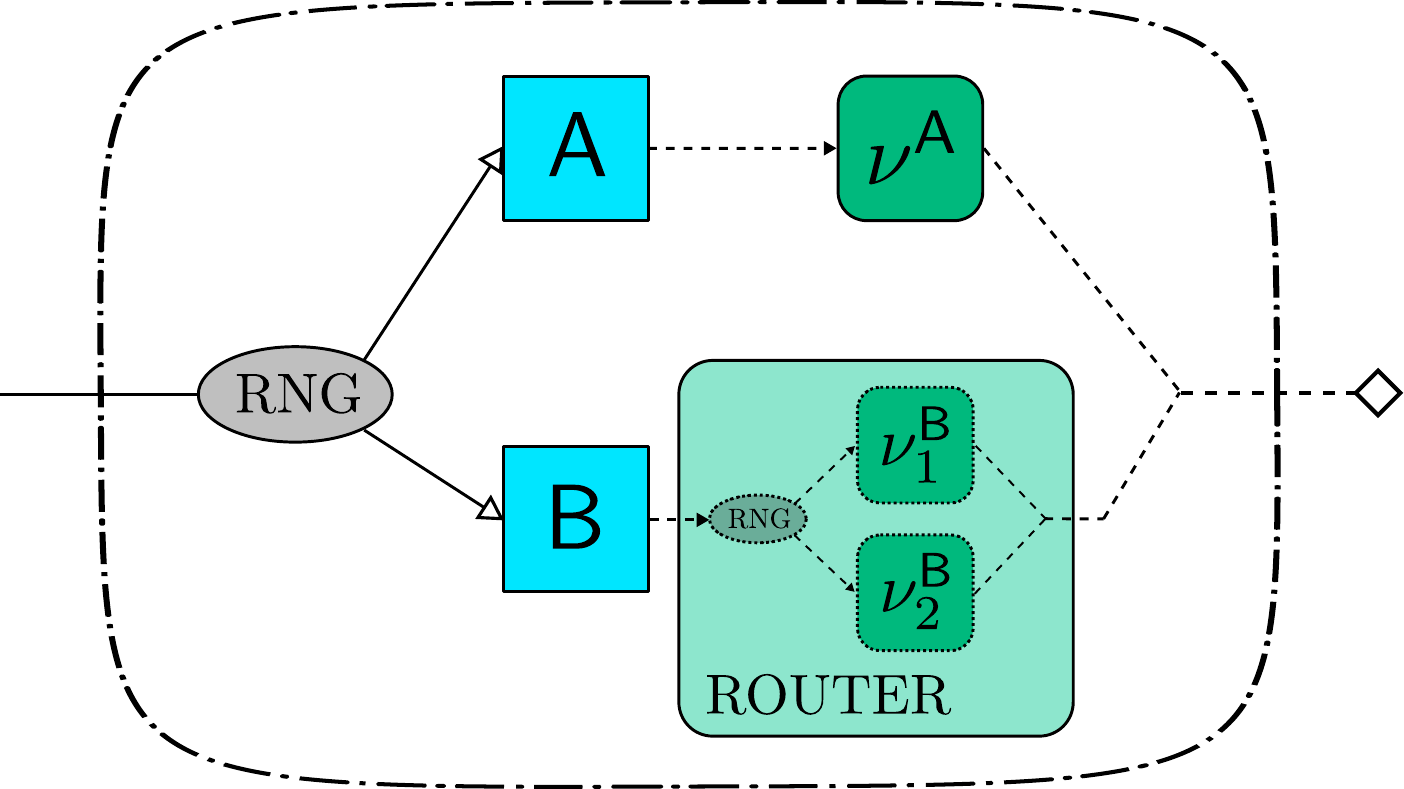}
\caption{\label{figure3} Multiple uses of a same observables are allowed but do not alter the generality of simulability. The scheme on the left hand side can be reduced to the scheme on the right hand side.}
\end{figure}

\subsection{The simulation map}

 Consider a subset of observables $\mathcal{B} \subseteq \obs$. Following the terminlogy from Ref. \cite{GuBaCuAc17}, we say that an observable $\A$ is \emph{$\mathcal{B}$-simulable} if it can be implemented with a simulation scheme by using some finite number of observables from $\mathcal{B}$.
Further, we denote by $\simu{\mathcal{B}}$ the set of all observables that are $\mathcal{B}$-simulable, and we treat $\simu{\cdot}$ as a map on the power set $2^\obs$.
In the case of a singleton set $\{ \B \}$, we simply denote $\simu{\B} \equiv \simu{ \{ \B \} }$.

For any subsets $\mathcal{B},\mathcal{C}\subseteq\obs$, the map $\simu{\cdot}$ satisfies the following basic properties:
\begin{itemize}
{\setlength\itemindent{25pt}
\item[(sim1)] $\mathcal{B} \subseteq \simu{\mathcal{B}}$,
\item[(sim2)] $\simu{\simu{\mathcal{B}}}=\simu{\mathcal{B}}$,
\item[(sim3)] $\mathcal{B} \subseteq \mathcal{C} \Rightarrow \simu{\mathcal{B}}\subseteq \simu{\mathcal{C}}$.
}
\end{itemize}
These properties are easy to verify and they mean that $\simu{\cdot}$ is a \emph{closure operator} on $\obs$.
It is commonly known that the closure operator properties (sim1)--(sim3) are equivalent to the single condition:
\begin{itemize}
{\setlength\itemindent{25pt}
\item[(sim4)] $\mathcal{B} \subseteq \simu{\mathcal{C}} \Leftrightarrow \simu{\mathcal{B}} \subseteq \simu{\mathcal{C}}$.
}
\end{itemize}

In the definition of simulability we are requiring that the
simulation scheme consists of a finite number of observables. It
thus follows that
\begin{itemize}
{\setlength\itemindent{25pt}
\item[(sim5)] $\simu{\mathcal{B}} = \bigcup \{ \simu{\mathcal{B}'}: \mathcal{B}'\subseteq\mathcal{B}\textrm{ and $\mathcal{B}'$ is finite}\}$.
}
\end{itemize}
This property means that $\simu{\cdot}$ is an \emph{algebraic} closure operator.

The map $\simu{\cdot}$ also has the following two properties:
\begin{itemize}
{\setlength\itemindent{25pt}
\item[(sim6)] $\simu{\mathcal{B}}$ is convex, i.e., closed under mixing,
\item[(sim7)] $\simu{\mathcal{B}}$ is closed under postprocessing.
}
\end{itemize}
The properties (sim6) and (sim7) are straightforward to verify by
noticing the equivalent ways to write mixtures and
postprocessings; see Figs. \ref{figure4} and \ref{figure5}.
Complete proofs are presented in the appendix.

\begin{figure}
\includegraphics[scale=0.16]{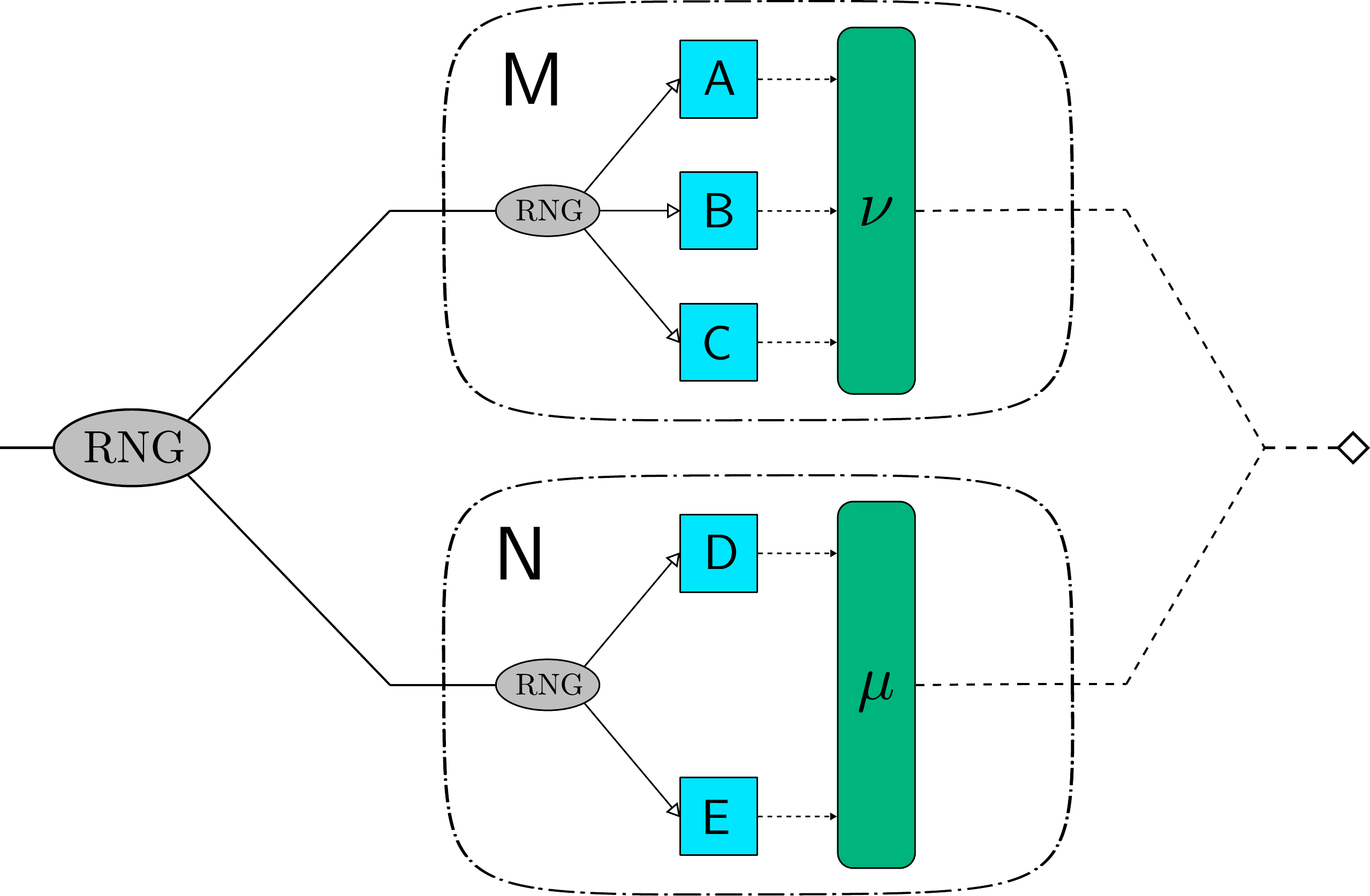} \
\includegraphics[scale=0.29]{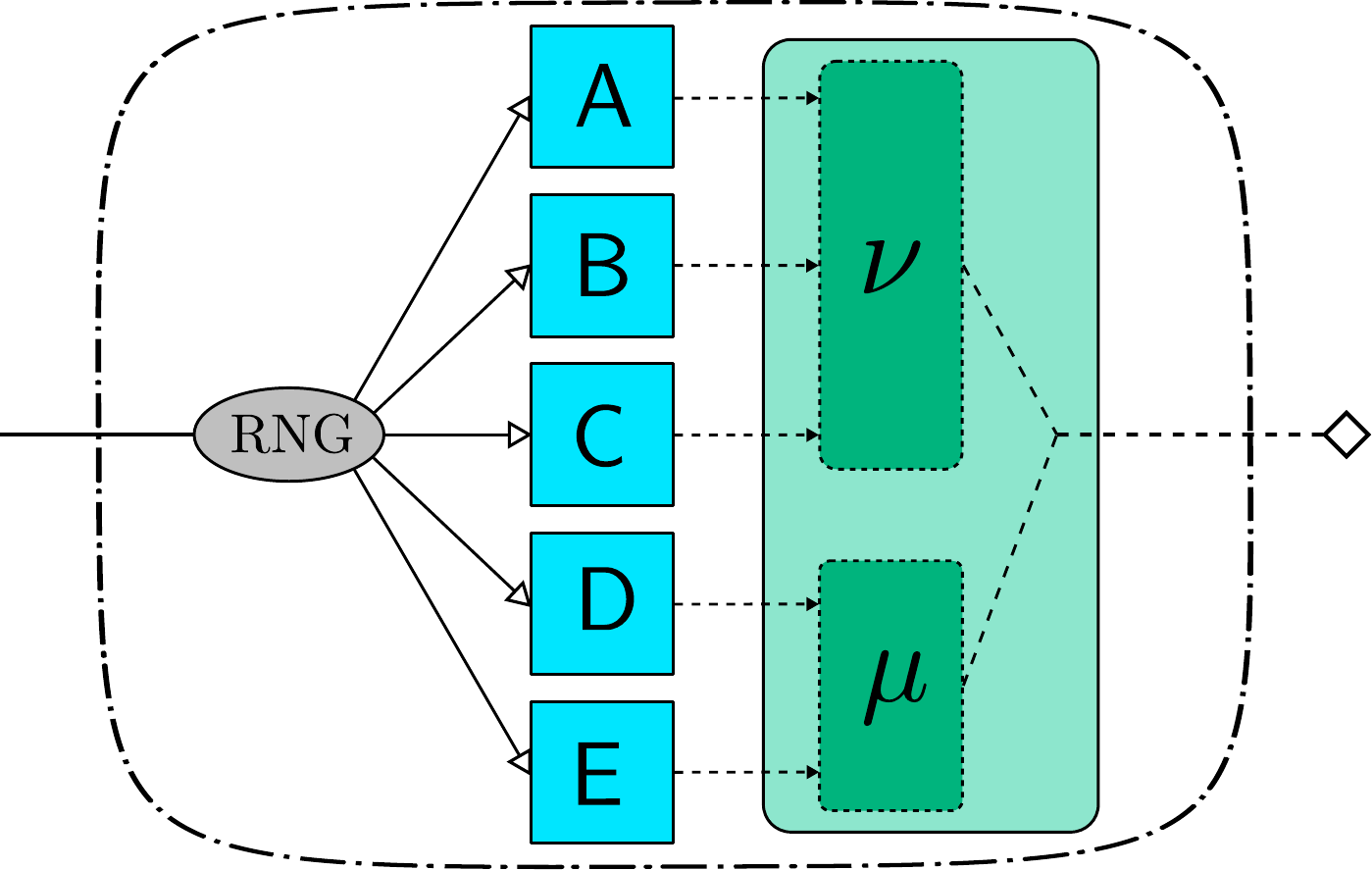}
\caption{\label{figure4} The set of simulable observables is
convex.}
\end{figure}

\begin{figure}
\includegraphics[scale=0.26]{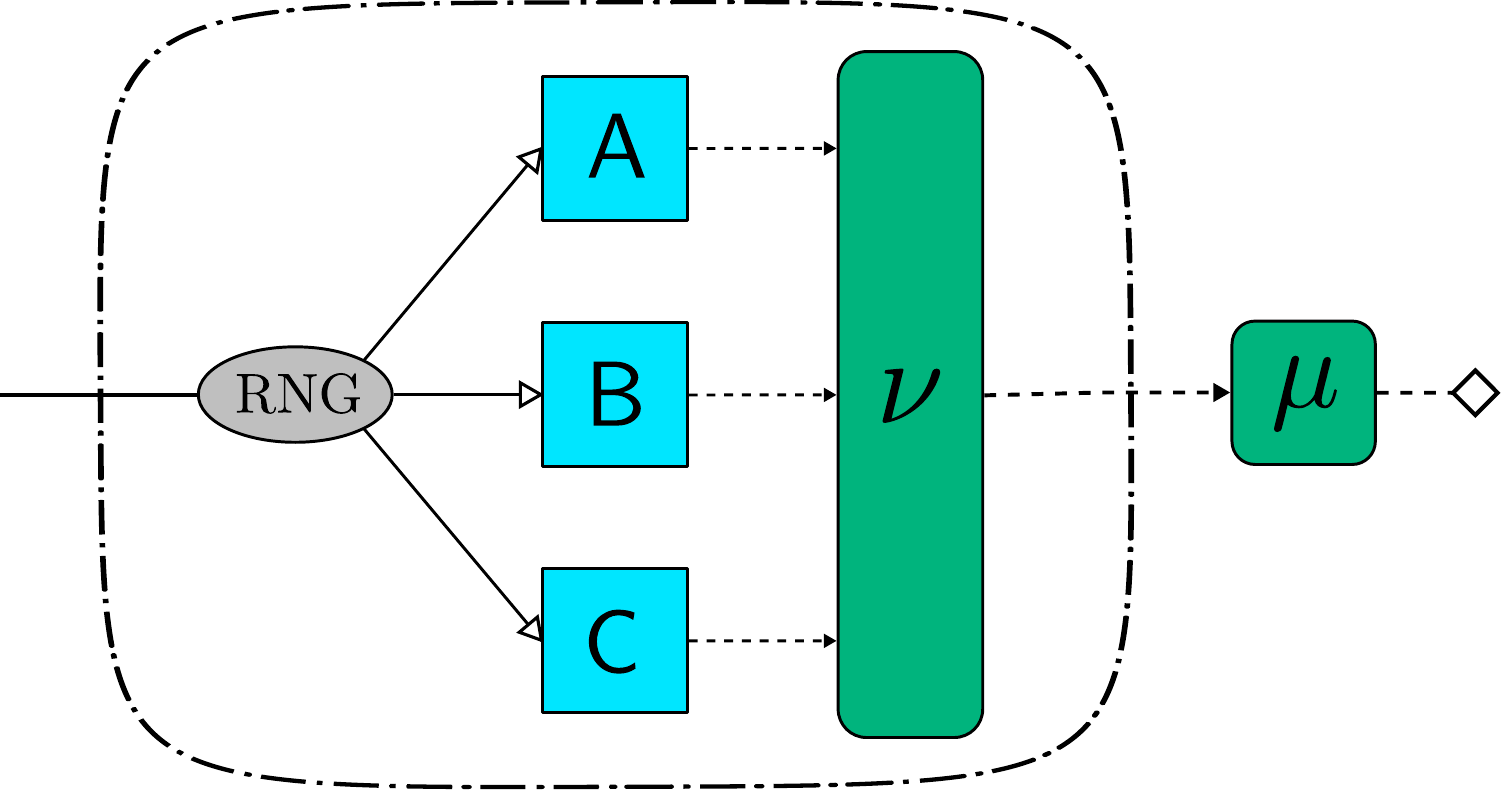} \
\includegraphics[scale=0.26]{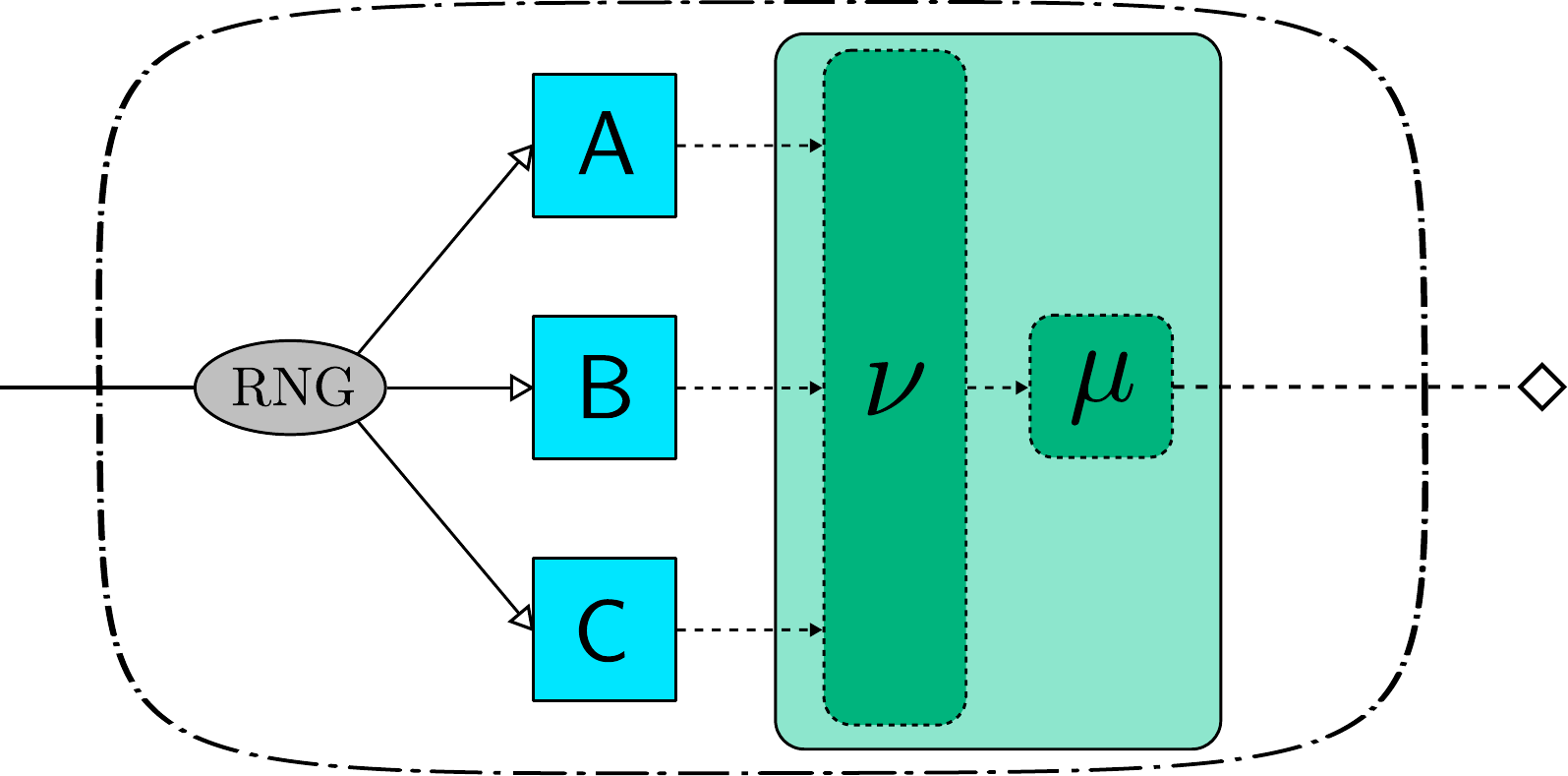}
\caption{\label{figure5} The set of simulable observables is
closed under postprocessings.}
\end{figure}

\subsection{Simulability and noise content} \label{sec:noise}

For an observable $\A$ and a set $\mathcal{N} \subset \mathcal{O}$ of noisy observables, we define \cite{FiHeLe17}
\begin{align*}
w(\A; \mathcal{N}) &= \sup \{ 0 \leq \lambda \leq 1 \ | \ \A = \lambda \N +(1- \lambda) \B  \\
& \quad \quad \quad \quad \quad \quad \quad \rm{\ for \ some\ } \N \in \mathcal{N} \rm{ \ and \ } \B \in \mathcal{O} \}
\end{align*}
as the \emph{noise content of $\A$ with respect to $\mathcal{N}$}. The noise content $w(\A; \mathcal{N})$ thus quantifies how much of $\A$ is in $\mathcal{N}$, which is taken to describe noise in the measurements. Contrary to external noise, i.e., noise that is added to the observables, the noise content gives us the amount of intrinsic noise that is already contained in the observable. The typical choice for the set of noisy observables is $\mathcal{N} = \trivial$, the set of trivial observables.

The noise content satisfies the following two properties \cite{FiHeLe17}:
\begin{itemize}
\item[a)] If $\noise$ is closed under postprocessings, then $w(\nu \circ \A; \noise) \geq w(\A;\noise)$ for all observables $\A$ and postprocessings $\nu$,

\item[b)] If $\noise$ is convex, then $w\left( \sum_i p_i \A^{(i)} ; \noise \right) \geq \sum_i p_i w(\A^{(i)}; \noise)$ for all mixtures of any set of observables $\{\A^{(i)}\}_i \subset \obs$.
\end{itemize}

We can now prove the intuitive result that we cannot simulate a less noisy observable from noisier ones:

\begin{proposition}
Let $\mathcal{B}$ be a set of simulators. If the set of noisy observables $\noise$ is closed under postprocessings and mixing, then any observable in $\simu{\mathcal{B}}$ has a noise content greater or equal than the smallest noise content of its simulating observables in $\mathcal{B}$.
\end{proposition}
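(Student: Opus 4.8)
The plan is to reduce the statement to a routine application of the two structural properties (a) and (b) of the noise content, using the ``mixture of post-processed simulators'' form of the simulation map. First I would fix an observable $\A \in \simu{\mathcal{B}}$. By the definition of simulability, $\A$ is built from a \emph{finite} collection $\B^{(1)}, \ldots, \B^{(m)} \in \mathcal{B}$, and I would rewrite it in the equivalent form of Eq.~\eqref{eq:pp-split}, namely $\A = \sum_{i=1}^m p_i \, (\nu^{(i)} \circ \B^{(i)})$ for some probability distribution $(p_i)_i$ and postprocessings $\nu^{(i)}$. The finiteness of the index set is exactly what makes the quantity $\min_i w(\B^{(i)}; \noise)$ well defined, so it is worth recording explicitly at the outset.

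Next I would apply the two properties in sequence. Since $\noise$ is closed under mixing, hence convex, property (b) gives
\[
w(\A; \noise) = w\left( \sum_i p_i (\nu^{(i)} \circ \B^{(i)}); \noise \right) \geq \sum_i p_i \, w(\nu^{(i)} \circ \B^{(i)}; \noise).
\]
Because $\noise$ is also closed under postprocessings, property (a) applied to each term yields $w(\nu^{(i)} \circ \B^{(i)}; \noise) \geq w(\B^{(i)}; \noise)$ for every $i$. Substituting this termwise bound into the previous inequality produces $w(\A; \noise) \geq \sum_i p_i \, w(\B^{(i)}; \noise)$.

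Finally I would pass from the convex average to the minimum. Writing $w_{\min} = \min_i w(\B^{(i)}; \noise)$, each summand satisfies $w(\B^{(i)}; \noise) \geq w_{\min}$, and since $(p_i)_i$ is a probability distribution the convex combination obeys $\sum_i p_i \, w(\B^{(i)}; \noise) \geq w_{\min} \sum_i p_i = w_{\min}$. Chaining the three estimates gives $w(\A; \noise) \geq w_{\min}$, which is the assertion.

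I do not expect any genuine analytic obstacle here; the proposition is essentially a bookkeeping consequence of (a) and (b). The only points that demand care are (i) invoking the correct equivalent representation of the simulation, i.e.\ the mixture-of-postprocessings form rather than the postprocessing-of-a-mixture form, so that the two properties can be applied cleanly and independently along each branch, and (ii) verifying that the hypothesis ``$\noise$ closed under postprocessings and mixing'' supplies precisely the convexity needed for (b) and the postprocessing-closure needed for (a). If one instead started from the single-postprocessing form $\A = \nu \circ \tilde{\B}$, one would first have to split $\nu$ into the components $\nu^{(i)}$ as in Eq.~\eqref{eq:pp-split}, which is exactly the step that recovers the representation used above.
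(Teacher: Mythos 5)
Your proposal is correct and follows essentially the same route as the paper's own proof: write $\A$ in the mixture-of-postprocessed-simulators form $\A = \sum_i p_i\,(\nu^{(i)} \circ \B^{(i)})$, apply property (b) to the mixture, then property (a) termwise, and finally bound the convex combination by $\min_i w(\B^{(i)};\noise)$. The only difference is presentational—you make the finiteness of the simulator set and the splitting of $\nu$ into the $\nu^{(i)}$ explicit, which the paper leaves implicit.
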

\begin{proof}
Let $\A \in \simu{\mathcal{B}}$ so that
\begin{equation}
\A_x = \sum_{(i,x)} p_i \nu_{(i,y)x} \B^{(i)}_y = \sum_i p_i \left(\nu^{(i)} \circ \B^{(i)} \right)_x
\end{equation}
for some set of simulators $\{\B^{(i)}\}_i \subset \mathcal{B}$, probability distribution $(p_i)_i$ and postprocessing $\nu$. Now from properties a) and b) of the noise content it follows that
\begin{align*}
w(\A; \noise) &= w\left( \sum_i p_i \left(\nu^{(i)} \circ \B^{(i)} \right); \noise \right) \\
& \geq \sum_i p_i w\left( \nu^{(i)} \circ \B^{(i)}; \noise \right) \\
& \geq \sum_i p_i w\left( \B^{(i)} ; \noise \right) \\
& \geq \sum_i p_i \min_k w\left( \B^{(k)} ; \noise \right) \\
& = \min_k w\left( \B^{(k)} ; \noise \right).
\end{align*}

If there is an observable $\B \in \mathcal{B}$ such that $w(\B;\noise) \leq w(\B^{(i)}; \noise)$ for all $i$, then $w(\A; \noise) \geq w(\B;\noise)$.
\end{proof}
 We note that the set of trivial observables $\trivial$ is indeed convex and closed under postprocessings.

\subsection{Simulation irreducible observables}

Clearly, an observable $\A$ can be simulated by a subset
$\mathcal{B}$ whenever $\mathcal{B}$ contains $\A$, or more
generally, if there is $\B \in \mathcal{B}$ such that $\B$ is
postprocessing equivalent to $\A$. Those observables for which
this is the only way that they can be simulated we call simulation
irreducible:

\begin{definition}
An observable $\A$ is \emph{simulation irreducible} if for any subset $\mathcal{B} \subset \obs$, we have $\A \in \simu{\mathcal{B}}$ only if there is $\B \in \mathcal{B}$ such that $\A\leftrightarrow \B$.
\end{definition}

Simulation irreducibility thus means that the only way we can simulate such observable is essentially with the observable itself. We obtain a following characterization for the simulation irreducible observables.

\begin{proposition}\label{prop:sim-irred}
An observable is simulation irreducible if and only if it is postprocessing
clean and postprocessing equivalent to an extreme observable.
\end{proposition}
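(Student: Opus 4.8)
The plan is to prove the two implications separately, exploiting throughout the fact that all three properties involved are invariant under postprocessing equivalence. For simulation irreducibility this follows from the closure of $\simu{\cdot}$ under postprocessing: if $\A\leftrightarrow\A'$ and $\A'\in\simu{\mathcal{B}}$, then $\A\in\simu{\mathcal{B}}$ by (sim7), so an equivalent $\B\in\mathcal{B}$ exists and it is automatically equivalent to $\A'$ as well. Postprocessing cleanness is class invariant by a direct chaining of the relations, and ``equivalent to an extreme observable'' is invariant by definition. Hence I may freely pass to the minimally sufficient representative $\hat{\A}$ of $\A$, whose nonzero effects are pairwise linearly independent and which is essentially unique in its class.

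For the implication ``postprocessing clean and equivalent to an extreme observable $\Rightarrow$ simulation irreducible,'' suppose $\A\leftrightarrow\A'$ with $\A'$ extreme and let $\A\in\simu{\mathcal{B}}$. Since $\A\to\A'$ and $\simu{\mathcal{B}}$ is closed under postprocessing, also $\A'\in\simu{\mathcal{B}}$, so by the mixture-of-postprocessings form I can write $\A'=\sum_i p_i(\nu^{(i)}\circ\B^{(i)})$ with $\B^{(i)}\in\mathcal{B}$ and $p_i>0$, all terms lying in the outcome set of $\A'$. Applying Definition~\ref{def:extreme} iteratively to this convex combination forces every term $\nu^{(i)}\circ\B^{(i)}$ to equal $\A'$, so $\B^{(i)}\to\A'$ for each $i$. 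Because $\A'$ is postprocessing clean (cleanness being class invariant), $\B^{(i)}\to\A'$ implies $\A'\to\B^{(i)}$, hence $\A'\leftrightarrow\B^{(i)}$ and therefore $\A\leftrightarrow\B^{(i)}$ with $\B^{(i)}\in\mathcal{B}$. This is precisely simulation irreducibility.

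For the converse I assume $\A$ simulation irreducible. Cleanness is immediate: for any $\B$ with $\B\to\A$, the single-observable scheme shows $\A\in\simu{\{\B\}}$, so irreducibility gives $\A\leftrightarrow\B$ and in particular $\A\to\B$. For extremality I pass to $\hat{\A}$ (still simulation irreducible) and argue by contradiction. If $\hat{\A}$ is not extreme, then by Definition~\ref{def:extreme} there are $\B\neq\C$ in $\obs_X$ and $\lambda\in(0,1)$ with $\hat{\A}=\lambda\B+(1-\lambda)\C$, and neither endpoint equals $\hat{\A}$. Setting $d=\C-\B\neq0$ (so $\sum_x d_x=0$), the segment $\B^{\pm}_t:=\hat{\A}\pm t d$ lies in $\obs_X$ for all small $t>0$, and $\hat{\A}=\tfrac12\B^{+}_t+\tfrac12\B^{-}_t$ exhibits $\hat{\A}\in\simu{\{\B^{+}_t,\B^{-}_t\}}$. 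Irreducibility would then force $\hat{\A}\leftrightarrow\B^{+}_t$ or $\hat{\A}\leftrightarrow\B^{-}_t$, and the goal is to choose $t$ defeating both. (One could instead invoke Proposition~\ref{prop:p-p-clean-extreme} to recast non-extremality as linear dependence of the effects, but the segment argument uses non-extremality directly.)

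The main obstacle is exactly this final step, ensuring the perturbed observables leave the equivalence class of $\hat{\A}$. Here I would use minimal sufficiency: for small $t$ the effects $\hat{\A}_x+t d_x$ stay nonzero and pairwise linearly independent, so $\B^{\pm}_t$ is itself minimally sufficient, and by the essential uniqueness of minimally sufficient representatives $\B^{+}_t\leftrightarrow\hat{\A}$ would require $\hat{\A}_x+t d_x=\hat{\A}_{\pi(x)}$ for some permutation $\pi$. For each fixed $\pi$ this is a single linear condition on $t$ with at most one solution (the case $\pi=\mathrm{id}$ forcing $t=0$), so only finitely many values of $t$ are excluded for either sign; a small $t$ avoiding this finite set yields $\hat{\A}\not\leftrightarrow\B^{\pm}_t$, the desired contradiction. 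I expect the genericity bookkeeping, together with the routine verification that nonzero-ness and pairwise independence survive small perturbations, to be the only delicate points, the remainder reducing to the closure properties (sim6)--(sim7) and the uniqueness of the minimally sufficient representative.
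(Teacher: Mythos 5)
Your proof is correct, and for the forward implication and the cleanness half of the converse it essentially coincides with the paper's own argument: you transport the simulation to the extreme representative using invariance of all three properties under postprocessing equivalence, where the paper instead composes explicitly with the postprocessings $\mu$ and $\eta$ witnessing the equivalence; the substance is identical. The genuine divergence is in the extremality half of the converse. The paper handles it in two lines: a non-extreme $\A$ is a finite convex combination of extreme observables of $\obs_X$ (Minkowski's theorem for the compact convex set $\obs_X$ in finite dimension), that combination is itself a simulation, and irreducibility at once gives $\A\leftrightarrow\B^{(k)}$ with $\B^{(k)}$ extreme --- which suffices, because the proposition asks only for equivalence to an extreme observable, not extremality of $\A$ or of $\hat{\A}$. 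You instead prove the stronger claim that $\hat{\A}$ is itself extreme, by perturbing in both directions along the segment of a nontrivial decomposition and excluding $\B^{\pm}_t\leftrightarrow\hat{\A}$ for all but finitely many $t$ via the essential uniqueness of minimally sufficient representatives. Two remarks on your version: first, the assertion that $\B^{\pm}_t\in\obs_X$ for small $t$ deserves its one-line justification (for $t\leq\min(\lambda,1-\lambda)$ these are convex combinations of $\B$ and $\C$); it is true, but not because positivity is an open condition --- it is not. Second, both routes rest on one stated-but-unproved ingredient: yours on the uniqueness-up-to-relabelling of pairwise linearly independent representatives (which the paper quotes from the quantum case), the paper's on the extreme-point decomposition of $\obs_X$; so the two are on comparable footing in rigor. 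What your route buys is that it bypasses Minkowski's theorem and directly delivers the extremality of $\hat{\A}$, i.e., the harder half of Corollary~\ref{cor:irr-indec-extreme}, which the paper only obtains afterwards by combining this proposition with Propositions~\ref{prop:pp-clean}, \ref{prop:nonzero-extreme}, and \ref{prop:p-p-clean-extreme}; the price is the genericity bookkeeping over permutations, which the paper's decomposition argument avoids entirely.
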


\begin{proof}
Let $\A$ be postprocessing clean and postprocessing equivalent to an extreme observable $\widetilde{\A}$ so that there exist postprocessings $\mu$ and $\eta$ such that $\widetilde{\A} = \mu \circ \A$ and $\A = \eta \circ \widetilde{\A}$. Suppose that $\A \in \simu{\mathcal{B}}$ for some set of simulators $\mathcal{B}$, i.e., there exists a probability distribution $(p_i)_i$ and postprocessings $\nu^{(i)}$ such that $\A_y = \sum_{i,x} p_i \nu^{(i)}_{xy} \B_x^{(i)} $ for some $\B^{(i)}$'s in $\mathcal{B}$.
We can assume that $p_i \neq 0$ for every $i$ as if this is not the case, we simply drop those terms away.
We can now write
\begin{align*}
\widetilde{\A}_z &= \sum_y \mu_{yz} \A_y = \sum_{i,x,y} p_i   \nu^{(i)}_{xy}\mu_{yz} \B_x^{(i)}\\
&= \sum_i p_i \sum_x \left( \sum_y  \nu^{(i)}_{xy} \mu_{yz} \right) \B^{(i)}_x \\
&= \sum_i p_i \sum_x \left( \mu \circ \nu^{(i)} \right)_{xz} \B^{(i)}_x \\
&= \sum_i p_i \left( \mu \circ \nu^{(i)} \circ \B^{(i)}\right)_z
\end{align*}
for all outcomes $z$. From the extremality of $\widetilde{\A}$ it follows that $\mu \circ \nu^{(i)} \circ \B^{(i)}=\widetilde{\A}$ for all $i$, and therefore $\A = \eta \circ \mu \circ \nu^{(i)} \circ \B^{(i)}$ for all $i$. Since $\A$ is postprocessing clean, this means that $\B^{(i)}\leftrightarrow \A$ for all $i$. Therefore, $\A$ is simulation irreducible.

Now let $\A$ be a simulation irreducible observable.
First, $\A$ has to be postprocessing clean; otherwise there exists an observable $\B$ such that $\B$ is not a postprocessing of $\A$ but $\A \in \simu{\B}$.
Secondly, if $\A$ is extreme, we are done, so let us consider the case when $\A$ is not extreme.
Then there exists some set of extreme observables $\mathcal{B}= \{\B^{(i)}\}_i$ such that $\A$ has a convex decomposition $\A = \sum_i \lambda_i \B^{(i)}$.
In particular, $\A \in \simu{\mathcal{B}}$ and since $\A$ is simulation irreducible, there exists some $k$ such that $\A\leftrightarrow \B^{(k)}$, where now $\B^{(k)}$ is extreme.
\end{proof}

We see that both postprocessing cleanness and postprocessing equivalence to an extreme observable are truly needed for simulation irreducibility.

\begin{example}
\emph{(postprocessing clean but not simulation irreducible quantum observable.)}
There are postprocessing clean quantum observables that are not simulation irreducible. For instance, the four-outcome qubit observable $\A$, related to the POVM $A(\pm 1)= \tfrac{1}{4} (\id \pm \sigma_x)$ and $A(\pm 2)= \tfrac{1}{4} (\id \pm \sigma_y)$, consists of linearly dependent but pairwisely linearly independent effects.
Therefore, $\A$ is not simulation irreducible even though it is postprocessing clean.
In fact, $\A$ can be obtained from two dichotomic observables $\X$ and $\Y$ as a mixture, where the corresponding POVMs are $X(\pm 1) = \tfrac{1}{2} ( \id \pm \sigma_x)$ and $Y(\pm 1) = \tfrac{1}{2} ( \id \pm \sigma_y)$, respectively.
\end{example}

We recall from the end of Sec. \ref{sec:p-p} that for each
observable $\A$, we can form an observable
$\hat{\A}\leftrightarrow\A$ such that the effects of $\hat{\A}$
are pairwisely linearly independent. By using the previous
propositions we find a more practical characterization of
simulation irreducibility.

\begin{corollary} \label{cor:irr-indec-extreme}
An observable $\A$ is simulation irreducible if and only if
$\hat{\A}$ is indecomposable and extreme, i.e., it consists of
linearly independent indecomposable effects.
\end{corollary}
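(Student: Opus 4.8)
The plan is to leverage Proposition~\ref{prop:sim-irred} together with the two translations that postprocessing cleanness equals indecomposability (Proposition~\ref{prop:pp-clean}) and that, for a postprocessing clean observable, extremality equals linear independence of its nonzero effects (Proposition~\ref{prop:p-p-clean-extreme}). First I would observe that simulation irreducibility is really a property of the whole postprocessing equivalence class. Since $\A\leftrightarrow\hat{\A}$, the two observables simulate each other, so (sim1)--(sim3) give $\simu{\A}=\simu{\hat{\A}}$; hence $\A\in\simu{\mathcal B}$ iff $\hat{\A}\in\simu{\mathcal B}$ for every $\mathcal B$, and by transitivity of $\leftrightarrow$ there is $\B\in\mathcal B$ with $\A\leftrightarrow\B$ iff there is one with $\hat{\A}\leftrightarrow\B$. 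Thus $\A$ is simulation irreducible iff $\hat{\A}$ is, and it suffices to characterize simulation irreducibility of $\hat{\A}$.

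For the implication ``$\hat{\A}$ indecomposable and extreme $\Rightarrow$ $\A$ simulation irreducible,'' indecomposability makes $\hat{\A}$ postprocessing clean by Proposition~\ref{prop:pp-clean}, and $\hat{\A}$ is trivially postprocessing equivalent to the extreme observable $\hat{\A}$ itself; Proposition~\ref{prop:sim-irred} then yields simulation irreducibility of $\hat{\A}$, and hence of $\A$. Conversely, if $\A$ (equivalently $\hat{\A}$) is simulation irreducible, Proposition~\ref{prop:sim-irred} tells us that $\hat{\A}$ is postprocessing clean---hence indecomposable by Proposition~\ref{prop:pp-clean}---and postprocessing equivalent to some extreme observable $\widetilde{\A}$.

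The remaining and main step is to upgrade ``postprocessing equivalent to an extreme observable'' to ``$\hat{\A}$ is itself extreme.'' Here I would invoke the essential uniqueness of the minimally sufficient representative established in the earlier Example: by Proposition~\ref{prop:nonzero-extreme} the nonzero effects of $\widetilde{\A}$ are linearly independent, in particular pairwise linearly independent, so after discarding its zero effects $\widetilde{\A}$ is a minimally sufficient representative of the common equivalence class. Since $\hat{\A}$ is by construction another minimally sufficient representative of that same class, the two must coincide up to a bijective relabelling of outcomes; thus the nonzero effects of $\hat{\A}$ are exactly those of $\widetilde{\A}$ and are linearly independent. As $\hat{\A}$ is postprocessing clean with linearly independent nonzero effects, Proposition~\ref{prop:p-p-clean-extreme} shows $\hat{\A}$ is extreme, closing the equivalence. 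The final ``i.e.'' clause is then immediate: by Proposition~\ref{prop:pp-clean} indecomposability means all nonzero effects of $\hat{\A}$ are indecomposable, and by Proposition~\ref{prop:p-p-clean-extreme} extremality of a postprocessing clean observable means these same effects are linearly independent.

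I expect the step using uniqueness of the minimally sufficient representative to be the crux. Pairwise linear independence of the effects of $\hat{\A}$ by itself does \emph{not} force extremality---the four-outcome qubit observable of the preceding Example is postprocessing clean and pairwise linearly independent yet not extreme---so extremality genuinely has to be transported from $\widetilde{\A}$, and it is precisely the uniqueness result that pins $\hat{\A}$ to $\widetilde{\A}$ and supplies full (not merely pairwise) linear independence.
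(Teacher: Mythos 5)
Your proof is correct and follows essentially the same route as the paper's: both directions rest on Propositions~\ref{prop:pp-clean}, \ref{prop:nonzero-extreme}, \ref{prop:p-p-clean-extreme}, and \ref{prop:sim-irred}, and the crux---transporting extremality from the extreme observable $\widetilde{\A}$ to $\hat{\A}$ via the essential uniqueness of the minimally sufficient representative---is exactly the paper's argument. Your only organizational difference is making explicit, through the closure properties, that simulation irreducibility is invariant under postprocessing equivalence, which the paper handles implicitly by applying Proposition~\ref{prop:sim-irred} directly to $\A$ rather than to $\hat{\A}$.
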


\begin{proof}
Firstly, let $\A$ be simulation irreducible. By
Proposition~\ref{prop:sim-irred}, $\A$ is postprocessing clean and
postprocessing equivalent to an extreme observable $\B$. From
Proposition~\ref{prop:pp-clean}, we see that $\A$ is indecomposable from
which it follows that also the pairwise linearly independent
observable $\hat{\A}$ is indecomposable. What remains to show is
that the effects of $\hat{\A}$ are actually linearly independent.
Since $\B$ is extreme, by Proposition~\ref{prop:nonzero-extreme} the
nonzero effects of $\B$ are linearly independent. As $\hat{\B}$ is
formed by combining the pairwise linearly dependent effects of $\B$, we
have that $\hat{\B} = \B$ (without the possible zero effects of
$\B$). Thus, $\hat{\B}$ is extreme. Since $\hat{\B} = \B \leftrightarrow \A$
is pairwise linearly independent, we have by the uniqueness of
$\hat{\A}$ that $\hat{\B}$ is a bijective relabelling of
$\hat{\A}$. Hence, $\hat{\A}$ is extreme. Because $\hat{\A}$ is
also postprocessing clean, by Proposition~\ref{prop:p-p-clean-extreme}
it consists of linearly independent effects.

Second, suppose that $\hat{\A}$ consists of linearly independent
indecomposable effects. By Proposition~\ref{prop:pp-clean} $\hat{\A}$ is
postprocessing clean so that by taking into account that the
effects of $\hat{\A}$ are linearly independent we have by
Proposition~\ref{prop:p-p-clean-extreme} that $\hat{\A}$ is extreme.
Since also $\A$ is postprocessing clean and
$\hat{\A}\leftrightarrow\A$ is extreme, from
Proposition~\ref{prop:sim-irred} we conclude that $\A$ is simulation
irreducible.
\end{proof}

We would expect that an observable that is not simulation irreducible is reducible in the sense that it can be simulated by some simulation irreducible observables. Indeed, we can show that this is the case and even holds with a finite number of simulators \cite{HaHePe12}.

\begin{proposition}\label{prop:finite}
For every observable $\A$, there is a finite collection $\mathcal{B}^\A$ of simulation irreducible observables such that $\A \in \simu{\mathcal{B}^\A}$.
\end{proposition}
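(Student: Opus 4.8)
The plan is to split the task into two independent reductions and then glue the results using the closure properties of $\simu{\cdot}$. First I would make all effects of $\A$ indecomposable, obtaining a finer observable $\B$ of which $\A$ is a postprocessing; then I would write the (now indecomposable) observable $\B$ as a finite mixture of observables whose nonzero effects are linearly independent. By Corollary~\ref{cor:irr-indec-extreme} (equivalently, by Propositions~\ref{prop:sim-irred} and \ref{prop:p-p-clean-extreme}) an indecomposable observable with linearly independent effects is exactly a simulation irreducible observable, so these two reductions produce precisely the simulators required, and properties (sim6) and (sim7) will assemble the final statement.

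For the first reduction I would reuse the construction in the first half of the proof of Proposition~\ref{prop:pp-clean}: writing each nonzero effect $\A_x=\sum_{i=1}^{r_x} a^{(x)}_i$ as a finite sum of indecomposable effects (possible in any GPT by Ref.~\cite{KiNuIm10}) and collecting them into one observable $\B$ with outcome set $\{1,\ldots,r\}\times\Omega$ yields an indecomposable $\B$ with $\A=\nu\circ\B$ for the coarse-graining $\nu_{(i,x')x}=\delta_{x'x}$. Since $\simu{\mathcal{B}^\A}$ is closed under postprocessing by (sim7), it then suffices to simulate the indecomposable observable $\B$ by finitely many simulation irreducible observables.

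The main work is the second reduction: showing that an indecomposable observable $\B$ with nonzero effects $\B_1,\ldots,\B_n$ is a finite convex mixture of simulation irreducible observables. The idea is to rescale each effect along its own extreme ray and consider
\begin{equation*}
P=\Big\{ t\in\real^n \,\Big|\, t_x\geq 0,\ \textstyle\sum_{x} t_x\B_x=u \Big\}.
\end{equation*}
Each $t\in P$ defines an observable with effects $t_x\B_x$, which is again indecomposable since every nonzero $t_x\B_x$ lies on the same extreme ray as $\B_x$. Because $\sum_x t_x\B_x=u$ forces $t_x\B_x\leq u$ and each nonzero $\B_x$ is strictly positive on some state, every $t_x$ is bounded, so $P$ is a polytope and $\B$ is the point $t=(1,\ldots,1)$. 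As $P$ is a polytope, $\B$ is a finite convex combination of its vertices. At a vertex the effects indexed by the support $\{x:t_x>0\}$ are linearly independent, for otherwise a nonzero dependence supported there would let one move by $\pm\epsilon$ inside $P$, contradicting extremality of the vertex. Hence each vertex yields an indecomposable observable with linearly independent nonzero effects; by Proposition~\ref{prop:pp-clean} it is postprocessing clean, by Proposition~\ref{prop:p-p-clean-extreme} it is extreme, and therefore simulation irreducible by Proposition~\ref{prop:sim-irred}.

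Collecting the finitely many vertex observables into $\mathcal{B}^\A$ gives $\B\in\simu{\mathcal{B}^\A}$ by convexity (sim6), and then $\A=\nu\circ\B\in\simu{\mathcal{B}^\A}$ by (sim7), completing the argument. The step I expect to demand the most care is the polytope argument: one must confirm that $P$ is genuinely bounded (using positivity of each nonzero $\B_x$ to bound its coefficient) and, above all, that the support effects at a vertex are linearly independent, since this is exactly where extremality of the observable is forced. The remainder is bookkeeping with the already-established closure properties and the extremality characterizations of Propositions~\ref{prop:p-p-clean-extreme} and \ref{prop:sim-irred}.
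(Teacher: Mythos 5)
Your proof is correct, and its first half coincides with the paper's: refining $\A$ into an indecomposable observable $\B$ by decomposing each effect into indecomposable ones, so that $\A=\nu\circ\B$, is exactly the paper's opening step. Where you genuinely diverge is in decomposing the indecomposable $\B$ into simulation irreducible observables. The paper first passes to the pairwise linearly independent representative $\hat{\B}$ and then runs an explicit recursion: given a dependence $\sum_i \beta_i \hat{\B}_i=o$ (which necessarily has both positive and negative coefficients), with $\kappa_+=\max_i\beta_i>0$ and $\kappa_-=\min_i\beta_i<0$ it writes $\hat{\B}_i=\lambda\C_i+(1-\lambda)\D_i$ where $\C_i=(1-\beta_i/\kappa_+)\hat{\B}_i$, $\D_i=(1-\beta_i/\kappa_-)\hat{\B}_i$ and $\lambda=\kappa_+/(\kappa_+-\kappa_-)$; each of $\C,\D$ has strictly fewer nonzero outcomes, so the recursion terminates with indecomposable observables having linearly independent effects. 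You instead form the polytope $P$ of nonnegative ray-rescalings $t$ with $\sum_x t_x\B_x=u$, identify $\B$ with the point $(1,\ldots,1)$, and invoke the standard fact that a bounded polyhedron is the convex hull of its finitely many vertices, characterizing vertices by linear independence of the support effects. These are two implementations of the same geometric fact: the paper's $\C$ and $\D$ are precisely the points of your $P$ reached from $(1,\ldots,1)$ by moving along $\mp\beta$ until a coordinate vanishes, so its recursion is the constructive proof of the vertex decomposition you quote. Your route is more global and slightly cleaner, since there is no need to pass to $\hat{\B}$ first (a vertex's support can never contain two proportional effects), but it must separately verify boundedness of $P$, which you correctly supply via strict positivity of each nonzero effect on some state, and it leans on a convex-geometry theorem that the paper's self-contained, outcome-counting recursion avoids. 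Your identification of the vertex observables as simulation irreducible via Propositions~\ref{prop:pp-clean}, \ref{prop:p-p-clean-extreme} and \ref{prop:sim-irred}, and the final assembly via (sim1), (sim6) and (sim7), are sound.
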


\begin{proof}
Let $\A$ be an observable with an outcome set $\Omega$. Each
effect $\A_x$ can be decomposed into indecomposable effects
$a^{(x)}_i$ such that $\A_x = \sum_{i=1}^{r_x} a^{(x)}_i$ for some
finite $r_x$. As in the proof of Proposition~\ref{prop:pp-clean}, we
denote $r = \max_{x \in \Omega} r_x$ and define an indecomposable
observable $\B$ with an outcome set $\{1,\ldots,r\} \times \Omega$
by $\B_{(i,x)} = a^{(x)}_i$ if $i \leq r_x$ and $\B_{(i,x)}=o$
otherwise. Let us consider the pairwise linearly independent
observable $\hat{\B}$. Observable $\A$ is then a postprocessing
of $\hat{\B}$ (as it is of $\B$).

If $\hat{\B}$ is extreme, we are done. Otherwise, $\hat{\B}$ is
not extreme so its effects are linearly dependent,
i.e., there exist numbers $\beta_i \in \real$ such that $ \sum_{i}
\beta_i \hat{\B}_i=o$ with $\sum_i | \beta_i | >0$. Note that we
must have both positive and negative $\beta_i$'s.

We denote $\kappa_+ = \max_i \beta_i > 0$ and $\kappa_- = \min_ i \beta_i <0$ and consider two observables $\C$ and $\D$ defined as follows:
\begin{eqnarray}
  \C_i = (1 - \beta_i/\kappa_+) \hat{\B}_i  \\
  \D_i = (1 - \beta_i/\kappa_-) \hat{\B}_i
\end{eqnarray}
for all outcomes $i$. We note that both $\C$ and $\D$ have one nonzero outcome less than $\hat{\B}$ since for some indices $j$ and $k$ we have that $\beta_j = \kappa_+$ and $\beta_k = \kappa_-$ so that $\C_j = \D_k =o$. Since the effects of $\hat{\B}_i$ are indecomposable, the observables $\C$ and $\D$ are also indecomposable. By setting $\lambda = \kappa_+/(\kappa_+-\kappa_-) $ we find that
\begin{equation}
\hat{\B}_i = \lambda \C_i + (1-\lambda)\D_i
\end{equation}
for all $i$.

Thus, $\hat{\B}$ can be expressed as a mixture of two indecomposable observables with one less nonzero outcome. If the nonzero effects of $\C$ and $\D$ are still linearly dependent we continue this procedure until we eventually have reduced the outcomes with finite steps in such a way that the resulting observables, denoted by the set $\mathcal{B}^{\A}$, have linearly independent effects. Since the indecomposability is preserved over the procedure, the observables in $\mathcal{B}^{\A}$ are simulation irreducible.
\end{proof}

\begin{example}
\emph{(Simulation irreducible quantum observables.)} As explained
before, a quantum observable $\A$ is postprocessing clean if and
only if each operator $A(x)$ is rank-1. To check if such an
observable is simulation irreducible, we can construct a minimally
sufficient representative $\hat{\A}$ of the postprocessing
equivalence class of $\A$ as explained in Sec. \ref{sec:p-p}
and then check the linear independence of the effects of
$\hat{\A}$. In $d$-dimensional quantum theory $\quant_d$, the
maximal number of linearly independent operators is $d^2$. For any
integer $d,\ldots,d^2$, one can construct an extreme
postprocessing clean observable \cite{HaHePe12}. Further, two
POVMs $A$ and $B$ consisting of rank-1 operators are seen to be
postprocessing equivalent if and only if the set of ranges
$\cup_x \{ ran(A(x))\}$ and $\cup_y \{ ran(B(y))\}$ are the same.
There is therefore a continuum of postprocessing inequivalent
simulation irreducible observables in $\quant_d$ for any $d\geq
2$.
\end{example}

\section{Limitation on the number of observables}

\subsection{Minimal simulation number}

A set of observables $\mathcal{A}$ is \emph{compatible} if there
exists an observable $\G$ such that every observable in
$\mathcal{A}$ can be post-processed from $\G$. Thus, if
$\mathcal{A}= \{\A^{(1)}, \ldots,\A^{(m)}\}$ is a collection of
$m$ observables with outcome sets $X_1, \ldots, X_m$, then
$\mathcal{A}$ is compatible if there exists an observable $\G$
with an outcome set $Y$ and postprocessings $\nu^{(i)}: Y \to X_i$, $i=1, \ldots,m$, such that
\begin{equation}
\A^{(i)} = \nu^{(i)} \circ \G
\end{equation}
for all $i=1, \ldots,m$. This means that by measuring only $\G$  we can implement a measurement of any observable in $\mathcal{A}$ just by choosing a suitable postprocessing.

As explained in Ref. \cite{GuBaCuAc17}, simulability can be seen as an
extension of compatibility. In fact, if we consider an observable
$\G$ and the set of $\G$-simulable observables $\simu{\G}$, we see
that every simulation in $\simu{\G}$ comprises mixing a
single observable $\G$ so that by reducing the multiplicity the
mixing becomes trivial. Then $\simu{\G}$ is seen to be just the
set of postprocessings of $\G$, and so $\simu{\G}$ is a
compatible set of observables and every subset $\mathcal{A}
\subseteq \simu{\G}$ is compatible. On the other hand, if there is
a compatible set $\mathcal{A}$ such that every observable can be
post-processed from $\G$, then clearly $\mathcal{A} \subseteq
\simu{\G}$.

If a subset $\mathcal{A}$ is not compatible, then there is no
single observable $\G$ such that $\mathcal{A} \subseteq
\simu{\G}$. But we can still search for the minimal collection of
simulators that can produce $\mathcal{A}$. This leads to the
following definition.

\begin{definition}
For a subset $\mathcal{A}\subseteq\obs$, we denote by
$\smin{\mathcal{A}}$ the minimal number of observables
$\B^{(1)},\ldots,\B^{(n)}$, if they exist, such that
$\mathcal{A}\subseteq\simu{\B^{(1)},\ldots,\B^{(n)}}$. Otherwise
we denote $\smin{\mathcal{A}}=\infty$. We call
$\smin{\mathcal{A}}$ \emph{the minimal simulation number for
$\mathcal{A}$}.
\end{definition}

Let us consider a finite set
$\mathcal{A}=\{\A^{(1)},\ldots,\A^{(m)}\}\subset\obs$. Clearly,
$\smin{\A^{(1)},\ldots,\A^{(m)}}\leq m$. Further, if $k$
observables among $\A^{(1)},\ldots,\A^{(m)}$ are compatible, then
$\smin{\A^{(1)},\ldots,\A^{(m)}}\leq m-k+1$. This indicates that
the hypergraph structure of the compatibility relation of the set
$\mathcal{A}$, as defined in Ref. \cite{KuHeFr14}, relates to
$\smin{\mathcal{A}}$; by identifying the largest subset of
compatible observables we get an upper bound for
$\smin{\mathcal{A}}$. This connection is, however, only in one
direction, as observed in Ref. \cite{GuBaCuAc17}. Namely, there exists a
set $\{\A,\B,\C\}$ of three quantum observables such that no pair
is compatible, but still $\smin{\A,\B,\C}=2$. The following
example is slightly different from Example 1 in Ref. \cite{GuBaCuAc17},
which consisted of four observables.

\begin{example}
\emph{(There exist three pairwisely incompatible quantum
observables $\A,\B,\C$ such that $\smin{\A,\B,\C}=2$.)} We denote
$A(\pm)= \half (\id \pm \sigma_x)$, $B(\pm) = \half (\id \pm
\sigma_y)$, and $C_t(\pm) = \half (\id \pm t
(\sigma_x+\sigma_y)/\sqrt{2})$, where $0<t<1$ is a parameter to be
specified. Since $\A$ (resp. $\B$) consists of projections, any
observable compatible with it must commute with it (see, e.g.,
Ref. \cite{HeReSt08}). Hence, $\C_t$ is incompatible with both
for any $0 < t \leq 1$. We clearly have $\A,\B \in \simu{\A,\B}$,
and we also have $\C_t \in \simu{\A,\B}$ whenever $t \leq
1/\sqrt{2}$. Namely, by taking the equal mixture of $A$ and $B$ we
get a POVM $C_{1/2}(\pm)=\half (\id \pm (\sigma_x+\sigma_y)/2)$.
By using a postprocessing matrix
$$
\half \left(\begin{array}{cc}1+\sqrt{2} t & 1-\sqrt{2} t \\1-\sqrt{2} t & 1+\sqrt{2} t\end{array}\right)
$$
we get $C_t$ from $C_{1/2}$ for any $t \leq 1/\sqrt{2}$.
(The fact $\C\notin\simu{\A,\B}$ for $t>1/\sqrt{2}$ will be shown in Example \ref{ex:qubit}.)
\end{example}

\subsection{Connection to $k$-compatibility}

A joint measurement of observables $\A^{(1)},\ldots,\A^{(n)}$ means that we can simultaneously implement their measurements using a single observable, even if only one input system is available.
In the context of quantum observables, this notion has been recently generalized to the case where it is assumed that we have access to $k$ copies \cite{CaHeReScTo16}.
We can then make a collective measurement on a state $s^{\otimes k}$.
After obtaining a measurement outcome, we can make copies of the outcome and post-process each copy in a preferred way.
This leads to the following notion:
 Observables $\A^{(1)},\ldots,\A^{(n)}$ on sets $\Omega_1,\ldots,\Omega_n$, respectively, are \emph{$k$-compatible} if there exists an observable $\G$ with an outcome set $\Omega_0$ acting on the state space $\state^{\otimes k}$,  and stochastic matrices $\nu_1,\ldots,\nu_n$ with $\nu_i : \Omega_i\times\Omega_0\to [0,1]$, such that
\begin{equation}\label{eq:post}
\sum_{y\in\Omega_0}  \nu_i(x_i,y) \G_y(s^{\otimes k}) = \A^{(i)}_{x_i}(s)
\end{equation}
for all $i=1,\ldots,n$,  $x_i\in\Omega_i$, and $s\in\state$.
This definition obviously requires that we have specified the tensor product of two state spaces.

As in the usual case of compatibility, we can restrict to a
special kind of observables and postprocessings when deciding
whether a collection of observables is $k$-compatible.
Namely, suppose that $\Omega$ is the Cartesian product $\Omega=
\Omega_1 \times \cdots \times \Omega_n$ and that $\C$ is an
observable with this outcome set. One particular type of
postprocessing comes from ignoring all but the $i$th component $x_i$ of
a measurement outcome $(x_1,\ldots,x_n)$. This kind of
postprocessing gives the $i$th marginal of $\C$, which we denote
as $\C^{[i]}$, i.e.,
\begin{equation}
\C^{[i]}_{x} = \sum_{\ell \neq i}\sum_{x_\ell}\C_{x_1,\ldots,x_{i-1},x,x_{i+1}\ldots,x_n} \, .
\end{equation}
Suppose there exits an observable $\G$ and stochastic matrices $\nu_1,\ldots,\nu_n$ such that \eqref{eq:post} holds.
We define $\C$ as
\begin{equation}
\C_{x_1,\ldots,x_n} = \sum_{y\in\Omega_0} \nu_1(x_1,y) \ldots  \nu_n(x_n,y) \G_y \, ,
\end{equation}
in which case $\C$ is an observable with the outcome set $\Omega_1\times\ldots\times\Omega_n$ and $\C^{[i]} = \A^{(i)}$.

\begin{proposition}\label{prop:k-comp}
If observables $\A^{(1)},\ldots,\A^{(n)}$ can be simulated by $k$ observables (i.e., $\smin{\A^{(1)},\ldots,\A^{(n)}} \leq k$), then they are $k$-compatible.
\end{proposition}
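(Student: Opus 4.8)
The plan is to take the $k$ simulators that witness $\smin{\A^{(1)},\dots,\A^{(n)}}\le k$ and run them in parallel, one on each of the $k$ copies of the system, thereby producing a single observable $\G$ on $\state^{\otimes k}$ from which every $\A^{(i)}$ is recovered by a suitable postprocessing. First I would unfold the hypothesis. By assumption there are $k$ observables $\B^{(1)},\dots,\B^{(k)}$, which we may take to share a common outcome set $Z$ (extending by zero effects if necessary), such that each $\A^{(i)}\in\simu{\B^{(1)},\dots,\B^{(k)}}$. Writing the simulation \eqref{eq:sim_obs} in its equivalent mixture-of-postprocessings form, this means that for every $i$ there are a probability distribution $(p^{(i)}_j)_{j=1}^k$ and postprocessings $\nu^{(i,j)}\colon Z\to\Omega_i$ with
\begin{equation}
\A^{(i)}_{x_i}=\sum_{j=1}^k p^{(i)}_j\sum_{z\in Z}\nu^{(i,j)}_{z x_i}\,\B^{(j)}_z .
\end{equation}

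Next I would build the joint observable. Take $\Omega_0=Z^k$ and define $\G$ on $\state^{\otimes k}$ by
\begin{equation}
\G_{(z_1,\dots,z_k)}=\B^{(1)}_{z_1}\otimes\cdots\otimes\B^{(k)}_{z_k},
\end{equation}
i.e.\ the observable that measures $\B^{(j)}$ on the $j$th copy. Its normalization $\sum_{(z_1,\dots,z_k)}\G_{(z_1,\dots,z_k)}=u^{\otimes k}$ is immediate after factoring the sum and using $\sum_{z}\B^{(j)}_z=u$ for each $j$. Then I would define the stochastic matrices $\nu_i\colon\Omega_i\times\Omega_0\to[0,1]$ by
\begin{equation}
\nu_i\bigl(x_i,(z_1,\dots,z_k)\bigr)=\sum_{j=1}^k p^{(i)}_j\,\nu^{(i,j)}_{z_j x_i},
\end{equation}
which is indeed stochastic, since $\sum_{x_i}\nu_i(x_i,(z_1,\dots,z_k))=\sum_j p^{(i)}_j\sum_{x_i}\nu^{(i,j)}_{z_j x_i}=\sum_j p^{(i)}_j=1$.

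Finally I would verify \eqref{eq:post}. Applying $\nu_i$ to $\G$ on a product input $s^{\otimes k}$ and summing over $\Omega_0$, the weight $\nu^{(i,j)}_{z_j x_i}$ depends only on the $j$th index; hence for each fixed $j$ the sum over the remaining indices factors and collapses via $\sum_{z_\ell}\B^{(\ell)}_{z_\ell}(s)=1$ for $\ell\neq j$, leaving exactly $\sum_j p^{(i)}_j\sum_{z_j}\nu^{(i,j)}_{z_j x_i}\B^{(j)}_{z_j}(s)=\A^{(i)}_{x_i}(s)$. This is precisely \eqref{eq:post}, so the observables $\A^{(1)},\dots,\A^{(n)}$ are $k$-compatible.

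The only genuinely nontrivial point is the legitimacy of $\G$ as an observable on $\state^{\otimes k}$: one must know that the product effects $\B^{(1)}_{z_1}\otimes\cdots\otimes\B^{(k)}_{z_k}$ are valid effects on the chosen tensor-product state space. The definition of $k$-compatibility already presupposes that such a tensor product has been fixed, and product effects are effects in any admissible (in particular the minimal) tensor product; granting this, every remaining step is routine bookkeeping with stochasticity and normalization.
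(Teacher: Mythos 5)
Your proposal is correct and follows essentially the same route as the paper's proof: the same product observable $\G_{(z_1,\dots,z_k)}=\B^{(1)}_{z_1}\otimes\cdots\otimes\B^{(k)}_{z_k}$ on $\state^{\otimes k}$, the same averaged postprocessings $\nu_i\bigl(x_i,\vec z\bigr)=\sum_j p^{(i)}_j\nu^{(i,j)}_{z_j x_i}$, and the same factor-and-collapse verification using normalization on the copies $\ell\neq j$. The only (welcome) addition is your explicit remark that product effects are valid effects in any admissible tensor product, a point the paper leaves implicit.
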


\begin{proof}
Let $\Lambda_i$ denote the outcome set of an observable $\A^{(i)}$
for all $i=1, \ldots, n$ and let $\B^{(1)},\ldots,\B^{(k)}$ be observables with an outcome set $\Omega$ such that
$\{ \A^{(1)},\ldots,\A^{(n)} \} \subseteq \simu{\{
\B^{(1)},\ldots,\B^{(k)}\}}$. Thus, there exist $n$ probability
distributions $(p^{(i)}_j)_{j=1}^k$, $i=1, \ldots,n$ and $n$
postprocessings $\nu^{(i)}: \{1,\ldots,k\} \times \Omega \to
\Lambda_i$, $i=1, \ldots,n$ such that
\begin{equation}
\A^{(i)}_{y_i} = \sum_{j=1}^k \sum_{x \in \Omega} p^{(i)}_j \nu^{(i)}_{(j,x)y_i} \B^{(j)}_x
\end{equation}
for all $y_i \in \Lambda_i$ and all $i=1, \ldots,n$.

We define an observable $\G$ with an outcome set $\Omega^k$ on $\state^{\otimes k}$ as
\begin{align}
\G_{x_1,\ldots,x_k} = \B^{(1)}_{x_1} \otimes \cdots \otimes \B^{(k)}_{x_k}
\end{align}
for all $(x_1,\ldots,x_k)\in \Omega^k$, and postprocessings $\mu^{(i)}: \Omega^k \to \Lambda_i$ for all $i=1, \ldots,n$ by
\begin{equation}
\mu^{(i)}_{\vec{x} y_i} = \sum_{j=1}^k p^{(i)}_j \nu^{(i)}_{(j,x_j)y_i}
\end{equation}
for all $\vec{x}=(x_1,\ldots,x_k)\in \Omega^n$ and $y_i \in \Lambda_i$. We now see that
\begin{align*}
\sum_{\vec{x} \in \Omega^k} \mu^{(i)}_{\vec{x}y_i} \G_{\vec{x}}(s^{\otimes k}) &= \sum_{x_1 \in \Omega} \cdots \sum_{x_k \in \Omega} \mu^{(i)}_{(x_1, \ldots,x_k)y_i} \prod_{l=1}^k \B^{(l)}_{x_l}(s) \\
&= \sum_{j=1}^k \sum_{x_1 \in \Omega} \cdots \sum_{x_k \in \Omega} p^{(i)}_j \nu^{(i)}_{(j,x_j)y_i} \prod_{l=1}^k \B^{(l)}_{x_l}(s) \\
&= \sum_{j=1}^k \sum_{x_j \in \Omega} p^{(i)}_j \nu^{(i)}_{(j,x_j)y_i} \B^{(j)}_{x_j}(s) \\
&= \A^{(i)}_{y_i}(s)
\end{align*}
for all states $s \in \mathcal{S}$, outcomes $y_i \in \Lambda_i$
and $i=1,\ldots,n$. Hence, the observables
$\A^{(1)},\ldots,\A^{(n-1)}$, and $\A^{(n)}$ are $k$-compatible.
\end{proof}

\begin{example}
\emph{(Triplet of orthogonal qubit observables.)} We denote
$X_t(\pm)= \half (\id \pm t\sigma_x)$, $Y_t(\pm) = \half (\id \pm
t\sigma_y)$, and $Z_t(\pm) = \half (\id \pm t\sigma_z)$, where
$t\in [0,1]$ is a noise parameter. For $t=1$ these observables are
simulation irreducible and therefore $\smin{\X_1,\Y_1,\Z_1}=3$. The
triplet is compatible if and only if $0\leq t \leq 1/\sqrt{3}$, so
for exactly those values $\smin{\X_t,\Y_t,\Z_t}=1$. It was proved in Ref. 
\cite{CaHeReScTo16} that this triplet is $2$-compatible if and
only if $0 \leq t \leq \sqrt{3}/2$;, hence we conclude that
$\smin{\X_t,\Y_t,\Z_t}=2$ for $1/\sqrt{3} < t \leq \sqrt{3}/2$. From
these results, we cannot conclude the minimal simulation number for
values $\sqrt{3}/2<t<1$.
\end{example}

\section{Limitation on the number of outcomes}

\subsection{Effective number of outcomes}

We denote by $\obs_n$ the set of observables with the outcome set $\{1,\ldots,n\}$.

\begin{definition}
An observable $\A$ \emph{has effectively $n$ outcomes} if $n$ is the least number such that $\A$
can be simulated by $\obs_n$. We denote by $\eff_{n}$ the set of
all those observables that have effectively $n$ or less outcomes.
Further, we say that a subset $\obs'\subseteq \obs$ is
\emph{effectively $n$-tomic} if $\obs'\subseteq \eff_n$.
\end{definition}

Clearly, if an observable $\A$ can be simulated by $\obs_n$, then also any postprocessing of $\A$ can be simulated by $\obs_n$.
Further, a mixture of two $n$-tomic observables is at most $n$-tomic.
Therefore, the sets $\eff_{1} \subseteq \eff_{2} \subseteq \cdots $ are convex and closed under postprocessing.
The set $\eff_{1}$ consists exactly of all trivial observables, i.e., observables of the form $\T(x) = t(x) u$.

As explained at the end of Sec. \ref{sec:p-p}, for each
observable $\A$ we can form an observable
$\hat{\A}\leftrightarrow\A$ such that the effects of $\hat{\A}$
are pairwisely linearly independent. It follows from the
construction of $\hat{\A}$ that the number of outcomes of
$\hat{\A}$ is at most the number of outcomes of $\A$. If $\A$ is
simulation irreducible, then the effective number of outcomes of
$\A$ is equal to the number of outcomes of $\hat{\A}$.

By Proposition~\ref{prop:finite}, every observable can be simulated with
simulation irreducible observables. Therefore, the maximal
effective number of outcomes in a given theory can be concluded by
looking at the extreme simulation irreducible observables. For
instance, for a set of quantum observables $\quant_d$ in a
$d$-dimensional quantum theory, the maximal effective number of
outcomes is $d^2$. We will calculate the maximal effective number
of outcomes for some other states spaces in Sec.
\ref{sec:non-quantum}.

\begin{example}(\emph{Informationally complete quantum observables.})
An observable $\A$ is called \emph{informationally complete} if $\A(s_1)\neq \A(s_2)$ for any two states $s_1 \neq s_2$.
A quantum observable $\A$ is informationally compelete if and only if the respective set of POVM elements $\{ A(x): x \in \Omega \}$ spans the vector space $\mathcal{L}_s(\hi)$ of all self-adjoint operators \cite{Busch91}.
It follows that an informationally complete observable on $\quant_d$ has at least $d^2$ outcomes.
However, it is easy to construct an informationally complete observable which is effectively dichotomic.
For this purpose, fix linearly independent operators $B_1,\ldots,B_{d^2}\in\mathcal{L}_s(\hi)$.
For each $j$, we define a dichotomic POVM $A^{(j)}$ as
\begin{equation*}
A^{(j)}(\pm ) = \half (\id \pm B_j/\no{B_j}) \, .
\end{equation*}
The equal mixture of these POVMs is then
\begin{equation*}
A(\pm ,j) = \tfrac{1}{2d^2} (\id \pm B_j/\no{B_j}) \, ,
\end{equation*}
and then the span of the elements of $A$ is clearly $\mathcal{L}_s(\hi)$.
Therefore, the corresponding observable $\A$ is informationally complete but effectively dichotomic.

The mathematical criterion for an observable to be informationally complete is the same in every general probabilistic theory \cite{SiSt92}, and one can show that the previous conclusion is valid in any general probabilistic theory: There exists an informationally complete observable which is effectively dichotomic.
\end{example}

\subsection{Dichotomic observables}

As a particular example, we will take a closer look at dichotomic and effectively dichotomic observables. We will see that in many cases they have a simple geometrical characterization.

\begin{proposition}\label{prop:conv-sim}
Let $\mathcal{B}=\{\B^{(i)}\}_{i=1}^m$ be a collection of $m$
observables with an outcome set $\Omega$. For a dichotomic
observable $\A$ with effects $\A_+$ and $\A_-$ the following
implication holds:
$$
\A_+ \in \mathrm{conv}\left(\{\{\B^{(i)}_x\}_{i,x},o,u\}\right) \quad \Rightarrow \quad \A \in \simu{\mathcal{B}}.
$$
\end{proposition}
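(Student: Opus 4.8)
The plan is to unpack the convexity hypothesis into an explicit simulation scheme. By assumption, $\A_+$ lies in the convex hull of the finite set $\{\B^{(i)}_x\}_{i,x} \cup \{o,u\}$, so I can write
\begin{equation*}
\A_+ = \sum_{i,x} \alpha_{ix} \B^{(i)}_x + \beta\, o + \gamma\, u,
\end{equation*}
where all coefficients $\alpha_{ix}, \beta, \gamma \geq 0$ and $\sum_{i,x} \alpha_{ix} + \beta + \gamma = 1$. Since $o$ contributes nothing, I can drop the $\beta\, o$ term. The remaining task is to realize a dichotomic $\A$ with this $\A_+$ (and $\A_- = u - \A_+$) as a genuine $\mathcal{B}$-simulation, i.e., as a mixture of postprocessed members of $\mathcal{B}$.

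The key device will be to handle the stray $\gamma\, u$ term. First I would observe that $u = \sum_x \B^{(i)}_x$ for \emph{any} single observable $\B^{(i)}$, by the normalization condition $\sum_{x} \B^{(i)}_x = u$. This lets me absorb the $\gamma\, u$ contribution into, say, the first simulator $\B^{(1)}$ via an appropriate postprocessing that routes all of its outcomes to the ``$+$'' outcome with weight $\gamma$. Concretely, for each $i$ I would design a dichotomic postprocessing $\nu^{(i)}\colon \Omega \to \{+,-\}$ so that after mixing with suitable weights $p_i$, the ``$+$'' effect of $\sum_i p_i (\nu^{(i)} \circ \B^{(i)})$ equals exactly $\sum_{i,x}\alpha_{ix}\B^{(i)}_x + \gamma\, u$. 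The natural choice is to set $p_i$ proportional to $\sum_x \alpha_{ix}$ (plus the extra $\gamma$ loaded onto $i=1$), and then let $\nu^{(i)}_{x+} = \alpha_{ix}/(p_i)$ be the conditional probability of reporting ``$+$'' given outcome $x$ of $\B^{(i)}$, with $\nu^{(i)}_{x-} = 1 - \nu^{(i)}_{x+}$.

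The main obstacle is bookkeeping: I must verify that these $\nu^{(i)}_{x+}$ lie in $[0,1]$ and that the weights $p_i$ form a genuine probability distribution summing to $1$. The coefficient $\nu^{(i)}_{x+} = \alpha_{ix}/p_i$ is automatically at most $1$ once $p_i \geq \sum_x \alpha_{ix}$, which is guaranteed by construction, and nonnegativity is immediate. For the weights, the total $\sum_i p_i = \sum_{i,x}\alpha_{ix} + \gamma$, which need not equal $1$ because the discarded $\beta\, o$ term carried the missing mass. I would therefore rescale, dedicating the leftover probability $\beta$ to a ``null'' postprocessing on any one simulator that always reports ``$-$'' (thereby contributing $\beta\, u$ to $\A_-$ and nothing to $\A_+$), so that the $p_i$ genuinely sum to $1$. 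Once these verifications are in place, computing $\sum_i p_i (\nu^{(i)} \circ \B^{(i)})_+$ recovers $\A_+$ exactly, and normalization forces the ``$-$'' effect to be $\A_-$, exhibiting $\A \in \simu{\mathcal{B}}$ as required.
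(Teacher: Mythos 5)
Your proposal is correct and follows essentially the same route as the paper's proof: expand $\A_+$ as a convex combination of the $\B^{(i)}_x$, $o$, and $u$; use the normalization $\sum_x \B^{(i)}_x = u$ to absorb the $u$ term back into the simulators; and convert the resulting coefficients into mixing weights $p_i$ together with dichotomic postprocessings $\nu^{(i)}_{x\pm}$ whose ``$+$'' entries are the ratios of the coefficients to the weights. The only differences are bookkeeping choices, both of which work: the paper sets $p_i = \max_x \tilde{\eta}_{ix}$ and absorbs all leftover mass (including the $o$ coefficient) into the last weight $p_m$, whereas you set $p_i$ proportional to $\sum_x \alpha_{ix}$, load the $u$ mass onto a single simulator, and dispose of the $o$ mass via an explicit branch that always reports ``$-$''.
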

\begin{proof}
Let $\A_+ \in \mathrm{conv}\left(\{\{\B^{(i)}_x\}_{i,x},o,u\}\right)$ so that
\begin{equation}\label{eq:conv-effects}
\A_+ = \sum_{i=1}^m \sum_{x \in \Omega} \eta_{ix} \B^{(i)}_x + \lambda u + \mu o
\end{equation}
for some positive numbers $\eta_{ix}, \lambda,\mu \in \real$ for all $i=1,\ldots,m$ and $x \in \Omega$ such that $\sum_{i,x} \eta_{ix} + \lambda+\mu =1$. From the normalization of the observables in $\mathcal{B}$, it follows that for any probability distribution $(q_i)_{i=1}^m$ we have
\begin{equation}
u= \sum_{i,x} q_i \B^{(i)}_x.
\end{equation}

By plugging the previous expression in Eq. \eqref{eq:conv-effects} and neglecting the term with the zero effect $o$, we have that
\begin{equation}
\A_+ = \sum_{i,x} (\eta_{ix}+\lambda q_i) \B^{(i)}_x = \sum_{i,x} \tilde{\eta}_{ix} \B^{(i)}_x,
\end{equation}
where we have denoted $\tilde{\eta}_{ix}= \eta_{ix}+\lambda q_i$ for all $i=1, \ldots,m$ and $x \in \Omega$. We can now introduce a probability distribution $(p_i)_{i=1}^m$ by
\begin{align*}
& p_i = \max_{x\in \Omega} \tilde{\eta}_{ix}, \quad  i=1, \ldots,m-1,\\
& p_m = 1- \sum_{i=1}^{m-1} p_i \, .
\end{align*}
It is straightforward to check that $(p_i)_i$ actually forms a probability distribution.

We define a postprocessing $\nu: \{1,\ldots,m\} \times \Omega \to \{+,-\}$ by
\begin{align}
& \nu_{(i,x)+}=
\begin{cases}
\dfrac{\tilde{\eta}_{ix}}{p_i} & {\rm if \ } p_i \neq 0, \\
0, & {\rm if \ } p_i= 0,
\end{cases} \\
& \nu_{(i,x)-} = 1- \nu_{(i,x)+}
\end{align}
for all $i=1, \ldots,m$ and $x \in \Omega$. We see that indeed $\nu_{(i,x)\pm} \in [0,1]$ and $\nu_{(i,x)+} + \nu_{(i,x)-} =1$ for all $i=1, \ldots,m$ and $x \in \Omega$, so $\nu$ is a legitimate postprocessing. Hence, there exists a probability distribution $(p_i)_i$ and a postprocessing $\nu$ such that
\begin{equation}
\A_{\pm} = \sum_{i,x} \nu_{(i,x)\pm} p_i \B^{(i)}_x
\end{equation}
so that $\A \in \simu{\mathcal{B}}$.
\end{proof}

The previous proposition only considers simulated observables
which have only two outcomes. We see that the proposition can in
fact be extended to cover simulated observables with more outcomes
at the expense of the form of the simulator observables.

\begin{proposition}\label{prop:conv-sim-lin-ind}
Let $\mathcal{B}=\{\B^{(i)}\}_{i=1}^m$ be a collection of $m$ dichotomic observables such that the set $\left\{ u, \{\B^{(i)}_+\}_{i=1}^m \right\}$ is linearly independent. For an observable $\A$ with an outcome set $\Lambda$,  the following implication holds:
$$
\A_y \in \mathrm{conv}\left(\{\{\B^{(i)}_\pm\}_{i},o,u\}\right) \ \forall y \in \Lambda  \quad \Rightarrow \quad \A \in \simu{\mathcal{B}}.
$$
\end{proposition}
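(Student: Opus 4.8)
The plan is to follow the same strategy as in the proof of Proposition~\ref{prop:conv-sim}, reducing the claim to the explicit construction of a mixing distribution $(p_i)_{i=1}^m$ and a single postprocessing $\nu$ realizing $\A$ as a $\mathcal{B}$-simulation. First I would use the hypothesis to write, for each $y\in\Lambda$,
\begin{equation*}
\A_y = \sum_{i=1}^m\left(\eta^y_{i+}\B^{(i)}_+ + \eta^y_{i-}\B^{(i)}_-\right) + \lambda_y u + \mu_y o
\end{equation*}
with nonnegative coefficients summing to $1$. The term $\mu_y o$ vanishes and is dropped, and the term $\lambda_y u$ I would absorb into the dichotomic effects via $u=\B^{(i)}_++\B^{(i)}_-$: fixing any probability distribution $(q_i)_i$ and setting $\tilde\eta^y_{i\pm}=\eta^y_{i\pm}+\lambda_y q_i\ge 0$ gives $\A_y=\sum_{i,x}\tilde\eta^y_{i,x}\B^{(i)}_x$.

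The crucial point is that, to read this off as a simulation by the \emph{dichotomic} observables $\B^{(i)}$, each $\B^{(i)}$ must be selected with a single probability $p_i$, which forces the total postprocessing weight on its two outcomes to coincide, i.e. $\sum_y\tilde\eta^y_{i+}=\sum_y\tilde\eta^y_{i-}$. I would establish this by summing the decomposition over $y\in\Lambda$, using $\sum_y\A_y=u$ together with $\B^{(i)}_-=u-\B^{(i)}_+$, to obtain
\begin{equation*}
\sum_i\left(\sum_y\tilde\eta^y_{i+}-\sum_y\tilde\eta^y_{i-}\right)\B^{(i)}_+ + \left(\sum_{i,y}\tilde\eta^y_{i-}-1\right)u = 0.
\end{equation*}
Here the linear independence of $\{u,\B^{(1)}_+,\ldots,\B^{(m)}_+\}$ is exactly what is needed: it forces every coefficient to vanish, yielding a well-defined $p_i:=\sum_y\tilde\eta^y_{i+}=\sum_y\tilde\eta^y_{i-}$ together with $\sum_i p_i=1$.

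To finish, I would define $\nu_{(i,x)y}=\tilde\eta^y_{i,x}/p_i$ when $p_i\neq 0$, and arbitrarily (say concentrated on one outcome) when $p_i=0$, in which case the corresponding terms vanish since $\sum_y\tilde\eta^y_{i,x}=p_i=0$ with all coefficients nonnegative. By construction $\nu_{(i,x)y}\ge 0$ and $\sum_y\nu_{(i,x)y}=1$, so $\nu$ is a legitimate postprocessing, and $\A_y=\sum_{i,x}\nu_{(i,x)y}\,p_i\,\B^{(i)}_x$ for all $y$, whence $\A\in\simu{\mathcal{B}}$.

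I expect the only genuine obstacle to be pinpointing the role of the linear independence hypothesis, rather than any computation. It is precisely the assumption that makes the ``$+$'' and ``$-$'' branches of each dichotomic simulator carry equal aggregate weight, so that a single mixing probability $p_i$ per simulator is consistent. Without it, a convex decomposition of the effects $\A_y$ could split the weights of $\B^{(i)}_+$ and $\B^{(i)}_-$ unequally across the outcomes, and there would be no guarantee that the data assemble into a genuine dichotomic-simulator scheme; this is also why the more general Proposition~\ref{prop:conv-sim} had to allow simulators with arbitrary outcome sets.
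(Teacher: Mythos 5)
Your proof is correct and follows essentially the same route as the paper's: convexly decompose each $\A_y$, absorb the $u$-term into the dichotomic effects, sum over $y$ and invoke the linear independence of $\{u,\B^{(1)}_+,\ldots,\B^{(m)}_+\}$ to force equal aggregate weights $p_i$ on the two branches of each simulator, then define the postprocessing by normalizing with $p_i$. The only differences are cosmetic (a general distribution $(q_i)$ instead of the uniform $1/m$, and a cleaner handling of the $p_i=0$ case, where the paper's choice $\nu_{(i,\pm)y}=1/m$ is in fact slightly off as a normalization while yours is correct).
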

\begin{proof}
Let $\A$ be an observable with outcome set $\Lambda$ such that $\A_y \in \mathrm{conv}\left(\{\{\B^{(i)}_\pm\}_{i},o,u\}\right)$ for all $y \in \Lambda$ so that
\begin{eqnarray}\label{eq:lin-ind}
\A_y &=& \sum_i \left( \lambda^{(i,y)}_+ \B^{(i)}_+ + \lambda^{(i,y)}_- \B^{(i)}_- \right) + \lambda^{(u,y)} u + \lambda^{(o,y)} o \nonumber \\
&=& \sum_i \left( \omega^{(i,y)}_+ \B^{(i)}_+ + \omega^{(i,y)}_- \B^{(i)}_-\right),
\end{eqnarray}
where $\{\{\lambda^{(i,y)}_\pm\}_i , \lambda^{(u,y)}, \lambda^{(o,y)}\}$ is a probability distribution for all $y \in \Lambda$ and $\omega^{(i,y)}_\pm = \lambda^{(i,y)}_\pm + \frac{1}{m} \lambda^{(u,y)}$ for all $i= 1, \ldots,m$ and $y \in \Lambda$. Here we have taken into account that $u = \frac{1}{m} \sum_i(\B^{(i)}_+ + \B^{(i)}_-)$.

Because of the normalization of $\A$ we have that
\begin{eqnarray}
u &=& \sum_y \A_y = \sum_i \left( \omega^{(i)}_+ \B^{(i)}_+ +
\omega^{(i)}_- \B^{(i)}_- \right)
\nonumber\\
&=&  \left(\sum_i  \omega^{(i)}_- \right) u + \sum_i  \left( \omega^{(i)}_+ -\omega^{(i)}_- \right) \B^{(i)}_+,
\end{eqnarray}
where $\omega^{(i)}_\pm = \sum_y \omega^{(i,y)}_\pm \geq 0$ for
all $i=1, \ldots, m$.

Since effects $u$, $\B^{(1)}_+,\ldots, \B^{(m)}_+$ are linearly independent, we conclude that
$\omega^{(i)}_+ = \omega^{(i)}_- =: p_i$ for all $i= 1, \ldots,m$
and $\sum_i p_i = 1$. We can then define a postprocessing $\nu:
\{1, \ldots,m\}\times \{+,-\} \to \Lambda$ by setting
\begin{equation}
\nu_{(i,\pm)y} =
\begin{cases}
\dfrac{\omega^{(i,y)}_\pm}{p_i}, & {\rm if \ } p_i \neq 0,\\
 \dfrac{1}{m},& {\rm if \ } p_i=0.
\end{cases}
\end{equation}
From Eq. \eqref{eq:lin-ind}, we can now confirm that
\begin{equation}
\A_y = \sum_i p_i \left( \nu_{(i,+)y} \B^{(i)}_+ + \nu_{(i,-)y} \B^{(i)}_- \right)
\end{equation}
for all $y \in \Lambda$ so that $\A \in \simu{\mathcal{B}}$.
\end{proof}

We note that if there is only one dichotomic simulator observable $\B$, then $\{u, \B_+\}$ is linearly independent if and only if $\B$ is nontrivial. In the case of one simulator $\B$ we can even have more outcomes for $\B$ provided that the effects of $\B$ are linearly independent.

\begin{proposition}\label{prop:conv-sim-single-lin-ind}
Let $\B$ be an observable with linearly independent effects and an outcome set $\Omega$. For an observable $\A$ with an outcome set $\Lambda$, the following implication holds:
$$
\A_y \in \mathrm{conv}\left(\{\{\B_x\}_{x},o,u\}\right) \quad \forall y \in \Lambda \quad \Rightarrow \quad \A \in \simu{\B}.
$$
\end{proposition}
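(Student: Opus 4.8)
The plan is to produce a single postprocessing $\nu:\Omega\to\Lambda$ with $\A=\nu\circ\B$. Since a simulation that uses only the one observable $\B$ reduces, after collapsing the trivial mixing, to an ordinary postprocessing of $\B$, exhibiting such a $\nu$ is exactly what is needed to conclude $\A\in\simu{\B}$.

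First I would unfold the hypothesis. For each $y\in\Lambda$, membership of $\A_y$ in $\mathrm{conv}(\{\{\B_x\}_x,o,u\})$ yields nonnegative numbers $\alpha^{(y)}_x$, $\beta^{(y)}$, $\gamma^{(y)}$ summing to $1$ with
\begin{equation*}
\A_y = \sum_{x\in\Omega}\alpha^{(y)}_x\B_x + \beta^{(y)} u + \gamma^{(y)} o.
\end{equation*}
The zero effect contributes nothing, and I would absorb the unit effect using the normalization $u=\sum_x\B_x$ of $\B$, rewriting
\begin{equation*}
\A_y = \sum_{x\in\Omega}\bigl(\alpha^{(y)}_x+\beta^{(y)}\bigr)\B_x =: \sum_{x\in\Omega}\omega^{(y)}_x\B_x,
\end{equation*}
where every coefficient satisfies $\omega^{(y)}_x\ge 0$. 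This is the same bookkeeping used in the proof of Proposition~\ref{prop:conv-sim-lin-ind}.

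The key step is to verify that the matrix $\nu_{xy}:=\omega^{(y)}_x$ is stochastic. Summing over $y$ and invoking the normalization $\sum_y\A_y=u=\sum_x\B_x$ gives
\begin{equation*}
\sum_{x\in\Omega}\Bigl(\sum_{y\in\Lambda}\omega^{(y)}_x\Bigr)\B_x = \sum_{x\in\Omega}\B_x.
\end{equation*}
Here the hypothesis that the effects $\{\B_x\}_x$ are linearly independent does the work: comparing coefficients forces $\sum_{y}\omega^{(y)}_x=1$ for every $x\in\Omega$. Since the $\omega^{(y)}_x$ are nonnegative and sum to $1$ over $y$, they automatically lie in $[0,1]$, so $\nu$ is a legitimate postprocessing from $\Omega$ to $\Lambda$. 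By construction $\A_y=\sum_x\nu_{xy}\B_x=(\nu\circ\B)_y$, whence $\A\in\simu{\B}$.

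I expect no serious obstacle: the single-simulator setting is the simplest instance of the earlier convex-hull criteria, and the only delicate point is that one needs \emph{full} linear independence of $\{\B_x\}_x$ (not merely pairwise independence) in order to read off the coefficients and thereby guarantee the column-normalization of $\nu$. The argument is essentially a specialization of the proof of Proposition~\ref{prop:conv-sim-lin-ind}, with the several dichotomic simulators replaced by one multi-outcome observable.
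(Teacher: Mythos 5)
Your proof is correct and follows essentially the same route as the paper's: decompose each $\A_y$ convexly, absorb the $u$ term via $u=\sum_x\B_x$, then use the full linear independence of the $\B_x$ together with the normalizations $\sum_y\A_y=u=\sum_x\B_x$ to force the column sums $\sum_y\omega^{(y)}_x=1$, making $\nu_{xy}=\omega^{(y)}_x$ a legitimate postprocessing with $\A=\nu\circ\B$. The only (cosmetic) difference is that you spell out why the entries lie in $[0,1]$, which the paper leaves implicit.
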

\begin{proof}
Each effect $\A_y$ can be expressed as a convex decomposition into
the effects $\B_x$, $o$, and $u$ so that
\begin{equation}
\A_y = \sum_x \lambda^{(y)}_x \B_x + \lambda^{(y)}_o o + \lambda^{(y)}_u u
\end{equation}
for all $y \in \Lambda$ for some positive numbers $\lambda^{(y)}_x, \lambda^{(y)}_o$ and $\lambda^{(y)}_u$ such that $\sum_x \lambda^{(y)}_x +\lambda^{(y)}_o + \lambda^{(y)}_u =1$ for all $y \in \Lambda$. Since $u = \sum_x \B_x$, we have that
\begin{equation}
\A_y = \sum_x \left( \lambda^{(y)}_x + \lambda^{(y)}_u \right) \B_x
\end{equation}
for all $y \in \Lambda$. From the normalization of observables $\A$ and $\B$ it follows that
\begin{equation}
\sum_x \B_x = u = \sum_y \A_y = \sum_x \left[ \sum_y \left( \lambda^{(y)}_x + \lambda^{(y)}_u \right) \right] \B_x
\end{equation}
The linear independence of the effects $\B_x$ leads us to conclude that
$ \sum_y \left( \lambda^{(y)}_x + \lambda^{(y)}_u \right)=1$ for all $x \in \Omega$. Thus, if we define a mapping $\nu: \Omega \to \Lambda$ by
\begin{equation}
\nu_{xy} =  \lambda^{(y)}_x + \lambda^{(y)}_u
\end{equation}
for all $ x\in \Omega$ and $y \in \Lambda$, we see that now $\nu$ is a postprocessing and $\A = \nu \circ \B$.
\end{proof}

As an example of this, a simulation irreducible observable (or its minimally sufficient version) consists of linearly independent effects, and so in this case we have a sufficient condition for an observable to be simulated by it. However, we see that the
condition is not a necessary one and also that if we try to
increase the number of simulators then the proposition no longer
holds. The converse of Proposition~\ref{prop:conv-sim} is also seen to
be false in general.

\begin{example}[\it Simulation irreducible qubit observable]
\label{ex:qubit-sim-irr}
Let us consider a 4-outcome qubit observable $\B$ with effects
\begin{equation*}
B(i) = \dfrac{1}{4} \left(  \id + \vec{b}_i \cdot \vec{\sigma} \right), \quad i=1,2,3,4,
\end{equation*}
where
\begin{eqnarray}
\vec{b}_1 &=& \left( \dfrac{2 \sqrt{2}}{3}, 0, -\dfrac{1}{3} \right), \quad \vec{b}_2 = \left( -\dfrac{ \sqrt{2}}{3}, \sqrt{\dfrac{ 2}{3}}, -\dfrac{1}{3} \right), \nonumber \\
\vec{b}_3 &=& \left( -\dfrac{ \sqrt{2}}{3}, -\sqrt{\dfrac{ 2}{3}}, -\dfrac{1}{3} \right), \quad \vec{b}_4 = (0,0,1), \label{eq:qubit-tetra}
\end{eqnarray}
so that the four vectors form the vertices of a tetrahedron inside
a unit ball. Clearly, any set of three of the four vectors form a
linearly independent set and in fact the set of all four effects
is linearly independent. Furthermore, since the effects $B(i)$ are
rank-1 for all $i=1,2,3,4$, we have that $\B$ is
simulation irreducible.

To see that the converses of Proposition~\ref{prop:conv-sim} and
\ref{prop:conv-sim-single-lin-ind} do not hold, we define a
dichotomic qubit observable $\A$ by setting
\begin{eqnarray*}
A(+) &=& B(1) + B(2) = \dfrac{1}{2} \left[ \id + \left( \dfrac{\vec{b}_1 + \vec{b}_2}{2} \right) \cdot \vec{\sigma} \right], \\
A(-) &=& B(3) + B(4) = \dfrac{1}{2} \left[ \id + \left( \dfrac{\vec{b}_3 + \vec{b}_4}{2} \right) \cdot \vec{\sigma} \right].
\end{eqnarray*}

Clearly, $\A \in \simu{\B}$. We will show by contradiction that
$A(+)$ does not belong to the convex set of effects $O$, $\id$,
$B(1)$, $B(2)$, $B(3)$, $B(4)$. Suppose that $A(+) \in
\mathrm{conv}\left(\{B(1),B(2),B(3),B(4),O,\id\}\right)$, i.e.,
for $A(+)$ there exists a convex decomposition
\begin{align*}
A(+) &= \sum_{i=1}^4 \lambda_i B(i) + \lambda_5 \id + \lambda_6 O \\
&= \dfrac{1}{2} \left[ \left(\dfrac{\sum_{i=1}^4 \lambda_i}{2}+ 2 \lambda_5 \right) \id + \left( \dfrac{\sum_{i=1}^4 \lambda_i \vec{b}_i}{2} \right) \cdot \vec{\sigma} \right].
\end{align*}

By comparing the coefficients of $\id$ and the Pauli matrices, we arrive at the following two equations:
\begin{equation*}
\begin{cases}
\sum\limits_{i=1}^4 \lambda_i + 4 \lambda_5 = 2, \\
\sum\limits_{i=1}^4 \lambda_i \vec{b}_i = \vec{b}_1 + \vec{b}_2.
\end{cases}
\end{equation*}
By using the latter equation and the condition that $\sum_{i=1}^4 \vec{b}_i = \vec{0}$, we have that
\begin{equation}
(1+\lambda_4-\lambda_1) \vec{b}_1 +(1+\lambda_4-\lambda_2) \vec{b}_2 +(\lambda_4-\lambda_3) \vec{b}_3 = \vec{0}.
\end{equation}
Now the set $\{\vec{b}_1, \vec{b}_2, \vec{b}_3\}$ is linearly independent so that
\begin{equation}
\lambda_1 = \lambda_2 = 1+ \lambda_4 \quad \Rightarrow \quad \lambda_1 = \lambda_2 =1,
\end{equation}
which contradicts the fact that $\sum_{i=1}^6 \lambda_i =1$. Thus,
$A(+)$ cannot be contained in the convex hull of $O$, $\id$, and
the effects of $\B$. By similar arguments we see that since, for
example, the set  $\{\vec{b}_2, \vec{b}_3, \vec{b}_4\}$ is linearly
independent, then $A(-)$ also cannot be contained in the convex
hull of $O$, $\id$, and the effects of $\B$.

We also see that in Proposition~\ref{prop:conv-sim-lin-ind} both the
dichotomicity of observables $\B^{(i)}$ and the linear
independence of the effects $\{u, \{\B^{(i)}_+ \}_i \}$ is truly
needed: Define dichotomic observables $\C^{(i)}$ by setting
$C^{(i)}(+) = B(i)$ and $C^{(i)}(-) = \id -B_i$ for all
$i=1,2,3,4$. Clearly $B_i \in \mathrm{conv}\left( \{\{C^{(j)}(\pm)
\}_{j=1}^4, O, \id \} \right)$ for all $i=1,2,3,4$ and even the
effects $C^{(1)}(+),C^{(2)}(+),C^{(3)}(+)$ and $C^{(4)}(+)$ are
linearly independent by themselves but not with the unit effect
$\id$. If $\B \in \simu{\{\C^{(i)}\}_{i=1}^4}$,
then by the simulation irreducibility of $\B$ we would have that
$\B \leftrightarrow \C^{(k)}$ for some $k \in \{1,2,3,4\}$ which
is clearly not the case since when measuring $\C^{(k)}$ we only
get information about the outcome $k$ of the observable $\B$ and
not the other outcomes. This also happens when we define two
trichotomic observables $\D^{(1)}$ and $\D^{(2)}$ by setting
\begin{align*}
D^{(1)}(1) = B(1), \ D^{(1)}(2) = B(2), \ D^{(1)}(3) = B(3) +B(4), \\
D^{(2)}(1) = B(3), \ D^{(2)}(2) = B(4), \ D^{(2)}(3) = B(1) +B(2),
\end{align*}
since then the effects $\id, D^{(1)}(1)$, and $D^{(2)}(1)$ are
linearly independent and $B(i) \in  \mathrm{conv}\left(
\{\{D^{(j)}(k) \}_{j,k}, O, \id \} \right)$ for all $i=1,2,3,4$
but now by the same arguments as above we have that $\B \notin
\simu{\{\D^{(1)},\D^{(2)}\}}$. This also shows that
Proposition~\ref{prop:conv-sim} does not hold with simulated observables
which have more than two outcomes.
\end{example}

If we restrict ourselves to sets of simulators composed of
dichotomic observables, the converse of Proposition~\ref{prop:conv-sim}
is seen to hold even when allowing more outcomes for the simulated
observables.

\begin{proposition}\label{prop:sim-conv}
Let $\mathcal{B}=\{\B^{(i)}\}_{i=1}^m$ be a collection of $m$ dichotomic observables. For an observable $\A$ with an outcome set $\Lambda$, the following implication holds:
$$
\A \in \simu{\mathcal{B}} \quad \Rightarrow \quad \A_y \in
\mathrm{conv}\left(\{\{\B^{(i)}_\pm\}_{i},o,u\}\right) \quad \forall y
\in \Lambda.
$$
\end{proposition}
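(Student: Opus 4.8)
The plan is to begin directly from the definition of $\mathcal{B}$-simulability and then massage the resulting nonnegative combination into a convex one by exploiting the dichotomic normalization $\B^{(i)}_+ + \B^{(i)}_- = u$. Since $\A\in\simu{\mathcal{B}}$ and each $\B^{(i)}$ is dichotomic, the simulation scheme \eqref{eq:sim_obs}, written in the post-process-then-mix form, yields a probability distribution $(p_i)_{i=1}^m$ and postprocessings $\nu^{(i)}:\{+,-\}\to\Lambda$ with
$$
\A_y = \sum_{i=1}^m p_i\left(\nu^{(i)}_{+y}\B^{(i)}_+ + \nu^{(i)}_{-y}\B^{(i)}_-\right)
$$
for every $y\in\Lambda$. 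Writing $\alpha^{(i,y)}_\pm = p_i\nu^{(i)}_{\pm y}\geq 0$, this exhibits $\A_y$ as a nonnegative combination of the effects $\B^{(i)}_\pm$, but its total weight $\sum_i(\alpha^{(i,y)}_+ + \alpha^{(i,y)}_-)$ need not equal $1$, so it is not yet manifestly a convex combination of the claimed generators.

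Next I would fold each same-$i$ pair of terms partly into the unit effect. For each $i$ set $t^{(i,y)} = \min\{\alpha^{(i,y)}_+,\alpha^{(i,y)}_-\}$ and use $\B^{(i)}_+ + \B^{(i)}_- = u$ to rewrite
$$
\alpha^{(i,y)}_+\B^{(i)}_+ + \alpha^{(i,y)}_-\B^{(i)}_- = t^{(i,y)}u + \left(\alpha^{(i,y)}_+ - t^{(i,y)}\right)\B^{(i)}_+ + \left(\alpha^{(i,y)}_- - t^{(i,y)}\right)\B^{(i)}_-.
$$
In each such line exactly one of the last two coefficients vanishes, leaving a single effect $\B^{(i)}_\pm$ carrying coefficient $|\alpha^{(i,y)}_+ - \alpha^{(i,y)}_-|$. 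Summing over $i$ expresses $\A_y$ as a nonnegative combination of $u$ and of at most one effect per simulator, with total weight $\sum_i t^{(i,y)} + \sum_i|\alpha^{(i,y)}_+ - \alpha^{(i,y)}_-| = \sum_i\max\{\alpha^{(i,y)}_+,\alpha^{(i,y)}_-\}$.

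Then I would bound this total weight. Since each $\nu^{(i)}$ is stochastic, $\nu^{(i)}_{\pm y}\leq 1$, so $\max\{\alpha^{(i,y)}_+,\alpha^{(i,y)}_-\} = p_i\max\{\nu^{(i)}_{+y},\nu^{(i)}_{-y}\}\leq p_i$, and hence $\sum_i\max\{\alpha^{(i,y)}_+,\alpha^{(i,y)}_-\}\leq\sum_i p_i = 1$. Absorbing the nonnegative deficit $1-\sum_i\max\{\alpha^{(i,y)}_+,\alpha^{(i,y)}_-\}$ into the zero effect $o$ then produces a genuine convex decomposition of $\A_y$ over $\{\{\B^{(i)}_\pm\}_i,o,u\}$, which is exactly the claim.

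The main obstacle is recognizing that the naive simulation expression can have total weight exceeding $1$ (up to $2$), so $\A_y$ is \emph{not} directly a convex combination of the $\B^{(i)}_\pm$ together with $o$. The decisive step, and the one that genuinely relies on dichotomicity, is absorbing the overlapping weight $\min\{\alpha^{(i,y)}_+,\alpha^{(i,y)}_-\}$ into $u$ through $\B^{(i)}_+ + \B^{(i)}_- = u$: this simultaneously introduces the generator $u$ and drives the total weight down to at most $1$. For simulators with more than two outcomes the identity $\sum_x\B^{(i)}_x = u$ no longer lets one cancel individual effect pairs into $u$, which is precisely why the converse inclusion fails in general (cf.\ Example~\ref{ex:qubit-sim-irr}) and is recovered here only under the dichotomic hypothesis.
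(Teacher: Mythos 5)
Your proof is correct and follows essentially the same route as the paper's: both start from the simulation expression, use $\B^{(i)}_+ + \B^{(i)}_- = u$ to fold the smaller of each pair of coefficients into the unit effect (your $t^{(i,y)}=\min\{\alpha^{(i,y)}_+,\alpha^{(i,y)}_-\}$ is exactly the paper's split into the index sets $\mathcal{I}_y^{\pm}$), bound the resulting total weight $\sum_i \max\{\alpha^{(i,y)}_+,\alpha^{(i,y)}_-\}$ by $\sum_i p_i = 1$, and pad with the zero effect $o$. The only difference is notational (min/max bookkeeping versus an explicit index partition), so there is nothing substantive to add.
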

\begin{proof}
Denote $\mathcal{I}_m = \{1, \ldots,m\}$. Let $\A \in \simu{\mathcal{B}}$ so that
\begin{equation}
\A_y = \sum_{i \in \mathcal{I}_m} p_i\left( \nu_{(i,+)y} \B^{(i)}_+ + \nu_{(i,-)y} \B^{(i)}_- \right)
\end{equation}
for some probability distribution $(p_i)_{i=1}^m$ and a postprocessing $\nu: \mathcal{I}_m \times \{+,-\} \to \Lambda$.

For each $y \in \Lambda$ we denote $\mathcal{I}_y^+ = \{ i \in
\mathcal{I}_m\, | \, \nu_{(i,+)y} \geq \nu_{(i,-)y} \}$ and
$\mathcal{I}_y^- = \mathcal{I}_m \setminus \mathcal{I}_y^+$. Now
we may express each effect $\A_y$ as
\begin{eqnarray*}
\A_y &=& \sum_{i \in \mathcal{I}_y^+} p_i \left( \nu_{(i,+)y}\B^{(i)}_+ + \nu_{(i,-)y} \B^{(i)}_- \right)\\
  & &+ \sum_{i \in \mathcal{I}_y^-} p_i \left( \nu_{(i,+)y}\B^{(i)}_+ + \nu_{(i,-)y} \B^{(i)}_- \right) \\
  &=& \sum_{i \in \mathcal{I}_y^+} p_i \left[( \nu_{(i,+)y} -\nu_{(i,-)y} ) \B^{(i)}_+ + \nu_{(i,-)y} u \right] \\
  & & + \sum_{i \in \mathcal{I}_y^-} p_i \left[( \nu_{(i,-)y} -\nu_{(i,+)y} ) \B^{(i)}_- + \nu_{(i,+)y} u \right] \\
  &=& \sum_{i \in \mathcal{I}_y^+} p_i ( \nu_{(i,+)y} -\nu_{(i,-)y} ) \B^{(i)}_+  \\ & & + \sum_{i \in \mathcal{I}_y^-} p_i ( \nu_{(i,-)y} -\nu_{(i,+)y} ) \B^{(i)}_- \\
  & & + \left[\sum_{i \in \mathcal{I}_y^+} p_i \nu_{(i,-)y}
 + \sum_{i \in \mathcal{I}_y^-} p_i \nu_{(i,+)y} \right] u,
\end{eqnarray*}
where we have used the fact that $\B^{(i)}_-=u-\B^{(i)}_+$ for all $i\in \mathcal{I}_m$. We see that now the coefficients of all the effects in the above expression are positive and for the total sum of the coefficients we have that
\begin{eqnarray*}
& &\sum_{i \in \mathcal{I}_y^+} p_i ( \nu_{(i,+)y} -\nu_{(i,-)y} )  + \sum_{i \in \mathcal{I}_y^-} p_i ( \nu_{(i,-)y} -\nu_{(i,+)y} ) \\
& & + \sum_{i \in \mathcal{I}_y^+} p_i \nu_{(i,-)y}
 + \sum_{i \in \mathcal{I}_y^-} p_i \nu_{(i,+)y} \\
 &=& \sum_{i \in \mathcal{I}_y^+} p_i \nu_{(i,+)y}
 + \sum_{i \in \mathcal{I}_y^-} p_i \nu_{(i,-)y} \\
 &\leq & \sum_{i \in \mathcal{I}_y^+} p_i
 + \sum_{i \in \mathcal{I}_y^-} p_i  = \sum_{i \in \mathcal{I}_m} p_i =1.
\end{eqnarray*}
Thus, by adding the zero effect $o$ in the last expression for $\A_y$ with a weight of $1- \sum_{i \in \mathcal{I}_y^+} p_i \nu_{(i,+)y}
 - \sum_{i \in \mathcal{I}_y^-} p_i \nu_{(i,-)y}$ we get a convex decomposition for $\A_y$ so that
\begin{equation}\label{eq:conv-hull}
\A_y \in \mathrm{conv}\left(\{\{\B^{(i)}_\pm \}_{i},o,u\}\right)
\end{equation}
for all $y \in \Lambda$.
\end{proof}

The previous proposition shows that if an observable is effectively dichotomic, all of its effects are contained in the convex hull of the zero effect, the unit effect, and the effects of the dichotomic simulator observables. That is, if for a given set of dichotomic observables corresponding to some measurement devices in a laboratory, we choose some postprocessing and a probability distribution such that we make a simulation with those measurement devices, the previous proposition can be used to extract the simulated observable's convex decomposition into the effects of the set of simulators and the zero and the unit effect, thereby giving us their mathematical expressions.

On the other hand, it gives a useful necessary condition for
dichotomic simulability in an experimental setting. Let us say we
have access to some fixed set of measurement devices that
correspond to some dichotomic observables $\mathcal{B}$ and we
want to know whether a given observable $\A$ can be simulated
using the accessible measurements. If we find an effect of $\A$
that is not contained in the convex hull of $o$, $u$, and the
effects of the observables in $\mathcal{B}$, we know that $\A$
cannot be simulated by $\mathcal{B}$.

In general, however, we note that if the set of simulators is not fixed, for any observable we can always find such dichotomic observables so that condition \eqref{eq:conv-hull} is satisfied, namely the binarizations of the given observable.

From Propositions~\ref{prop:conv-sim-lin-ind} and \ref{prop:sim-conv}, we get the following corollary.

\begin{corollary}\label{cor:lin-ind}
Let $\mathcal{B}=\{\B^{(i)}\}_{i=1}^m$ be a collection of $m$ dichotomic observables such that the set $\left\{ u, \{\B^{(i)}_+\}_{i=1}^m \right\}$ is linearly independent. An observable $\A$ with an outcome set $\Lambda$ is contained in $\simu{\mathcal{B}}$ if and only if $\A_y \in \mathrm{conv}\left(\{\{\B^{(i)}_\pm\}_{i},o,u\}\right)$ for all outcomes $y \in \Lambda$.
\end{corollary}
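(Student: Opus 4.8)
The plan is to obtain this biconditional directly by gluing together the two immediately preceding propositions, one of which supplies each direction of the equivalence. The essential observation is that the hypothesis of the corollary---that $\mathcal{B}=\{\B^{(i)}\}_{i=1}^m$ consists of dichotomic observables with the set $\{u, \{\B^{(i)}_+\}_{i=1}^m\}$ linearly independent---is precisely the hypothesis demanded by Proposition~\ref{prop:conv-sim-lin-ind}, while Proposition~\ref{prop:sim-conv} asks only that the simulators be dichotomic. Thus both propositions are available under the assumptions at hand.

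For the implication $\A_y \in \mathrm{conv}(\{\{\B^{(i)}_\pm\}_i, o, u\})$ for all $y \in \Lambda \Rightarrow \A \in \simu{\mathcal{B}}$, I would simply quote Proposition~\ref{prop:conv-sim-lin-ind}. The linear independence of $\{u, \{\B^{(i)}_+\}_i\}$ is exactly what that proposition exploits in order to force $\omega^{(i)}_+ = \omega^{(i)}_- =: p_i$ and so produce a genuine probability distribution and postprocessing; since that assumption is built into the corollary, nothing further is required.

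For the converse $\A \in \simu{\mathcal{B}} \Rightarrow \A_y \in \mathrm{conv}(\{\{\B^{(i)}_\pm\}_i, o, u\})$ for all $y \in \Lambda$, I would invoke Proposition~\ref{prop:sim-conv}. I note here that this proposition makes no use whatsoever of linear independence---it rewrites each $\A_y$ using $\B^{(i)}_- = u - \B^{(i)}_+$ and a splitting of the index set into $\mathcal{I}_y^\pm$---so it applies \emph{a fortiori} under the stronger hypotheses of the corollary. Combining the two implications yields the stated equivalence. Since the argument amounts to checking that the hypotheses of the two propositions align, there is no genuine obstacle; the only subtlety worth remarking is that the linear independence condition, although indispensable for the forward (simulability) direction, is simply not needed for the reverse direction, which holds unconditionally for any family of dichotomic simulators.
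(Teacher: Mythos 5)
Your proposal is correct and matches the paper exactly: the corollary is stated there as an immediate consequence of Propositions~\ref{prop:conv-sim-lin-ind} and \ref{prop:sim-conv}, with the former supplying the direction from convex containment to simulability (using the linear independence of $\left\{ u, \{\B^{(i)}_+\}_{i=1}^m \right\}$) and the latter supplying the converse for arbitrary dichotomic simulators. Your added remark that linear independence is needed only for the forward direction is accurate and consistent with the paper's presentation.
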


If the set of simulators as well as the simulated observable are
all dichotomic we get the following simple corollary from
Propositions~\ref{prop:conv-sim} and \ref{prop:sim-conv}:

\begin{corollary}\label{cor:dichotomic}
Let $\mathcal{B}=\{\B^{(i)}\}_{i=1}^m$ be a collection of $m$ dichotomic observables. A dichotomic observable $\A$ is cointained in $\simu{\mathcal{B}}$ if and only if $\A_+ \in \mathrm{conv}\left(\{\{\B^{(i)}_\pm\}_{i},o,u\}\right)$.
\end{corollary}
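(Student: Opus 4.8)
The plan is to derive the statement as an immediate consequence of Propositions~\ref{prop:conv-sim} and \ref{prop:sim-conv}, since both directions of the claimed equivalence are already packaged in these two results once we specialize the target observable $\A$ to be dichotomic. In other words, Corollary~\ref{cor:dichotomic} is the two-outcome special case in which the sufficient condition of Proposition~\ref{prop:conv-sim} and the necessary condition of Proposition~\ref{prop:sim-conv} meet and become equivalent.

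For the ``only if'' direction, I would start from $\A \in \simu{\mathcal{B}}$. Since $\mathcal{B}$ consists of dichotomic observables, Proposition~\ref{prop:sim-conv} applies verbatim and yields that \emph{every} effect $\A_y$ of $\A$ lies in $\mathrm{conv}\left(\{\{\B^{(i)}_\pm\}_{i},o,u\}\right)$. A dichotomic observable has outcome set $\{+,-\}$, so specializing to $y=+$ gives precisely $\A_+ \in \mathrm{conv}\left(\{\{\B^{(i)}_\pm\}_{i},o,u\}\right)$, which is the desired conclusion.

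Conversely, for the ``if'' direction, I would assume $\A_+ \in \mathrm{conv}\left(\{\{\B^{(i)}_\pm\}_{i},o,u\}\right)$. Here I invoke Proposition~\ref{prop:conv-sim} directly: it is stated exactly for a dichotomic target $\A$ and requires only the single convex-membership condition on $\A_+$ (the effect $\A_-$ being fixed by normalization via $\A_- = u - \A_+$). Applying it gives $\A \in \simu{\mathcal{B}}$ at once.

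There is no genuine obstacle here; the statement is purely a matter of assembling two previously established implications into an equivalence. The only point requiring a moment's care is bookkeeping of the convex hulls and outcome sets: one should confirm that the hull $\mathrm{conv}\left(\{\{\B^{(i)}_x\}_{i,x},o,u\}\right)$ appearing in Proposition~\ref{prop:conv-sim} coincides with $\mathrm{conv}\left(\{\{\B^{(i)}_\pm\}_{i},o,u\}\right)$ when the simulators $\B^{(i)}$ are dichotomic (so that the index $x$ ranges over $\{+,-\}$), and that the condition imposed on the single effect $\A_+$ is indeed enough to characterize the whole dichotomic observable $\A$.
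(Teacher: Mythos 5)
Your proof is correct and matches the paper exactly: the paper derives Corollary~\ref{cor:dichotomic} precisely by combining the sufficiency direction of Proposition~\ref{prop:conv-sim} (specialized to dichotomic simulators, $\Omega=\{+,-\}$) with the necessity direction of Proposition~\ref{prop:sim-conv} (specialized to a dichotomic target, $\Lambda=\{+,-\}$, taking $y=+$). Your bookkeeping remarks about the convex hulls and the normalization $\A_-=u-\A_+$ are the only points of care, and you handle them correctly.
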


From Propositions~\ref{prop:conv-sim-single-lin-ind} and
\ref{prop:sim-conv}, we get a full characterization for the
simulation set of a single simulation irreducible dichotomic
observable.

\begin{corollary}
Let $\B$ be a simulation irreducible dichotomic observable. An observable $\A$ with an outcome set $\Lambda$ is contained in $\simu{\B}$ if and only if $\A_y \in \mathrm{conv}\left(\{\B_+,\B_-,o,u\}\right)$ for all $y \in \Lambda$.
\end{corollary}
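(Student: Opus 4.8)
The plan is to read off both directions of the equivalence from the two preceding results, treating the single observable $\B$ as the simulator family $\mathcal{B}=\{\B\}$, i.e.\ as a collection of $m=1$ dichotomic observables.

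For the implication $\A\in\simu{\B}\Rightarrow\A_y\in\mathrm{conv}(\{\B_+,\B_-,o,u\})$ I would simply specialize Proposition~\ref{prop:sim-conv} to this one-element family. That proposition is stated for an arbitrary collection of $m$ dichotomic observables, so the case $m=1$ gives the claim directly, with no additional computation.

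For the reverse implication I would invoke Proposition~\ref{prop:conv-sim-single-lin-ind}, whose hypothesis requires that the single simulator $\B$ have linearly independent effects. The one point that needs justification is therefore that a \emph{simulation irreducible} dichotomic observable automatically has linearly independent effects $\B_+,\B_-$. This follows from Corollary~\ref{cor:irr-indec-extreme}: simulation irreducibility of $\B$ says that $\hat{\B}$ consists of linearly independent indecomposable effects. If $\B_+$ and $\B_-$ were proportional, then the construction of $\hat{\B}$ would merge them into a single outcome, making $\hat{\B}$ trivial and so neither indecomposable nor extreme; hence $\B_+$ and $\B_-$ are not proportional, no merging occurs, $\hat{\B}=\B$, and the two effects of $\B$ are linearly independent. (Concretely, using $\B_-=u-\B_+$, any relation $\alpha\B_++\beta\B_-=o$ reduces to $(\alpha-\beta)\B_++\beta u=o$, which forces $\alpha=\beta=0$ as soon as $\B_+$ is not proportional to $u$.)

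With the linear independence of $\B_+$ and $\B_-$ established, Proposition~\ref{prop:conv-sim-single-lin-ind} applies verbatim and yields $\A\in\simu{\B}$ from the hypothesis that every effect $\A_y$ lies in $\mathrm{conv}(\{\B_+,\B_-,o,u\})$. Combining the two implications gives the stated equivalence. The only nonroutine content is the linear-independence observation of the preceding paragraph; everything else is a direct citation of the two propositions with the simulator set taken to be a singleton, so that is the single step on which I would concentrate.
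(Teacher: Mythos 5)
Your proposal is correct and takes essentially the same route as the paper: the corollary is stated there as an immediate consequence of Proposition~\ref{prop:sim-conv} (specialized to $m=1$) and Proposition~\ref{prop:conv-sim-single-lin-ind}, and the linear independence of $\B_+,\B_-$ that you supply is exactly the implicit step the paper covers with its preceding remark that a simulation irreducible observable has linearly independent effects. One small inaccuracy in your justification: the merged single-outcome observable $\hat{\B}$ would in fact be (vacuously) extreme in $\obs_X$ with $|X|=1$, but it fails indecomposability, since $u$ is decomposable on any state space with more than one state, so your contradiction with Corollary~\ref{cor:irr-indec-extreme} stands.
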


\begin{example} \label{ex:qubit}
A qubit effect $E$ can be written in the form $E=\half
\left[(1+e_0)\id + \ve \cdot \vsigma\right]$ for some
$e_0\in\real$ and $\ve = (e_x,e_y,e_z) \in \real^3$ satisfying
$|e_0|+\no{\ve}_2 \leq 1$. The real number $e_0 \in [-1,1]$ is
called the \emph{bias} of the effect $E$, with $E$ being
\emph{unbiased} if $e_0=0$. We denote by $\X$ ,$\Y$ and $\Z$ the
observables that have the effects $X(\pm)= \half (\id \pm
\sigma_x)$, $Y(\pm) = \half (\id \pm \sigma_y)$ and $Z(\pm) =
\half (\id \pm \sigma_z)$, and consider the  simulation set
$\simu{\X,\Y,\Z}$ of those observables. We also denote by $\T$ the
trivial observable with effects $T(+) =\id$ and $T(-) =O$.

Since the set of effects $\{\id, X(+),Y(+), Z(+)\}$ is linearly
independent, it follows from Corollary \ref{cor:lin-ind} that a qubit
observable $\E$ with an outcome set $\Omega$ is contained in
$\simu{\X,\Y,\Z}$ if and only if the effects $E(j)= \half
\left[(1+e^{(j)}_0)\id + \vec{e}^{(j)} \cdot \vsigma\right] \in
\mathrm{conv} \left( \{X(\pm),Y(\pm),Z(\pm),O,\id\} \right)$ for
all $j \in \Omega$. The set of effects
$\{T(\pm),X(\pm),Y(\pm),Z(\pm)\}$ is convexly independent, so 
the set of extreme effects of
$\mathrm{conv}\left(\{T(\pm),X(\pm),Y(\pm),Z(\pm)\}\right)$ are exactly the
effects $\{T(\pm),X(\pm),Y(\pm),Z(\pm)\}$. These effects
correspond to vectors $\{(\pm 1,0,0,0),(0,\pm 1,0,0),(0,0,\pm
1,0),(0,0,0,\pm 1)\}$ in $\real^4$, respectively, which in turn
are the extreme points of the four-dimensional convex set
\begin{equation}
S^4 = \left\lbrace (r_0,r_1,r_2,r_3) \in \real^4 \, \mid \, \sum_{i=0}^3 |r_i| \leq 1 \right\rbrace.
\end{equation}
Thus, there is a one-to-one correspondence with the effects in
$\mathrm{conv}\left(\{X(\pm),Y(\pm),Z(\pm),O,\id\}\right)$ and the points in
$S^4$, and so the observable $\E$ with effects $E(j) =  \half
\left[(1+e^{(j)}_0)\id + \vec{e}^{(j)} \cdot \vsigma\right]$ is in
$ \simu{\X,\Y,\Z}$ if and only if $(e^{(j)}_0, \vec{e}^{(j)}) \in
S^4$ for all $j \in \Omega$, i.e.,
\begin{equation}\label{eq:qubit-L1}
|e^{(j)}_0| + \no{\ve^{(j)}}_1 \leq 1.
\end{equation}

\begin{figure}[t]
\includegraphics[width=0.3\textwidth]{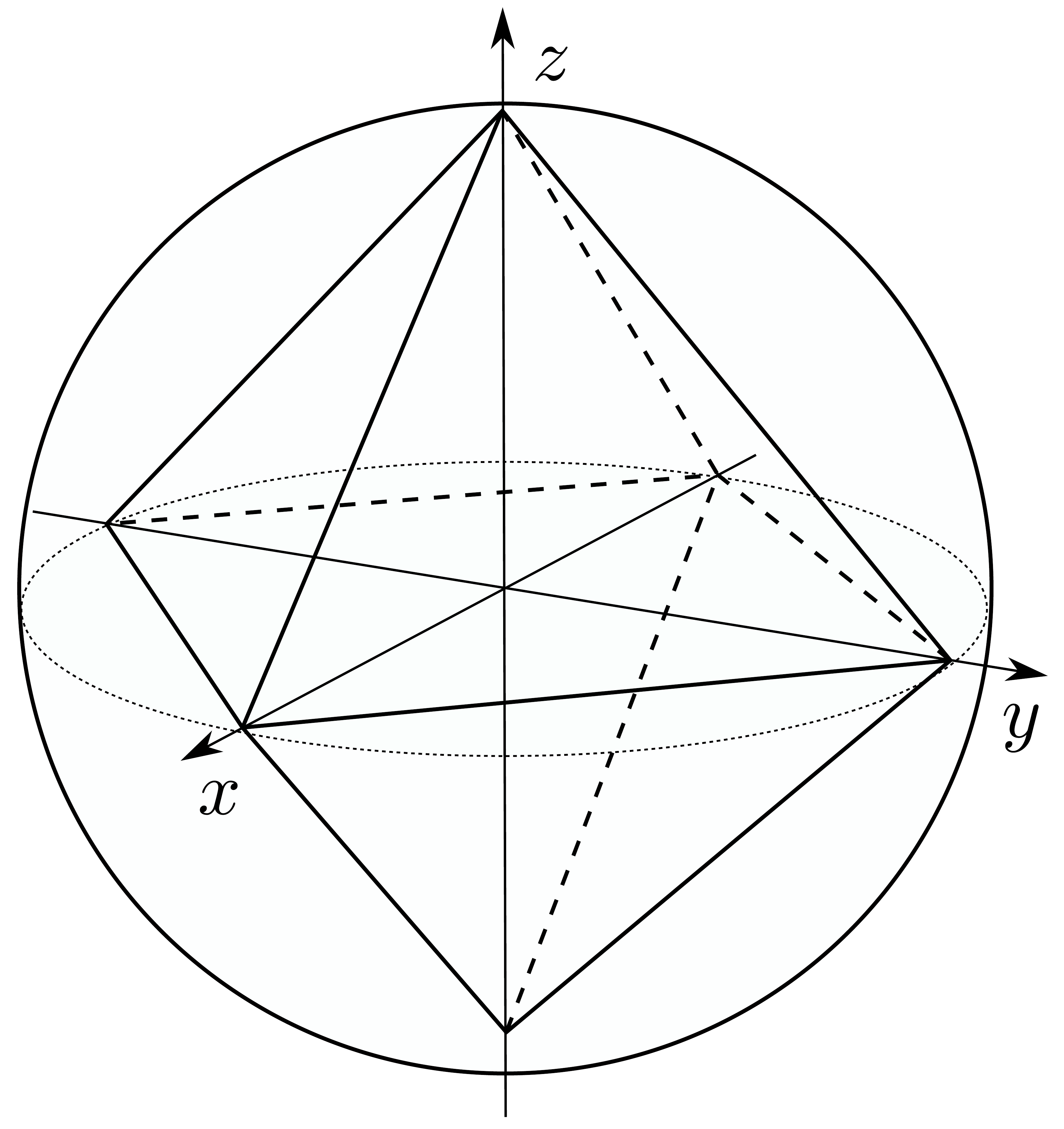}
\caption{\label{figure6} The unbiased effects in $\simu{\X,\Y,\Z}$
form an octahedron.}
\end{figure}

For the unbiased case, i.e., when $e^{(j)}_0 =0$ for all $j \in
\Omega$, inequality \eqref{eq:qubit-L1} defines an octahedron in
$\real^3$ which is depicted in Fig.~\ref{figure6}. We also see
that the set of unbiased effects in $\simu{\X,\Y}$ forms a square
in $\real^2$ (as do $\simu{\X,\Z}$ and $\simu{\Y,\Z}$ too).
\end{example}

\section{Nonquantum state spaces}\label{sec:non-quantum}

\subsection{Classical state spaces}

A state space $\state$ is \emph{classical} if all pure states are
distinguishable, or equivalently, $\state$ is simplex. Up to the
labeling of outcomes, the observable that can distinguish all
pure states is unique. It is clear that any classical state space
$\state_{cl}$ has only one equivalence class of simulation irreducible observables: Let $\G$ be the observable on $\state_{cl}$ that distinguishes the pure states of $\state_{cl}$. For each observable $\A$ we define a
postprocessing $\nu^\A$ by setting $\nu^\A_{xy} = \A_y(s_x)$ for
all outcomes $y$ and pure states $s_x \in \state_{cl}^{ext}$.
Since for any state $s = \sum_x \lambda_x s_x$ we have that
$\G_x(s) = \lambda_x$, and so
\begin{equation}
\A_y(s) = \sum_x \lambda_x \A_y(s_x) = \sum_x \nu^\A_{xy} \G_x(s) = (\nu^\A \circ \G)_y(s)
\end{equation}
for all outcomes $y$ and states $s \in \state_{cl}$, so $\A
\in \simu{\G}$ and therefore $\obs = \simu{\G}$. If $\G'$ is some other simulation irreducible observable, then $\G' \in \simu{\G}$, and from the fact that $\G'$ is postprocessing clean it follows that also $\G \in \simu{\G'}$ which yields $\G' \leftrightarrow \G$. Furthermore, the extreme simulation irreducible
observable has the same number of outcomes as the number of pure
states in $\state_{cl}$. We conclude that the effective number of
any observable in a classical state space is at most $n$, where
$n$ is the number of pure states.

On the other hand, if there exists only a single equivalence class of simulation
irreducible observables on a state space $\state$, the state space
must be classical; this follows from the result of Ref. 
\cite{Plavala16}. In the following, we give an alternative proof of
this fact, relying on the properties of simulation irreducible
observables. 

Let us denote $d =\dim({\rm aff}(\state))$ so that as
in Sec. \ref{sec:preliminaries} we can consider $\state$ and
$\effect(\state)$ to be embedded in $(d+1)$-dimensional ordered
vector spaces $\mathcal{A}$ and $\mathcal{A}^*$ respectively.
Denote by $\B$ the extreme simulation irreducible observable in the equivalence class and
suppose it has $n$ outcomes. From Proposition~\ref{prop:finite}, it
follows that every observable on $\state$ can be simulated with
$\B$. Now $\B$ consists of $n$ linearly
independent indecomposable effects $\B_i$. For each indecomposable
effect $\B_i$ there exists an extreme effect $b_i$ and $\beta_i
\in (0,1]$ such that $\B_i = \beta_i b_i$ for all $i =1,
\ldots,n$ \cite{KiNuIm10}. Since the $n$ dichotomic observables determined by the
effects $b_i$ must be simulable by $\B$, there exist
postprocessings $\nu^{(i)}$ such that $b_i = \sum_j \nu_{j+} \B_j
= \sum_j \nu_{j+} \beta_j b_j$ for all $i = 1, \ldots,n$ and since
the set $\{b_i\}_i$ is linearly independent, it follows that
$\beta_i =1$ for all $i = 1, \ldots,n$. Thus, the effects of $\B$
are actually extreme.

It is easy to see that for each extreme effect there exists an extreme state that gives probability one for the state \cite{KiNuIm10}. Thus, for every effect $\B_i$ there exists a pure state $s_i$ such that $\B_i(s_i)=1$ for all $i=1, \ldots,n$. Furthermore, due to the normalization of $\B$, we have that
\begin{equation*}
1 = u(s_i) = \sum_j \B_j(s_i) = \B_i(s_i) + \sum_{j \neq i} \B_j(s_i) = 1+ \sum_{j \neq i} \B_j(s_i)
\end{equation*}
so that $\B_j(s_i)=0$ and $s_j \neq s_i$ for all $j\neq i$ where $i=1,\ldots,n$. Hence, $\B$ distinguishes the set of states $\{s_1, \ldots,s_n\}$.

We now note that the effects of $\B$ are the only indecomposable
effects that lie on different extreme rays. Indeed, let $e$ be any
indecomposable effect and consider the dichotomic observable $\E$
with $\E_+ =e$. Since $\E \in \simu{\B}$, there exists a
postprocessing $\mu$ such that $e = \sum_i \mu_{i+} \B_i$ so that
from the indecomposability of $e$ it follows that $e$ is
proportional to $\B_l$ for some $l \in \{1, \ldots,n\}$. Thus,
there exist exactly $n$ linearly independent extreme rays that define the 
generating positive cone in the $(d+1)$-dimensional effect space, and therefore we
must have that $n=d+1$.

It is straightforward to check that the states $\{s_1, \ldots,s_n\}$ are affinely independent so that $\dim( {\rm aff }( \{s_1,\ldots,s_n\})) = n-1 =d = \dim({\rm aff}(\state))$. Thus, every state $s \in \state$ can be expressed as an affine combination of the states $\{s_1, \ldots,s_n\}$, i.e., $s = \sum_i \gamma_i s_i$ for some $\{\gamma_i\}_i \subseteq \real$ such that $\sum_i \gamma_i =1$. However, we see that
\begin{equation*}
\gamma_j= \sum_i \gamma_i \B_j(s_i) = \B_j(s) \geq 0
\end{equation*}
so the affine decomposition of $s$ is actually convex,
which shows that the only pure states are actually $s_1,
\ldots,s_n$. Since $\state$ is then a convex hull
of $d+1$ affinely independent (distinguishable) pure states,
$\state$ must be a $d$-simplex.

We can rephrase this result as follows.

\begin{proposition}
A state space is nonclassical if and only if there exist at least two inequivalent simulation irreducible observables.
\end{proposition}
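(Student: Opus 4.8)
The plan is to recognize that this proposition is the logical repackaging of the two implications already established in the discussion immediately preceding it, so the proof amounts to assembling them. First I would record that simulation irreducible observables always exist: by Proposition~\ref{prop:finite} every observable can be simulated by a finite collection of simulation irreducible ones, so at least one equivalence class of such observables is present in any state space. Consequently, the negation of the assertion ``there exist at least two inequivalent simulation irreducible observables'' is precisely ``there is exactly one equivalence class of simulation irreducible observables.'' Taking contrapositives, the proposition is equivalent to the statement that $\state$ is classical if and only if it admits exactly one equivalence class of simulation irreducible observables, and it is this biconditional that I would verify.

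The forward implication (classical $\Rightarrow$ one class) is exactly what was shown in the first paragraph above: if $\state$ is a simplex, the observable $\G$ distinguishing its pure states satisfies $\obs = \simu{\G}$, and any simulation irreducible observable then lies in $\simu{\G}$ and so, by postprocessing cleanness, is postprocessing equivalent to $\G$. Hence there is a single equivalence class. The reverse implication (one class $\Rightarrow$ classical) is the content of the long argument just given: starting from the extreme simulation irreducible representative $\B$ with $n$ linearly independent indecomposable effects, one shows successively that these effects are in fact extreme, that $\B$ distinguishes $n$ pure states $s_1,\ldots,s_n$, that $n = d+1$ where $d = \dim({\rm aff}(\state))$, and finally that every state decomposes convexly over $s_1,\ldots,s_n$, forcing $\state$ to be a $d$-simplex. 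Combining the two implications yields the proposition.

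If I were establishing these facts from scratch rather than quoting them, the main obstacle would lie entirely in the reverse direction, and within it at two points. The first is showing that in the decomposition $\B_i = \beta_i b_i$ into an extreme effect $b_i$ one must have $\beta_i = 1$; this uses that each dichotomic observable built from $b_i$ is $\B$-simulable, together with the linear independence of $\{b_i\}_i$. The second is the dimension count $n = d+1$, which relies on the observation that the effects of $\B$ exhaust, up to proportionality, all indecomposable effects lying on distinct extreme rays of the dual cone $\mathcal{A}^*_+$. Once these are in hand, the distinguishing property of $\B$ and the convexity of the resulting affine decomposition make the simplex conclusion routine, and the forward direction is immediate from $\obs = \simu{\G}$.
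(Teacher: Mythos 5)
Your proposal is correct and follows the paper's own route exactly: the proposition is stated there as a rephrasing of the preceding discussion, whose two halves (simplex $\Rightarrow$ the distinguishing observable $\G$ satisfies $\obs=\simu{\G}$, hence one class; single class $\Rightarrow$ extremality of the effects of $\B$, the count $n=d+1$, and the convex decomposition forcing a simplex) are precisely what you assemble. Your additional observation that Proposition~\ref{prop:finite} guarantees at least one equivalence class, so that the negation of ``at least two'' is ``exactly one,'' is the right way to make the logical repackaging airtight.
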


\subsection{Square bit state space}

Consider a state space $\state_\square = {\rm
conv}(\{s_1,s_2,s_3,s_4\})$ that is isomorphic to a square in
$\mathbb{R}^2$, i.e., $s_1 + s_3 = s_2 + s_4$, (see Fig.~\ref{figure7}).
Such a state space is also referred to as the square bit state
space or squit state space. The set of effects
$\effect(\state_\square)$ is an intersection of the positive dual
cone $\mathcal{A}_+^*$ and the set $u-\mathcal{A}_+^*$,
which is isomorphic to the octahedron in $\mathbb{R}^3$,
Fig.~\ref{figure7}.

In this section we demonstrate that the set of all observables
$\obs$ on the square bit state space can be simulated from a set
of two binary observables $\E$ and $\F$ defined as follows:
\begin{eqnarray*}
&& \E_+(s_1) = \E_+(s_2) = 0, \quad \E_+(s_3) = \E_+(s_4) = 1, \\
&& \E_-(s_1) = \E_-(s_2) = 1, \quad \E_-(s_3) = \E_-(s_4) = 0.
\end{eqnarray*}
\begin{eqnarray*}
&& \F_+(s_1) = \F_+(s_4) = 0, \quad \F_+(s_2) = \F_+(s_3) = 1, \\
&& \F_-(s_1) = \F_-(s_4) = 1, \quad \F_-(s_2) = \F_-(s_3) = 0.
\end{eqnarray*}

\begin{figure}[b]
\includegraphics[scale=0.27]{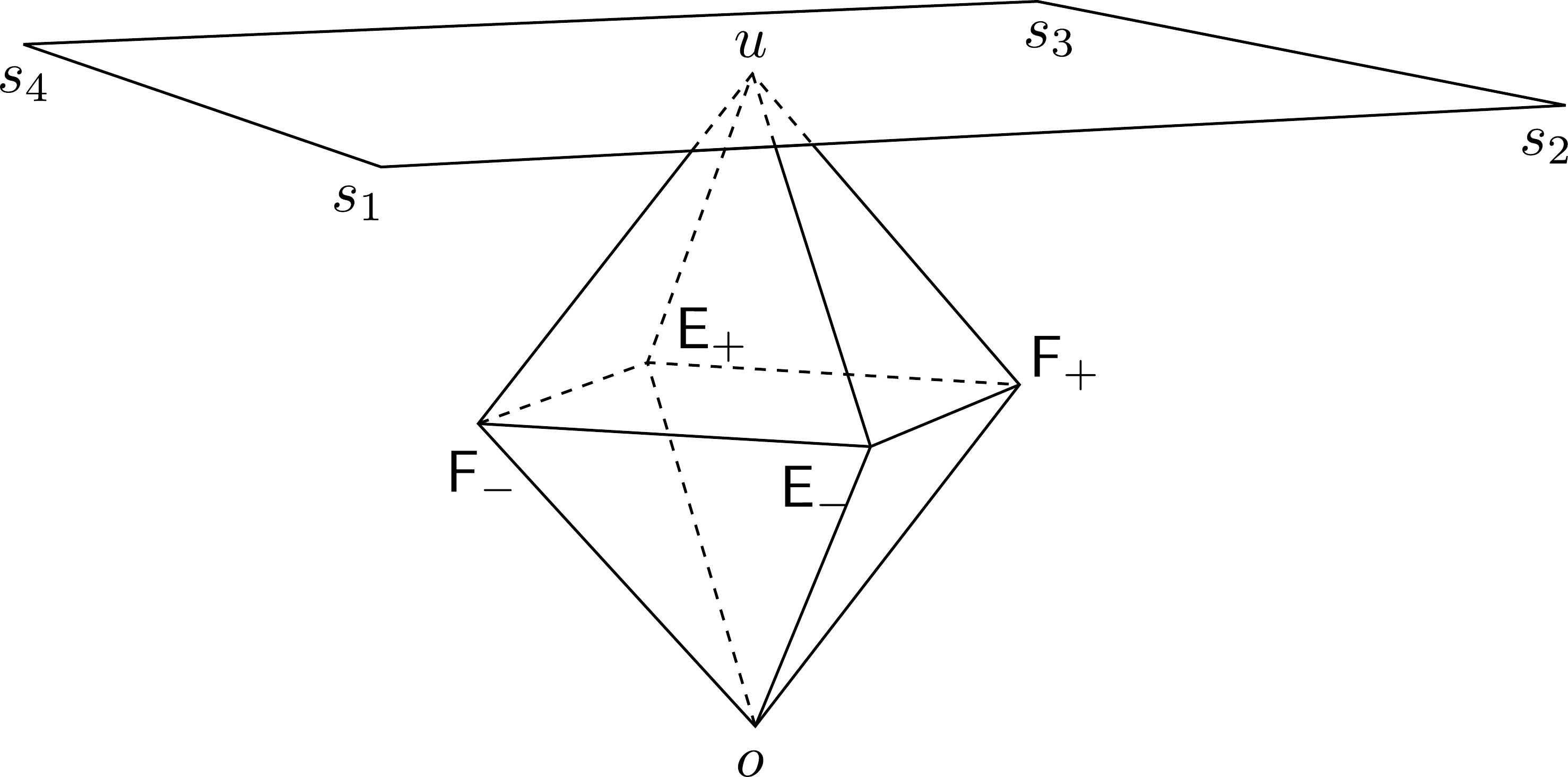} \caption{\label{figure7} Square state space (above) and the space of effects (below).}
\end{figure}

Since the set of effects $\{u,\E_+,\F_+\}$ is linearly
independent, it follows from Corollary \ref{cor:lin-ind} that an
observable $\A$ with outcome set $\Omega$ is contained in
$\simu{\{\E,\F\}}$ if and only if $\A_x \in {\rm
conv}\left(\{\E_+,\E_-,\F_+,\F_-,o,u\}\right)$ for all $x \in \Omega$, which is
always fulfilled because ${\rm conv}\left(\{\E_+,\E_-,\F_+,\F_-,o,u\}\right) =
\effect(\state_\square)$. Hence, $\simu{\{\E,\F\}}= \obs$.

The obtained result implies the following:
\begin{itemize}
\item The effective number of outcomes for any observable on the
square bit state space is at most 2.

\item Any simulation irreducible observable is postprocessing
equivalent to either $\E$ or $\F$.

\item $\smin{\obs}=2$.
\end{itemize}

It is known that the square bit state space possesses the feature of maximal incompatibility: There exists a pair of observables (which are actually exactly the observables $\E$ and $\F$) such that the minimum amount of noise one has to mix them with to make their noisy versions compatible is enough to make any other pair of observables compatible in any theory \cite{BuHeScSt13}. In this sense, the square bit state space is even more nonclassical than any finite-dimensional quantum theory \cite{HeScToZi14}.

Since classical theories have only one equivalence class of simulation irreducible observables, we can argue that theories, such as square bit state space, having just two of such equivalence classes are somewhat closest to classical theory. Furthermore, the effective number of all observables on this state space is the same as in the simplest and one of the most important classical theories, namely the bit. In this sense, the square state space is closest to classical theory amongst all nonclassical theories.

\subsection{Polygon state spaces}

We say that a convex set $P_n$ is a regular $n$-sided polygon if
there exist $n$ vectors $\vec{p}_1, \ldots, \vec{p}_n$ in $\real^2$ such that $\no{\vec{p}_1} =
\no{\vec{p}_2} = \ldots = \no{\vec{p}_n}$, and $\vec{p}_i \cdot
\vec{p}_{i+1} = \no{\vec{p}_i}^2 \cos{\left(\frac{2
\pi}{n}\right)}$ for all $i=1, \ldots,n$ (where the addition is
modulo $n$) such that $P_n$ is isomorphic to
$\mathrm{conv}\left(\{\vec{p}_1, \ldots,\vec{p}_n\}\right)$. The
extremal points of a polygon are its vertices, and faces are
exactly the sides of the polygon; see Fig.~\ref{figure8}.

As a state space, we consider polygons embedded in $\real^3$ lying on the $z=1$ plane. A polygon state space $\mathcal{S}_n$ with $n$ vertices is then given by the convex hull of $n$ extremal states
\begin{equation}\label{eq:poly-state}
\vec{s}_k=
\begin{pmatrix}
\sec\left(\dfrac{\pi}{n}\right) \cos\left(\dfrac{2 k \pi }{n}\right) \\
\sec\left(\dfrac{\pi}{n}\right) \sin\left(\dfrac{2 k \pi }{n}\right) \\
1
\end{pmatrix}, \quad k = 1,\ldots,n.
\end{equation}

As the polygons are two-dimensional, the effects can also be
represented as elements in $\real^3$. Hence, we can express each
$e \in \mathcal{E}(\mathcal{S}_n)$ as $\vec{e} =\left( a_x,a_y,a_z
\right)^T \in \real^3$. With this identification we have that
$e(s)=\vec{e} \cdot \vec{s}$ for all $e \in
\mathcal{E}(\mathcal{S}_n)$ and $s \in \mathcal{S}_n$, where now
$\vec{e},\vec{s} \in \real^3$ and $\cdot$ is the Euclidean dot
product in $\real^3$. We omit the vector notation from here
onwards and simply denote the states and effects in $\real^3$ by
$s$ and $e$ instead of $\vec{s}$ and $\vec{e}$. Clearly, we now have the zero
effect $o=(0,0,0)^T$ and the unit effect $u=(0,0,1)^T$.

To find the positive dual cone $\mathcal{A}_+^*=\{e \, | \, e(s)
\geq 0$ for all $s \in \mathcal{S}_n \}$, it is enough to satisfy
the requirement $e(s_k) \geq 0$ for all extremal states
\eqref{eq:poly-state}. We have
\begin{equation}
e(s_k) = a_x \sec\left(\dfrac{\pi}{n}\right) \cos{\left(\dfrac{2 k \pi }{n}\right)} + a_y \sec\left(\dfrac{\pi}{n}\right)
\sin{\left(\dfrac{2 k \pi }{n}\right)} + a_z \geq 0,
\end{equation}

\noindent $k = 1, \ldots, n$. The extremal rays of the positive
dual cone $\mathcal{A}_+^*$ correspond to the intersection of two
adjacent planes $e(s_k)=0$ and $e(s_{k-1})=0$ and have the form
\begin{equation} \label{eq:rays}
e_k^+ = \begin{pmatrix}
-a_z \cos\left(\dfrac{(2 k-1) \pi}{n}\right) \\
-a_z \sin\left(\dfrac{(2 k-1) \pi}{n}\right) \\
a_z
\end{pmatrix}, \quad k =1,\ldots,n, \quad a_z \geq 0.
\end{equation}

\noindent Similarly, inequalities $e(s_k) \leq 1$, $k=1,\ldots,n$
define the set $u-\mathcal{A}_+^*$ with extremal rays
\begin{equation}
e_k^- = \begin{pmatrix}
b_z \cos\left(\dfrac{(2 k-1) \pi}{n}\right) \\
b_z \sin\left(\dfrac{(2 k-1) \pi}{n}\right) \\
1 - b_z
\end{pmatrix}, \quad k =1,\ldots,n, \quad b_z \geq 0.
\end{equation}

If $n$ is even, then the extremal rays $e_{k+n/2}^+$ and $e_k^-$
intersect, with the resulting nontrivial extremal effects being
\begin{equation}\label{eq:poly-effect}
e_k = \dfrac{1}{2}
\begin{pmatrix}
\cos{\left(\dfrac{(2 k-1) \pi}{n}\right)} \\
 \sin{\left(\dfrac{(2 k-1) \pi}{n}\right)} \\
1
\end{pmatrix}, \quad k = 1,\ldots,n.
\end{equation}

\begin{figure}
\includegraphics[scale=0.25]{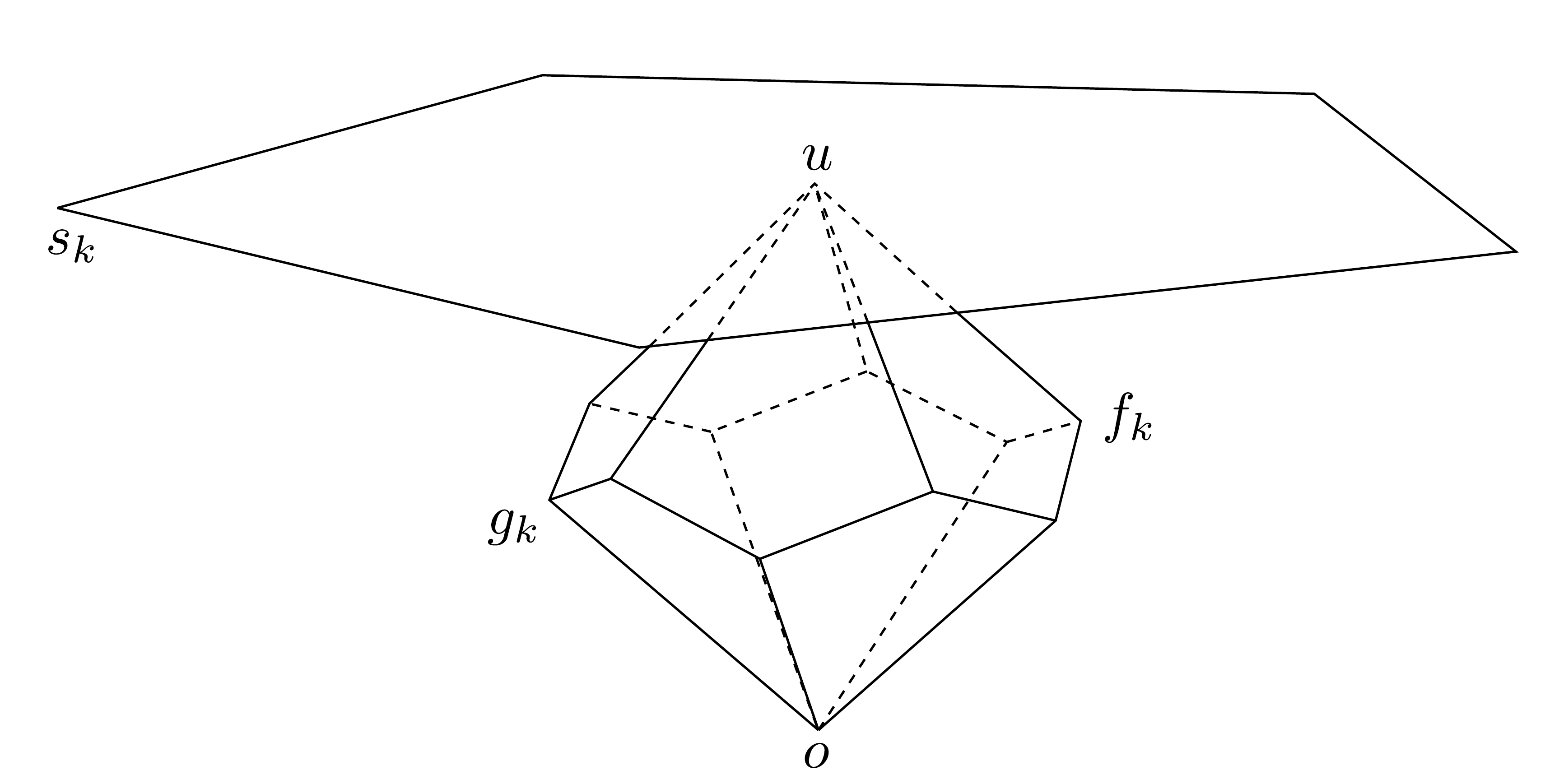} \\ \hspace*{0.4cm}
\includegraphics[scale=0.25]{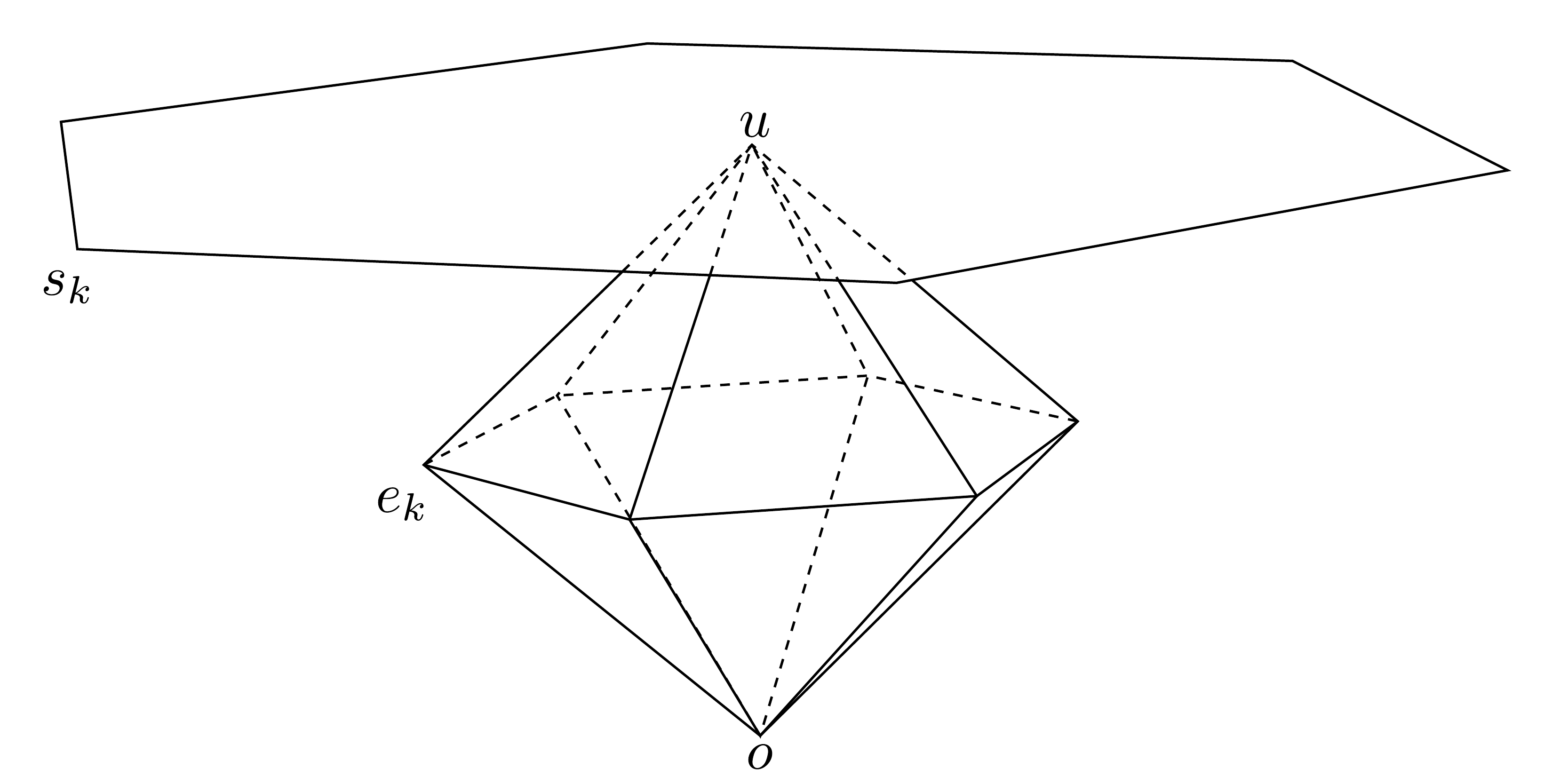}
\caption{\label{figure8} Odd and even polygon state spaces
$\mathcal{S}$ and corresponding sets of effects
$\mathcal{E}(\mathcal{S})$.}
\end{figure}

If $n$ is odd, then the rays $e_k^+$ and $e_{k'}^-$ do not
intersect. In this case, the intersection of upward and downward
cones results in two families of extremal effects. The first
family corresponds to points at which $\frac{1}{2}e^+_{k+(n-1)/2}
+ \frac{1}{2}e^+_{k+(n-1)/2-1} = e_k^-$ and reads
\begin{equation}\label{eq:poly-effect-odd-1}
f_k=\dfrac{1}{1+\sec\left(\dfrac{\pi}{n}\right)}
\begin{pmatrix}
\cos{\left(\dfrac{(2 k-1) \pi}{n}\right)} \\
 \sin{\left(\dfrac{(2 k-1) \pi}{n}\right)} \\
\sec\left(\dfrac{\pi}{n}\right)
\end{pmatrix}, \quad k =1,\ldots,n.
\end{equation}

\noindent The second family corresponds to points at which
$\frac{1}{2}e^-_{k+(n-1)/2} + \frac{1}{2}e^-_{k+(n-1)/2-1} = e_k^+$
and reads
\begin{equation}\label{eq:poly-effect-odd-2}
g_k=\dfrac{1}{1+\sec\left(\dfrac{\pi}{n}\right)}
\begin{pmatrix}
-\cos{\left(\dfrac{(2 k-1) \pi}{n}\right)} \\
-\sin{\left(\dfrac{(2 k-1) \pi}{n}\right)} \\
1
\end{pmatrix} = u - f_{k}, \quad k =1,\ldots,n.
\end{equation}

\noindent In this case of odd $n$ we note that the nontrivial
extremal effects no longer lie in a single plane; see
Fig.~\ref{figure8}.

Thus, in the case of even polygon state spaces we have
$\mathcal{E}(\mathcal{S}_n) = \mathrm{conv}\left(\{e_1,
\ldots,e_n,o,u\}\right) = \mathrm{conv}\left(\{\E_{\pm}^{(1)},
\ldots,\E_{\pm}^{(n/2)},o,u\}\right)$, where we have defined the
dichotomic observables $\E^{(i)}$ with effects $\E_+^{(i)} = e_i$
and $\E_{-}^{(i)} = u - \E_+^{(i)} = e_{i+n/2}$, $i = 1, \ldots,
\frac{n}{2}$. In the case of odd polygon state spaces, we have
$\mathcal{E}(\mathcal{S}_n) =
\mathrm{conv}\left(\{f_1,\ldots,f_n,g_1,\ldots,g_n,o,u\}\right) =
\mathrm{conv}\left(\{\F_{\pm}^{(1)},\ldots,\F_{\pm}^{(n)},o,u\}\right)$,
where we have defined the dichotomic observables $\F^{(i)}$ with
effects $\F_+^{(i)} = f_i$ and $\F_{-}^{(i)} = u - \F_+^{(i)} =
g_i$, $i = 1, \ldots, n$.

The fundamental difference between the effect spaces for even and
odd polygon state spaces is that, in the case of even $n$, to
construct $\mathcal{E}(\mathcal{S}_n)$ one needs the effects of
$\frac{n}{2}$ dichotomic observables (plus the zero and the unit
effect), whereas in the case of odd $n$, one needs the effects of $n$
dichotomic observables (plus the zero and the unit effect) to get
the whole effect space $\mathcal{E}(\mathcal{S}_n)$.

However, we find that Proposition~\ref{prop:conv-sim} has strong
consequences in polygon state spaces in both even and odd cases.
Namely, if $\A$ is a dichotomic observable on a polygon state
space $\mathcal{S}_n$ with $n$ vertices, then always
\begin{equation}
\A_+ \in
\begin{cases}
\mathrm{conv}\left(\{\E^{(1)}_\pm, \ldots,\E^{(\frac{n}{2})}_\pm ,o,u\}\right) & {\rm if \ } n {\rm \ is \ even}, \\
\mathrm{conv}\left(\{\F^{(1)}_\pm, \ldots,\F^{(n)}_\pm
,o,u\}\right) & {\rm if \ } n {\rm \ is \ odd}.
\end{cases}
\end{equation}
From Proposition~\ref{prop:conv-sim} it follows that
\begin{equation}
\A \in
\begin{cases}
\simu{\{\E^{(1)}, \ldots,\E^{(\frac{n}{2})}\}} & {\rm if \ } n {\rm \ is \ even}, \\
\simu{\{\F^{(1)}, \ldots,\F^{(n)}\}} & {\rm if \ } n {\rm \ is \
odd},
\end{cases}
\end{equation}
so that for the set $\mathcal{O}_\pm$ of all dichotomic
observables on $\mathcal{S}_n$ we have
\begin{equation}
\smin{\mathcal{O}_\pm} \leq \begin{cases}
\frac{n}{2} & {\rm if \ } n {\rm \ is \ even}, \\
n & {\rm if \ } n {\rm \ is \ odd}.
\end{cases}
\end{equation}

Next, we will characterize the extreme simulation irreducible observables in polygon state spaces.

\begin{proposition} \label{prop:poly-even}
The minimal simulation number for the set $\obs$ of all
observables on an even polygon state space $\mathcal{S}_{2m}$ equals
$\smin{\obs} = m+\frac{1}{3}m(m-1)(m-2)$.
\end{proposition}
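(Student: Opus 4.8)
The plan is to compute $\smin{\obs}$ by identifying it with the number $N$ of postprocessing-equivalence classes of simulation irreducible observables on $\mathcal{S}_{2m}$, and then enumerating those classes. For the reduction, I would first note that if $\obs\subseteq\simu{\mathcal{B}}$, then by the very definition of simulation irreducibility each simulation irreducible observable forces $\mathcal{B}$ to contain a representative of its equivalence class; this gives the lower bound $\smin{\obs}\geq N$. For the matching upper bound I would take $\mathcal{B}$ to consist of exactly one representative from each class. Proposition~\ref{prop:finite} guarantees that every observable $\A$ lies in $\simu{\mathcal{B}^\A}$ for a finite family $\mathcal{B}^\A$ of simulation irreducible observables, each postprocessing equivalent to some element of $\mathcal{B}$; hence $\mathcal{B}^\A\subseteq\simu{\mathcal{B}}$, and the closure properties (sim2),(sim3) give $\A\in\simu{\mathcal{B}}$. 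Thus $\simu{\mathcal{B}}=\obs$ and $\smin{\obs}\leq N$.

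It then remains to count $N$. By Corollary~\ref{cor:irr-indec-extreme} a simulation irreducible observable is, up to relabelling, one whose effects are linearly independent and indecomposable. On $\mathcal{S}_{2m}$ the indecomposable effects are exactly the nonnegative multiples $\lambda e_k$ of the extremal effects in \eqref{eq:poly-effect}, and since the effects live in $\real^3$ such an observable has at most three outcomes. Moreover any three distinct $e_k$ are automatically linearly independent, because their planar directions are three distinct points of the unit circle and hence never collinear. So the only genuine constraint is that the chosen multiples sum to $u$, i.e.\ that the effects form an observable; the bound $o\leq\lambda_j e_{k_j}\leq u$ is then automatic, as the remaining effects are positive.

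For the dichotomic case, the $z$-component of $\lambda_1 e_i+\lambda_2 e_j=u$ gives $\lambda_1+\lambda_2=2$, while the planar components force the two directions to be antipodal; this singles out the $m$ pairs $\{e_i,e_{i+m}\}$, namely $\E^{(1)},\ldots,\E^{(m)}$. For the trichotomic case, a triple $\{k_1,k_2,k_3\}$ admits positive weights with $\sum_j\lambda_j e_{k_j}=u$ precisely when the three planar directions positively span $\real^2$, equivalently when the origin lies strictly inside their triangle, equivalently when the three points are not contained in any closed half-circle; the weights are then determined by $\sum_j\lambda_j=2$, and distinct triples give inequivalent observables since their effect sets differ.

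The main work is this last combinatorial count among the $2m$ equally spaced directions (spacing $\pi/m$, so a half-circle spans $m$ steps). I would count the complementary bad triples, those lying in a closed half-circle, i.e.\ having some cyclic gap of at least $m$ steps. Since the three gaps sum to $2m$, at most one can be $\geq m$, so bad triples are counted without repetition; a gap of size exactly $g$ is fixed by its counterclockwise endpoint ($2m$ choices) with the third point in the remaining arc ($2m-g-1$ choices), giving $\sum_{g=m}^{2m-2}2m(2m-g-1)=m^2(m-1)$ bad triples. Subtracting from $\binom{2m}{3}=\tfrac{2}{3}m(m-1)(2m-1)$ leaves $\tfrac{1}{3}m(m-1)(m-2)$ admissible trichotomic classes, and adding the $m$ dichotomic classes yields $N=m+\tfrac{1}{3}m(m-1)(m-2)$. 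The step most prone to error is the boundary handling here: one must check that a triple containing an antipodal pair produces a gap of exactly $m$ and so is correctly classified as bad (it does not yield a genuine trichotomic observable), so that the condition "not in a closed half-circle" coincides exactly with admissibility.
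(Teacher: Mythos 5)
Your proposal is correct and follows the paper's overall strategy: reduce $\smin{\obs}$ to the number of postprocessing-equivalence classes of simulation irreducible observables (via Proposition~\ref{prop:finite} and Corollary~\ref{cor:irr-indec-extreme}), note that such observables have at most three effects, all lying on the extreme rays through the $e_k$ of Eq.~\eqref{eq:poly-effect}, and count the dichotomic (the $m$ antipodal pairs) and trichotomic classes separately; you are in fact more explicit than the paper on the reduction step, spelling out both the lower bound (definition of simulation irreducibility plus transitivity of $\leftrightarrow$) and the upper bound (Proposition~\ref{prop:finite} combined with (sim2)--(sim3)), where the paper simply asserts the identification. The genuine difference lies in the trichotomic count. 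The paper's appendix enumerates directly: it derives exclusion rules (no antipodal pair and no adjacent pair among the three indices), counts ordered triples as $2m\,(m-1)(m-2)$, and divides by $3!$. You instead characterize admissible triples geometrically --- the origin strictly inside the triangle, equivalently the three directions not contained in a closed half-circle, which is equivalent to the paper's condition that $\tfrac{1}{2}u$ lie in the convex hull of the three effects with strictly positive (hence, by linear independence, unique) weights --- and count the complement: a bad triple has exactly one cyclic gap $g\geq m$, giving $\sum_{g=m}^{2m-2}2m(2m-g-1)=m^2(m-1)$ bad triples and $\binom{2m}{3}-m^2(m-1)=\tfrac{1}{3}m(m-1)(m-2)$ admissible ones. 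The two counts agree; your complementary count is arguably cleaner and less prone to off-by-one errors than the paper's case analysis, and you correctly handle the one delicate boundary case (a triple containing an antipodal pair has a gap of exactly $m$, so it is classified as bad and indeed yields no genuine trichotomic observable), which plays exactly the role of the paper's first exclusion rule.
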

\begin{proof}
From Proposition~\ref{prop:finite} it follows that in order to find
$\smin{\obs}$ one merely needs to know the number of
inequivalent simulation irreducible observables. By
Corollary~\ref{cor:irr-indec-extreme}, it is enough to find the number
of inequivalent observables $\A$ with linearly independent
indecomposable effects. Since $\A$ is indecomposable, its effects
belong the extreme rays of the positive effects cone, i.e., they
are some positive scalar multiples of the nontrivial extremal
effects $e_k$ in \eqref{eq:poly-effect}. Furthermore, since the
effects of $\A$ are linearly independent and contained in
$\real^3$, $\A$ has at most three outcomes.

If $\A$ is dichotomic, then the only possibility is that $\A_+ =
e_k$ and $\A_- = e_{k+m}$, $k=1,\ldots,2m$. Thus, there are $2m$
choices for the effects of $\A$. Taking into account the bijective
relabellings of outcomes, i.e., the permutations of the set
$\{+,-\}$, we have $2m/2! = m$ inequivalent simulation
irreducible dichotomic observables.

If $\A$ is trichotomic with effects $\A_1$, $\A_2$, and $\A_3$,
then $\A_j= c_j e_{k_j}$ for some $k_j \in \{1,\ldots,2m\}$ and
$0<c_j\leq 1$ for all $j=1,2,3$ such that $k_1\neq k_2 \neq
k_3\neq k_1$. Denote $c\equiv\sum_{j=1}^3 c_j \neq 0$, and then from the
normalization of $\A$ it follows that
\begin{equation}\label{eq:poly-conv-hull}
\sum_{j=1}^3 \dfrac{c_j}{c} e_{k_j} = \dfrac{1}{c} u.
\end{equation}
Since $\frac{1}{2}u$ is the only scalar multiple of $u$ contained
in the plane of nontrivial extreme effects, with necessity $c=2$.
Therefore, $\frac{1}{2}u$ must be contained in the convex hull of
the extreme effects $\{e_{k_j}\}_{j=1}^3$ which limits the choices
of the indices $k_j$. Moreover, since the convex hull of the three
effects $e_{k_1}, e_{k_2}$, and $e_{k_3}$ is always a simplex, the
real numbers $c_1, c_2$, and $c_3$ are uniquely determined. By
counting the possible indices $k_j$ and reducing the bijective
relabellings, we find that the number of inequivalent simulation
irreducible trichotomic observables equals
$\frac{1}{3}m(m-1)(m-2)$. For details of the combinatorics we refer
the interested reader to the appendix.

Combining the results for dichotomic and trichotomic observables
concludes the proof.
\end{proof}

\begin{proposition} \label{prop:poly-odd}
The minimal simulation number for the set $\obs$ of all
observables on an odd polygon state space $\mathcal{S}_{2m+1}$ equals
$\smin{\obs} = \frac{1}{6} m(m+1)(2m+1)$.
\end{proposition}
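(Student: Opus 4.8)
The plan is to follow the template of the proof of Proposition~\ref{prop:poly-even}: reduce $\smin{\obs}$ to a count of inequivalent simulation irreducible observables and then carry out the combinatorics adapted to odd $n=2m+1$. The reduction is the same in both parities. If a set $\mathcal{B}$ simulates all of $\obs$, then it simulates in particular every simulation irreducible observable, so by the very definition of simulation irreducibility $\mathcal{B}$ must contain a postprocessing-equivalent copy of each of them; conversely, by Proposition~\ref{prop:finite} a single representative from each equivalence class of simulation irreducible observables suffices to simulate everything. Hence $\smin{\obs}$ equals the number of equivalence classes of simulation irreducible observables, and by Corollary~\ref{cor:irr-indec-extreme} these are exactly the observables whose minimally sufficient form consists of linearly independent indecomposable effects.

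The first substantive step is to pin down the indecomposable effects. Comparing \eqref{eq:poly-effect-odd-2} with the extremal rays \eqref{eq:rays}, I would verify that $g_k$ is a positive scalar multiple of $e_k^+$, so the $g_k$ (and their scalar multiples) are indecomposable; in contrast $f_k=u-g_k$ sits on a $2$-face of $\mathcal{A}_+^*$, since \eqref{eq:poly-effect-odd-1} exhibits it as an average of two \emph{distinct} upward extremal rays, and is therefore decomposable. Thus the indecomposable effects are precisely the scalar multiples $c\,g_k$, $0<c\le 1$, $k=1,\ldots,n$. Because all effects live in $\real^3$, a simulation irreducible observable has at most three outcomes, each effect being $c_j g_{k_j}$ with pairwise distinct indices $k_j$.

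The crucial difference from the even case is that the $g_k$ no longer lie in a common plane and, since $n$ is odd, no two of the planar directions $\theta_k=(2k-1)\pi/n$ are antipodal. I would use this to rule out dichotomic (and one-outcome) simulation irreducible observables: a relation $c_1 g_{k_1}+c_2 g_{k_2}=u$ with $c_1,c_2>0$ would force the unit vectors $d_{k_i}=(\cos\theta_{k_i},\sin\theta_{k_i})$ to be antiparallel, which is impossible, and $u$ itself lies interior to the cone and is decomposable. Hence every simulation irreducible observable is trichotomic. Writing $g_k=\alpha\,(-\cos\theta_k,-\sin\theta_k,1)$ with $\alpha=(1+\sec(\pi/n))^{-1}$, the normalization $\sum_{j=1}^3 c_j g_{k_j}=u$ is equivalent to $\sum_j c_j d_{k_j}=0$ together with $\sum_j c_j=1/\alpha$; a positive solution $(c_1,c_2,c_3)$ exists if and only if the origin lies in the interior of the triangle $\mathrm{conv}(d_{k_1},d_{k_2},d_{k_3})$, and it is then unique. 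Since three distinct points of a circle are never collinear, the corresponding effects are automatically linearly independent, so the observable is indeed simulation irreducible. This gives a bijection between equivalence classes and unordered triples of the $n$ directions whose convex hull contains the origin in its interior.

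The remaining step is the combinatorial count, which I expect to be the main piece of genuine work. Because $n$ is odd there are no antipodal directions, so no triple has the origin on the boundary of its triangle, and a triple fails the interior condition exactly when its three directions lie in an open half-plane. I would count these ``bad'' triples by their unique clockwise-most vertex: the other two vertices must then lie among the directions contained in the clockwise open semicircle, of which there are exactly $m$ (the offset $2\pi j/n<\pi$ holds precisely for $j=1,\ldots,m$). This yields $n\binom{m}{2}$ bad triples, and subtracting from the total gives
\begin{equation*}
\binom{2m+1}{3}-(2m+1)\binom{m}{2}=\frac{(2m+1)m(m+1)}{6},
\end{equation*}
the claimed value of $\smin{\obs}$. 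The main obstacle is making the semicircle count fully rigorous -- verifying that exactly $m$ directions fall in an open semicircle and that each bad triple is counted once -- after which the displayed identity is routine algebra.
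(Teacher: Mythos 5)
Your proposal is correct, and its overall architecture matches the paper's: reduce $\smin{\obs}$ to the number of equivalence classes of simulation irreducible observables, identify the indecomposable effects as the positive multiples of the $g_k$ (so that, $n$ being odd, no dichotomic simulation irreducible observables exist), and put the remaining classes in bijection with triples of polygon directions whose triangle contains the center. Where you genuinely diverge is in the final count, and your route is arguably cleaner: the paper enumerates ordered triples directly --- $2m+1$ choices for the first ray, then a case analysis giving $2\sum_{l=1}^{m} l = m(m+1)$ admissible choices for the remaining two, divided by $3!$ to remove relabellings --- whereas you count unordered triples by complementation, subtracting the $(2m+1)\binom{m}{2}$ ``bad'' triples lying in an open half-plane (each identified by its unique clockwise-most vertex, using that an open semicircle contains exactly $m$ of the other directions) from the total $\binom{2m+1}{3}$. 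Both computations give $\frac{1}{6}m(m+1)(2m+1)$; your complementary count avoids the paper's case splitting and makes the role of oddness (no antipodal pairs, hence no boundary triangles) fully explicit. You also supply details the paper leaves implicit: that $g_k$ lies on the extreme ray $e_k^+$ while $f_k = u - g_k$ is an average of two distinct extreme rays and hence decomposable, that positivity of the weights in $\sum_j c_j g_{k_j} = u$ is equivalent to the center lying in the interior of the triangle of directions (with uniqueness from affine independence), and that non-collinearity of three distinct points on a circle yields the required linear independence automatically, so that every such triple really does produce a simulation irreducible observable.
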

\begin{proof}
The proof follows from similar arguments as in the previous
proposition. However, for odd polygon state spaces there are no
indecomposable dichotomic observables because the extreme rays
\eqref{eq:rays} are aligned in such a way that no positive linear
combination of two effects in the extreme rays can sum up to $u$.
In other words, the complement of any indecomposable effect $c_j
g_{k_j}$ does not belong to an extreme ray. For this reason we
focus on trichotomic simulation irreducible observables $\A$ with
effects $\A_j = c_j g_{k_j}$, $j=1,2,3$. Since we are interested
in inequivalent observables $\A$, the effects $\A_1$, $\A_2$,
and $\A_3$ are linear independent, which guarantees the uniqueness
of the convex decomposition $\sum_{j=1}^3 c_j g_{k_j} = u$. The number
of such observables $\A$ is merely the number of ways to choose three
points $k_1$, $k_2$, and $k_3$ among $2m+1$ vertices of a regular
polygon with restriction that the center of the polygon belongs to the
triangle $\triangle(k_1,k_2,k_3)$. The number of different ways
equals $\frac{1}{6} m(m+1)(2m+1)$. For details of the combinatorics we
refer the interested reader to the Appendix.
\end{proof}

Propositions~\ref{prop:poly-even} and \ref{prop:poly-odd} show that
in any polygon state spaces with more than four vertices there always
exists trichotomic simulation irreducible observables. Since any
simulation irreducible observable can be simulated with its
minimally sufficient representative, which has been shown to have at
most three outcomes for polygon state spaces, we conclude that in
any polygon state space with $n \geq 5$ vertices the effective
number of outcomes for the whole space of observables $\obs$ is
exactly three.

\begin{corollary}
For any polygon state space $\mathcal{S}_n$ with $n \geq 5$ the
set of all observables is effectively trichotomic, i.e., $\obs =
\obs^{\rm eff}_3$.
\end{corollary}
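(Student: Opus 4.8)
The plan is to prove the nontrivial inclusion $\obs \subseteq \eff_3$, the reverse $\eff_3 \subseteq \obs$ being immediate from the definition. Since an observable belongs to $\eff_3$ exactly when it is $\obs_3$-simulable, I would reduce the claim to showing that every observable on $\mathcal{S}_n$ can be simulated by observables having at most three outcomes.

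First I would apply Proposition~\ref{prop:finite} to obtain, for an arbitrary observable $\A$, a finite family $\mathcal{B}^{\A}$ of simulation irreducible observables with $\A \in \simu{\mathcal{B}^{\A}}$. The remaining task is to control the number of outcomes of the simulators. By Corollary~\ref{cor:irr-indec-extreme}, each $\B \in \mathcal{B}^{\A}$ is postprocessing equivalent to its minimally sufficient representative $\hat{\B}$, whose effects are linearly independent; moreover the effective number of outcomes of $\B$ coincides with the number of outcomes of $\hat{\B}$.

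The decisive point is dimensional. Because $\dim(\mathrm{aff}(\mathcal{S}_n)) = 2$, the ambient dual space $\mathcal{A}^*$ in which the effects live has dimension $3$, so any linearly independent set of effects contains at most three members. Hence each $\hat{\B}$ has at most three outcomes, i.e., $\hat{\B} \in \obs_3$. Since $\hat{\B} \leftrightarrow \B$ leaves the simulation set unchanged, $\A$ is simulated by a finite subfamily of $\obs_3$, which gives $\A \in \simu{\obs_3} = \eff_3$. As $\A$ was arbitrary, this yields $\obs = \eff_3$.

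I expect no serious obstacle, since the essential content was already isolated in the proofs of Propositions~\ref{prop:poly-even} and~\ref{prop:poly-odd}, which classify the simulation irreducible observables on $\mathcal{S}_n$ as dichotomic or trichotomic. The only subtlety worth flagging is the role of the hypothesis $n \geq 5$: the upper bound $\obs \subseteq \eff_3$ in fact holds for every polygon, but for $n \geq 5$ the positivity of the trichotomic counts $\tfrac{1}{3}m(m-1)(m-2)$ and $\tfrac{1}{6}m(m+1)(2m+1)$ guarantees a genuinely trichotomic simulation irreducible observable, so that the effective number three is attained and cannot be lowered---unlike the square bit state space, where it drops to two.
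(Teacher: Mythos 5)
Your proposal is correct and takes essentially the same route as the paper: Proposition~\ref{prop:finite} reduces the problem to simulation irreducible observables, whose minimally sufficient representatives consist of linearly independent effects in the three-dimensional effect space and hence have at most three outcomes, exactly the argument underlying Propositions~\ref{prop:poly-even} and~\ref{prop:poly-odd}. Your remark on the role of $n \geq 5$ also matches the paper's reasoning, namely that the positive trichotomic counts guarantee the effective number three is attained, while the upper bound $\obs \subseteq \eff_3$ holds for every polygon.
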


Finally, the following example illustrates the effect of noise on
simulability of observables.

\begin{example}
Consider a hexagon state space $\state_6$ and a trichotomic
simulation irreducible observable $\A$ with effects $\A_1 =
\frac{2}{3} e_1$, $\A_2 = \frac{2}{3} e_3$, $\A_3 = \frac{2}{3}
e_5$, where the effects $e_k$ are given by
formula~\eqref{eq:poly-effect}.  
Obviously, $\A$ is effectively
trichotomic as it is simulation irreducible. Let us show that the
noisy observable $\A'$ with effects $\A'_{k} = (1-\lambda) \A_k +
\lambda \frac{1}{3}u$ becomes effectively dichotomic if
$\frac{1}{4} \leq \lambda < 1$. In fact, if $\lambda = \frac{1}{4}$, then $\A'_1 =
\frac{1}{3} (e_1 + \frac{1}{2}e_6 + \frac{1}{2} e_2)$, $\A'_2 =
\frac{1}{3} (e_3 + \frac{1}{2}e_2 + \frac{1}{2} e_4)$, $\A'_3 =
\frac{1}{3} (e_5 + \frac{1}{2}e_4 + \frac{1}{2} e_6)$. If this is
the case, then $\A'_k = \frac{1}{3} \sum_{i=1}^{3} \sum_{x = \pm}
\nu_{(i,x)k} \B_x^{(i)}$, where $\B^{(i)}$ is a dichotomic observable
with effects $\B^{(i)}_+ =  e_{2i-1}$ and $\B^{(i)}_- =
 e_{2i + 2}$, $i=1,2,3$ (addition in indices is modulo
6), $\nu_{(i,x)k}$ is the right stochastic matrix with elements
$\nu_{(i,+)k} = 1$ and $\nu_{(i,-)k} = 0$ if $i=k$, $\nu_{(i,+)k} = 0$
and $\nu_{(i,-)k} = \frac{1}{2}$ if $i \neq k$. Clearly,
for larger noise the observable $\A'$ remains effectively
dichotomic unless $\lambda = 1$, when the observable $\A'$
becomes trivial.
\end{example}

The above example illustrates that sufficiently noisy observables
can be simulated by dichotomic observables.

\section*{Conclusions}

Within the framework of generalized probabilistic theories, we
have considered the fundamental properties of the set of
observables $\simu{\mathcal{B}}$ that can be obtained from another
set of observables $\mathcal{B}$ via mixing and postprocessing.
Mathematically, the simulation map $\simu{\cdot}$ is an algebraic closure operator on the set of observables.
We introduced the concept of a simulation irreducible observable, which turned out to be useful in the analysis of simulability. 
In particular, we have shown that any observable can be
simulated by a finite number of simulation irreducible ones. 

The benefit of a simulation scheme is that a wide class of observables can
be realized (experimentally) via a small number of
simulators. 
We have discussed the minimal simulation number
$\smin{\mathcal{B}}$ as an indicator of the incompatibility of a subset $\mathcal{B}$ of observables, and we pointed out its connection (in the case of
quantum theory) to $k$-compatibility of observables. 
Another way to benefit from a simulation scheme is that one  can simulate observables with a larger number of outcomes as compared with the number of outcomes for simulators. 
This means that a class of observables with many outcomes can be achieved by using, e.g., dichotomic simulators, in which case we can regard those observables as effectively dichotomic.

We found that the effects of an effectively dichotomic observable have a simple geometric characterization in terms of the effects of the dichotomic simulator observables. This then serves as a useful necessary condition for dichotomic simulability when the set of available dichotomic measurement devices is fixed. We also showed that the condition becomes sufficient when we pose some additional restrictions on the simulator observables.

Finally, we have considered particular examples of nonquantum
state spaces. 
The classical state spaces are the state spaces where there exists, up to equivalence, only one simulation irreducible observable. 
In general, the number of inequivalent simulation irreducible observables is a characteristic feature of a state space.
We have considered even and odd
polygon state spaces $\mathcal{S}_n$ in detail. In contrast to
quantum theory, where there exists a continuum of inequivalent simulation
irreducible observables, in any polygon state space the minimal
simulation number for the set of all observables is finite. 
Also, we have shown
that the set of all observables is effectively dichotomic for $n=4$
and effectively trichotomic for $n \geq 5$. 
By a specific example
we have illustrated how an effectively trichotomic
observable becomes effectively dichotomic under the addition of noise.

\section*{Acknowledgements}

The authors wish to thank Martin Pl\'avala for useful discussions and Tom Bullock for useful comments on the manuscript. This work was performed as part of the Academy of Finland Centre
of Excellence program (Project No. 312058). S.N.F. acknowledges the
support of Academy of Finland for a mobility grant to conduct
research in the University of Turku. S.N.F. thanks the Russian
Foundation for Basic Research for partial support under Project
No. 16-37-60070 mol-a-dk. L.L. 
acknowledges financial support from University of Turku Graduate School.

\section{Appendix}

\subsection*{Proof of Proposition~\ref{prop:nonzero-extreme}}

\begin{proof}
Suppose the nonzero effects $\A_1,\ldots,\A_n$ of an observable
$\A$ are linearly dependent, i.e.,
\begin{equation}\label{linear-dependence}
\sum_{i=1}^{n} r_i \A_i = 0
\end{equation}
for real $r_i$ such that $\sum_{i} |r_i| > 0$. This implies that
\begin{equation}
\sum_{i: \ r_i \geq 0} r_i \A_i = \sum_{i: \ r_i < 0} |r_i| \A_i.
\end{equation}

\noindent Denote $\lambda = \frac{1}{2 \max_i |r_i|} > 0$ and
consider two observables $\B$ and $\C$ defined as follows:
\begin{eqnarray}
\B_i = \left\{ \begin{array}{ll}
  (1 - \lambda r_i) \A_i & \text{if~} r_i \geq 0, \\
  (1 + \lambda |r_i|) \A_i & \text{if~} r_i < 0, \\
\end{array}\right. \\
\C_i = \left\{ \begin{array}{ll}
  (1 + \lambda r_i) \A_i & \text{if~} r_i \geq 0, \\
  (1 - \lambda |r_i|) \A_i & \text{if~} r_i < 0. \\
\end{array}\right.
\end{eqnarray}
It is straightforward to see that $\B$ and $\C$ are indeed observables. Now it follows that
\begin{equation}
\A= \dfrac{1}{2} \B + \dfrac{1}{2}\C.
\end{equation}
Therefore, $\A$ is not extreme.
\end{proof}

\subsection*{Proof of property (sim6)}

Take $\A, \A' \in \simu{\mathcal{B}}$ so that there exists two
finite sets of observables $\{\B^{(i)}\}_{i=1}^m,
\{\B'^{(j)}\}_{j=1}^{m'} \subseteq \mathcal{B}$ with outcome sets
$X$ for $\B^{(i)}$'s and $X'$ for $\B'^{(j)}$'s, probability
distributions $\{p_i\}_{i=1}^m, \{p'_j\}_{j=1}^{m'} \subset [0,1]$
and postprocessings $\nu: \{1, \ldots,m\} \times X \to Y$ and
$\nu': \{1, \ldots,m'\} \times X' \to Y'$ for some outcome sets
$Y$ and $Y'$ such that
\begin{equation}
\A_y = \sum_{(i,x)} \nu_{(i,x)y} p_i \B^{(i)}_x, \quad \A'_{y'} = \sum_{(j,x')} \nu'_{(j,x')y'} p'_j \B'^{(j)}_{x'}
\end{equation}
for all $y \in Y$ and $y' \in Y'$.

For any $0 \leq \lambda \leq 1$ we may form a mixture of $\A$ and $\A'$ with outcome set $Y_{\rm mix} \equiv Y \cup Y'$ so that
\begin{eqnarray}
&& \lambda \A_y + (1-\lambda) \A'_y \nonumber\\
&& = \sum_{(i,x)} \nu_{(i,x)y} \lambda p_i \B^{(i)}_x +
\sum_{(j,x')} \nu'_{(j,x')y} (1-\lambda) p'_j \B'^{(j)}_{x'},
\end{eqnarray}
where we have also extended both postprocessings on $Y_{\rm mix}$ by setting $\nu_{(i,x)y}= 0$ if $y \notin Y$ and $\nu'_{(i,x)y}= 0$ if $y \notin Y'$.

We see now that we can use the observables $\{\B^{(1)}, \ldots, \B^{(m)}, \B'^{(1)}, \ldots, \B'^{(m')} \} \subseteq \mathcal{B}$ to simulate the mixture $\lambda \A + (1- \lambda) \A'$. Namely, if we denote $\B^{(m+i)} = \B'^{(i)}$ for all $i = 1, \ldots,m'$ and consider the probability distribution $\{\tilde{p}_i\}_{i=1}^{m+m'} \equiv \{\lambda p_1, \ldots, \lambda p_m,(1-\lambda) p'_1, \ldots, (1-\lambda)p'_{m'}\} \subset [0,1]$, we may define the mixture observable $\tilde{B}$ with outcome set $\{1, \ldots,m+m'\} \times X_{\rm mix}$, where $X_{\rm mix} \equiv X \cup X'$, by
\begin{equation}
\tilde{\B}_{(i,x)} = \tilde{p}_i \B^{(i)}_x
\end{equation}
for all $i=1, \ldots, m+m'$ that keeps track of the measured observable. Similarly we can define a postprocessing $\mu: \{1, \ldots,m+m'\} \times X_{\rm mix} \to Y_{\rm mix}$ by
\begin{equation}
\mu_{(i,x)y} = \chi_{\{1, \ldots,m\}}(i) \nu_{(i,x)y} + \chi_{\{m+1, \ldots,m+m'\}}(i) \nu'_{(i-m,x)y},
\end{equation}
where $\chi_S$ is the characteristic function of a set $S \subset \integer$ so that $\chi_S(x) = 1 $ if $x \in S$ and $\chi_S(x)=0$ otherwise. Now
\begin{align*}
(\mu \circ \tilde{\B})_y &= \sum_{(i,x)} \mu_{(i,x)y} \tilde{\B}_{(i,x)}  \\
&= \sum_{i=1}^m \sum_x \nu_{(i,x)y} \tilde{\B}_{(i,x)}  + \sum_{i=m+1}^{m+m'} \sum_x \nu'_{(i-m,x)y} \tilde{\B}_{(i,x)} \\
&= \sum_{i=1}^m \sum_{x} \nu_{(i,x)y} \lambda p_i \B^{(i)}_x + \sum_{j=1}^{m'} \sum_{x} \nu'_{(j,x)y} (1-\lambda) p'_j \B'^{(j)}_{x} \\
&= \lambda \A_y + (1-\lambda) \A'_y
\end{align*}
for all $y \in Y_{\rm mix}$ so that $ \lambda \A + (1- \lambda) \A' \in \simu{\mathcal{B}}$ which shows that $\simu{\mathcal{B}}$ is convex.

\subsection*{Proof of property (sim7)}

Take $\A \in \simu{\mathcal{B}}$ with an outcome set $Y$ so that
\begin{equation}
\A_y = \sum_{(i,x)} \nu_{(i,x)y} p_i \B^{(i)}_x
\end{equation}
for all $y \in Y$, some finite set of observables $\{\B^{(i)}\}_i
\subseteq \mathcal{B}$ with outcome sets $X$, some probability
distribution $\{p_i\}_i \subset [0,1]$, and some postprocessing
$\nu: \cup_k \{k \} \times X \to Y$. If now $\mu: Y \to Z$ is a
postprocessing from $Y$ to some outcome set $Z$, then
\begin{align*}
(\mu \circ \A)_z &= \sum_y \mu_{yz} \A_y \\
&= \sum_y \mu_{yz} \left( \sum_{(i,x)} \nu_{(i,x)y} p_i \B^{(i)}_x \right) \\
&= \sum_{(i,x)} \left( \sum_y \nu_{(i,x)y} \mu_{yz} \right) p_i \B^{(i)}_x \\
&= \sum_{(i,x)} \eta_{(i,x)z} p_i \B^{(i)}_x,
\end{align*}
where we have defined the postprocessing $\eta: \cup_k \{k \}
\times X \to Z$ by $\eta_{(i,x)z} = \sum_y \nu_{(i,x)y} \mu_{yz}$
for all $i$, $x \in X$, and $z \in Z$. Thus, $\mu \circ \A \in
\simu{\mathcal{B}}$.

\subsection*{Combinatorics in proof of Proposition~\ref{prop:poly-even}}

When choosing effects $\A_{1}=c_1 e_{k_1}$, $\A_{2}=c_2 e_{k_2}$, $\A_{3}=c_3 e_{k_3}$, we
cannot have $k_l = k_j +m$ for any $l\neq j$, since then from the
decomposition $u = e_{k_j} + e_{k_j +m}$ it would follow that the
remaining effect $\A_i = c_i e_{k_i}$, $i \neq j \neq l \neq i$,
is decomposable. Secondly, we cannot have $k_l = k_j \pm1$ for any
$l \neq j$ since this would force the remaining index $k_i$, $i
\neq l \neq j \neq i$, to be either $k_i = k_j+m$ or $k_i = k_j\pm
1+m$ in order for \eqref{eq:poly-conv-hull} to hold, which in turn
would lead to a violation of the previous case. Thus, by
considering possible cases for the indices $k_j$, $j=1,2,3$, such
that \eqref{eq:poly-conv-hull} holds, we see that the problem
reduces to a simple problem of combinatorics:

\begin{itemize}
\item[i)] We can choose the effect $\A_1$ to be proportional to
any nontrivial extreme effect $e_i$, where $i \in \{1,
\ldots,2m\}$ so that $\A_1$ has $2m$ possibilities.

\item[ii)] For $\A_2$ there are $2m-4$ possibilities since $\A_2$
cannot be proportional to $e_{i-1},e_i, e_{i+1}$, or $e_{i+m}$.
Thus, we have that $\A_2$ is proportional to $e_j$, where either
$j \in \{i+2, \ldots,i+m-1\}$ or $j \in \{i+m+1, \ldots, i+2m-2\}$
so that $j$ has $m-2$ possibilities in both of these cases.

\item[iii)] If $j \in \{i+2, \ldots,i+m-1\}$, the only possibility
for $\A_3$ is to be proportional to an effect $e_k$ which is
limited to be in some of the extreme rays between the complements
of $e_i$ and $e_j$ since otherwise the convex hull of
$\{e_i,e_j,e_k\}$ would not contain $u/2$. Thus, $k \in \{i+m+1,
\ldots, j+m-1\}$ and since $j=i+l$ for some $l\in
\{2,\ldots,m-1\}$ we have that $k$ has a total of $l-1$
possibilities. By the same argument, in the case when $j \in
\{i+m+1, \ldots,i+2m-2\}$, we still have $l-1$ different
possibilities, where again each $l$ represents different $j$ from
ii).
\end{itemize}

Now we can calculate the total number of different cases. As shown
above, for $\A_1$ we have $2m$ possibilities and then for $\A_2$
and $\A_3$, there are
\begin{equation}
2 \sum_{l=2}^{m-1} (l-1) = 2 \sum_{l'=1}^{m-2} l' = 2 \,
\dfrac{(m-2)(m-1)}{2} = (m-1)(m-2)
\end{equation}
different possibilities, where the multiplier $2$ came from to two
different sets of values for $j$ in ii). In order to not to
include any bijective relabellings of the effects of $\A$ we have
to take into account the different permutations of the set
$\{1,2,3\}$. Hence, the total number of inequivalent simulation irreducible
trichotomic observable equals
\begin{equation}
\dfrac{2m(m-1)(m-2)}{3!} = \dfrac{m(m-1)(m-2)}{3}.
\end{equation}

\subsection*{Combinatorics in proof of Proposition~\ref{prop:poly-odd}}

Effects $\A_{1}=c_1 g_{k_1}$, $\A_{2}=c_2 g_{k_2}$, $\A_{3}=c_3 g_{k_3}$ can be chosen as
follows:

\begin{itemize}
\item[i)] $\A_1$ is proportional to one of the nontrivial extreme
effects $f_i$, where $i \in \{1, \ldots,2m+1\}$ so that for $\A_1$
we have $2m+1$ possibilities.

\item[ii)] For $\A_2$ there are $2m$ possibilities since $\A_2$
cannot be proportional to $g_{i}$. Thus, we have that $\A_2$ is
proportional to $g_j$, where either $j \in \{i+1, \ldots,i+m\}$ or
$j \in \{i+m+1, \ldots, i+2m\}$ so that $j$ has $m$ possibilities
in both of these cases.

\item[iii)] If $j \in \{i+1, \ldots,i+m\}$, the only possibility
for $\A_3$ is to be proportional to an effect $g_k$ with $k \in
\{i+m+1, \ldots, j+m\}$ and since $j=i+l$ for some $l\in
\{1,\ldots,m\}$ we have that $k$ has a total of $l$ possibilities.
By the same argument, in the case when $j \in \{i+m+1,
\ldots,i+2m\}$, we still have $l$ different possibilities, where
again each $l$ represents different $j$ from ii).
\end{itemize}

From this we can calculate the total number of different cases. As
shown above, for $\A_1$ we have $2m+1$ possibilities and then for
$\A_2$ and $\A_3$, there are
\begin{equation}
2 \sum_{l=1}^{m} l = 2 \, \dfrac{m(m+1)}{2} = m(m+1)
\end{equation}
different possibilities, where the multiplier $2$ came from to two
different sets of values for $j$ in ii). In order to not to
include any bijective relabellings of the effects of $\A$ we have
to take into account the different permutations of the set
$\{1,2,3\}$. Hence, the total number of inequivalent simulation irreducible
trichotomic observable equals
\begin{equation}
\dfrac{(2m+1)m(m+1)}{3!}.
\end{equation}



\begin{thebibliography}{10}

\bibitem{GuBaCuAc17}
L.~Guerini, J.~Bavaresco, M.~T. Cunha, and A.~Ac\'{\i}n.
\newblock Operational framework for quantum measurement simulability.
\newblock {\em J. Math. Phys.}, 58:092102, 2017.

\bibitem{OsGuWiAc17}
M.~Oszmaniec, L.~Guerini, P.~Wittek, and A.~Ac\'{\i}n.
\newblock Simulating positive-operator-valued measures with projective
  measurements.
\newblock {\em Phys. Rev. Lett.}, 119:190501, 2017.

\bibitem{HiQuVeNaBr17}
F.~Hirsch, M.T.~Quintino, T.~V\'ertesi, M.~Navascu\'es, and N.~Brunner.
\newblock Better local hidden variable models for two-qubit Werner states and an upper bound on the Grothendieck constant $K_G(3)$.
\newblock {\em Quantum}, 1, 3 (2017).

\bibitem{KlVeCa17}
M.~Kleinmann, T.~V\'ertesi, and A.~Cabello.
\newblock Proposed experiment to test fundamentally binary theories.
\newblock {\em Phys. Rev. A}, 96:032104, 2017.

\bibitem{KlCa16}
M.~Kleinmann and A.~Cabello.
\newblock Quantum Correlations Are Stronger Than All Nonsignaling Correlations
 Produced by n-Outcome Measurements.
\newblock {\em Phys. Rev. Lett.}, 117:150401, 2016.

\bibitem{BuHeScSt13}
P.~Busch, T.~Heinosaari, J.~Schultz, and N.~Stevens.
\newblock Comparing the degrees of incompatibility inherent in probabilistic
  physical theories.
\newblock {\em EPL}, 103:10002, 2013.

\bibitem{StBu14}
N.~Stevens and P.~Busch.
\newblock Steering, incompatibility, and {B}ell-inequality violations in a class of probabilistic theories.
\newblock {\em Phys. Rev. A}, 89:022123, 2014.

\bibitem{Banik15}
M.~Banik.
\newblock Measurement incompatibility and {S}chr\"odinger-{E}instein-{P}odolsky-{R}osen steering in a class of probabilistic theories.
\newblock {\em J. Math. Phys.}, 56:052101, 2015.

\bibitem{Plavala16}
M.~Pl\'avala.
\newblock All measurements in a probabilistic theory are compatible if and only
  if the state space is a simplex.
\newblock {\em Phys. Rev. A}, 94:042108, 2016.

\bibitem{FiHeLe17}
S.N. Filippov, T.~Heinosaari, and L.~Lepp\"aj\"arvi.
\newblock Necessary condition for incompatibility of observables in general
  probabilistic theories.
\newblock {\em Phys. Rev. A}, 95:032127, 2017.

\bibitem{JePl17}
A.~Jen\v{c}ov\'a and M.~Pl\'avala.
\newblock Conditions on the existence of maximally incompatible two-outcome measurements in general probabilistic theory.
\newblock {\em Phys. Rev. A}, 96:022113, 2017.

\bibitem{JaGoBaBr11}
P.~Janotta, C.~Gogolin, J.~Barrett, and N.~Brunner.
\newblock Limits on nonlocal correlations from the structure of the local state
  space.
\newblock {\em New J. Phys.}, 13:063024, 2011.

\bibitem{PoRo94}
S.~Popescu and D.~Rohrlich.
\newblock Quantum nonlocality as an axiom.
\newblock {\em Found. Phys.}, 24:379--385, 1994.

\bibitem{GrMuCoDa10}
D.~Gross, M.~M\"uller, R.~Colbeck, and O.C.O. Dahlsten.
\newblock All reversible dynamics in maximally nonlocal theories are trivial.
\newblock {\em Phys. Rev. Lett.}, 104:080402, 2010.



\bibitem{HeScToZi14}
T.~Heinosaari, J.~Schultz, A.~Toigo, and M.~Ziman.
\newblock Maximally incompatible quantum observables.
\newblock {\em Phys. Lett. A}, 378:1695--1699, 2014.

\bibitem{CA70}
R.T. Rockafellar.
\newblock {\em Convex {A}nalysis}.
\newblock Princeton University Press, 1970.

\bibitem{KiNuIm10}
G.~Kimura, K.~Nuida, and H.~Imai.
\newblock Distinguishability measures and entropies for general probabilistic
  theories.
\newblock {\em Rep. Math. Phys.}, 66:175--206, 2010.

\bibitem{MaMu90a}
H.~Martens and W.M. {de Muynck}.
\newblock Nonideal quantum measurements.
\newblock {\em Found. Phys.}, 20:255--281, 1990.

\bibitem{BuDaKePeWe05}
F.~Buscemi, G.M. D'Ariano, M.~Keyl, P.~Perinotti, and R.F. Werner.
\newblock Clean positive operator valued measures.
\newblock {\em J. Math. Phys.}, 46:082109, 2005.

\bibitem{Kuramochi15b}
Y.~Kuramochi.
\newblock Minimal sufficient positive-operator valued measure on a separable
  {H}ilbert space.
\newblock {\em J. Math. Phys.}, 56:102205, 2015.

\bibitem{HaHePe12}
E.~Haapasalo, T.~Heinosaari, and J.-P.~Pellonp\"a\"a.
\newblock Quantum measurements on finite dimensional systems: relabeling and
  mixing.
\newblock {\em Quantum Inf. Process.}, 11:1751--1763, 2012.

\bibitem{Parthasarathy99}
K.R. Parthasarathy.
\newblock Extremal decision rules in quantum hypothesis testing.
\newblock {\em Infin. Dimens. Anal. Quantum Probab. Relat. Top.}, 2:557--568,
  1999.

\bibitem{KuHeFr14}
R.~Kunjwal, C.~Heunen, and T.~Fritz.
\newblock Quantum realization of arbitrary joint measurability structures.
\newblock {\em Phys. Rev. A}, 89:052126, 2014.

\bibitem{HeReSt08}
T.~Heinosaari, D.~Reitzner, and P.~Stano.
\newblock Notes on joint measurability of quantum observables.
\newblock {\em Found. Phys.}, 38:1133--1147, 2008.

\bibitem{CaHeReScTo16}
C.~Carmeli, T.~Heinosaari, D.~Reitzner, J.~Schultz, and A.~Toigo.
\newblock Quantum incompatibility in collective measurements.
\newblock {\em Mathematics}, 4:54, 2016.

\bibitem{Busch91}
P.~Busch.
\newblock Informationally complete sets of physical quantities.
\newblock {\em Int. J. Theor. Phys.}, 30:1217--1227, 1991.

\bibitem{SiSt92}
M.~Singer and W.~Stulpe.
\newblock Phase-space representations of general statistical physical theories.
\newblock {\em J. Math. Phys.}, 33:131--142, 1992.

\end{thebibliography}
\end{document}